\documentclass[a4paper,twocolumn,11pt,unpublished]{quantumarticle}
\pdfoutput=1

\usepackage[utf8]{inputenc}
\usepackage[english]{babel}
\usepackage[T1]{fontenc}
\usepackage{amsmath}
\usepackage{subcaption}
\usepackage{float}
\usepackage{enumitem}
\usepackage{quantikz}
\usepackage{tikz}
\definecolor{primalorange}{RGB}{230,120,20}
\definecolor{dualblue}{RGB}{35,75,230}
\definecolor{intersectionpurple}{RGB}{255,0,92}

\definecolor{pgreen}{RGB}{67, 143, 100}
\definecolor{porange}{RGB}{199, 103, 42}
\usepackage[
colorlinks,
citecolor=pgreen,
linkcolor=porange,
urlcolor=pgreen
]{hyperref}

\usepackage{lipsum}
\usepackage{amsmath,amssymb,mathtools,amsthm}

\allowdisplaybreaks
\usepackage{dsfont}

\newcommand{\Id}{\mathds{1}}

\DeclareMathOperator{\Span}{span}
\newtheorem{definition}{Definition}
\newtheorem{lemma}{Lemma}
\newtheorem{proposition}{Proposition}
\newtheorem{theorem}{Theorem}
\newtheorem{remark}{Remark}

\newtheorem{result}{Result}
\newcommand{\sign}[0]{\text{sgn}}
\newcommand{\supp}[1]{\text{supp}\left(#1\right)}

 \usepackage[dvipsnames]{xcolor}

\newcommand{\carlos}[1]{}  
\newcommand{\gustavo}[1]{}
\newcommand{\denis}[1]{}
\newcommand{\marco}[1]{}
\newcommand{\amanda}[1]{}
\begin{document}

\title{Self-testing the toric code against classical communication}

\author{Gustavo Fróes}
\affiliation{CPHT, LIX, CNRS, Inria, École polytechnique, Institut Polytechnique de Paris, 91120 Palaiseau, France}
\affiliation{Télécom Paris, Institut Polytechnique de Paris, 19 Place Marguerite Perey, 91120 Palaiseau, France}
\affiliation{Instituto de Física Gleb Wataghin, Universidade Estadual de Campinas, CEP 13083-859 Campinas, Brazil}
\author{Amanda Wei}
\affiliation{CPHT, LIX, CNRS, Inria, École polytechnique, Institut Polytechnique de Paris, 91120 Palaiseau, France}
\affiliation{Nokia Bell Labs, 12 rue Jean Bart, 91300 Massy, France}
\author{Carlos de Gois}
\affiliation{CPHT, LIX, CNRS, Inria, École polytechnique, Institut Polytechnique de Paris, 91120 Palaiseau, France}
\author{Denis Rochette}
\affiliation{CPHT, LIX, CNRS, Inria, École polytechnique, Institut Polytechnique de Paris, 91120 Palaiseau, France}
\author{Marc-Olivier Renou}
\affiliation{CPHT, LIX, CNRS, Inria, École polytechnique, Institut Polytechnique de Paris, 91120 Palaiseau, France}

\maketitle
\begin{abstract}

\marco{ALL CAN BE COMMENTED IN/OUT IN THE COMMAND BEFORE THE begin\{document\} LINE}

Device-independent self-testing certifies a multipartite quantum state from the correlations it produces alone, without assumptions on the inner workings of the devices. 
Most existing self-tests, however, break down as soon as the parties can exchange messages. 
We consider a relaxed setting in a quantum network, where in the honest version of the protocol, the parties never communicate, yet certification remains sound even in an untrusted case where the untrusted devices secretly exchange classical messages within a bounded distance in the network. 
We show that the code space of the toric code, distributed over a network with the geometry of a torus, can be self-tested in this setting in a noise-robust way, against communication up to distances proportional to the network diameter. 
We also develop an algorithmic method for constructing such protocols and extend our results to the surface code and cluster states, robust to one round of nearest-neighbor classical communication
\end{abstract}

Quantum theory allows a state to be certified from observed measurement statistics alone, requiring neither a model of the experimental apparatus nor any trust in it.
This is known as self-testing~\cite{Mayers2004}. It is the strongest form of quantum certification~\cite{Supic2020-ox,Eisert2020QuantumCertification} and a key tool in device-independent verification of quantum computation~\cite{Gheorghiu2018Verification} and in quantum complexity theory~\cite{McKague2016,Ji2021MIP}.

Self-testing protocols exist for all pure bipartite entangled states~\cite{Coladangelo2017AllPure} and for all pure multipartite entangled states of qubits~\cite{BalanzoJuando2026AllPureMultipartite}.
In quantum networks involving independent sources, it extends to all pure~\cite{Supic2023QuantumNetworks} and even mixed~\cite{Sarkar2026UniversalScheme} states.
Self-testing can also certify entire subspaces, beyond individual states.
This is relevant for quantum error correction, where information is encoded in a \emph{code space} rather than in a single state.
Subspace self-tests are known for the five-qubit code~\cite{Baccari2020}, the toric code~\cite{Baccari2020,Hart2025,Vallee2025-ty} and for certain genuinely entangled stabilizer subspaces~\cite{Makuta2021}. 

Crucially, self-testing relies on the assumption that the subsystems cannot communicate during the protocol.
Enforcing this in practice is demanding even for two parties~\cite{Storz2025CompleteSelfTesting}. 
The difficulty grows in the multipartite setting, where each party must be individually addressed and read out, with all measurement events pairwise spacelike separated.
To our knowledge, no multipartite experiment simultaneously closing the detection and locality loopholes has been reported~\cite{Zhang2018Experimentally,Wu2021Robust,Xu2022Experimental,Wu2022ClosingLocality}.
Similar issues also arise in non-adversarial settings.
In device-independent property certification~\cite{Barreiro2013Demonstration,Wang2025Probing,Cai2026GeneralizedMermin}, for example, the absence of communication cannot be assumed from the start, since crosstalk can cause the measurement basis chosen on one party to affect the statistics of others~\cite{Sarovar2020DetectingCrosstalk,Tabia2025Almost}.

Rather than enforcing the no-communication assumption in practice, one can relax it in the theoretical model.
A natural way to do so is to suppose each untrusted devices can secretly exchange classical messages with all subsystems within some fixed distance of it.
It has been shown that the correlations of some multipartite states cannot be reproduced classically even when the communication distance grows sublinearly with the number of qubits~\cite{Barrett2007,Meyer2023}.
This underlies quantum advantages for shallow circuits~\cite{Bravyi2018}, distributed computing~\cite{legall2019b}, and randomness certification~\cite{Coudron2021}.
First self-test results in a model allowing classical communication were obtained only recently, for the cycle and honeycomb graph states, and for arbitrary graph states only indirectly via auxiliary qubits~\cite{Meyer2026}.
No such protocol is known for other states, and in particular none that certifies a subspace.

Here we provide communication robust protocols to self-test the toric code subspace, and an algorithmic method to construct similar protocols for large families of stabilizer codes, assuming classical communication between the parties.
Our construction for the toric code is based on a new self-test for the non-communicating scenario which requires only single-qubit Pauli measurements (Section~\ref{sec:without-comm}).
In contrast to the previous protocols \cite{Baccari2020,Hart2025}, which do not survive communication (Appendix~\ref{sec:communicating-classical-strategies}), ours can be made robust to classical communication up to a distance $r$ on any lattice of size $L \geq 6r+3$ (Section~\ref{sec:comm}), which is asymptotically optimal.
The protocols are robust to noise (Section~\ref{sec:noise}), with deviations of at most $\epsilon$ from the ideal statistics certifying a distance $\mathcal{O}(L^2\sqrt{\epsilon})$ from the toric code subspace.
We further develop a systematic method to search for self-tests of stabilizer subspaces whose generators are products of Pauli $X$ and $Z$ operators (Section~\ref{sec:MILP}).
This can be written as an integer linear program, and as examples we find additional self-tests for the toric code which are smaller than our analytical construction (Section~\ref{sec:toric-mip}) and self-tests robust to $1$ round of classical communication for the surface code and the two-dimensional cluster state (Secs.~\ref{sec:surface-mip} and \ref{sec:graph-states-mip}).

\section{Preliminaries \label{sec:prelim}}

\subsection{Toric code}
The toric code is a prototypical example of a topological error correction code \cite{KITAEV20032, Dennis2002}.
It is defined on a square lattice of size $L \times L$ with periodic boundary conditions, where each edge represents a physical qubit.
The encoded subspace is defined by two types of stabilizers.
The \textit{star stabilizers} $A_v$ are defined on each vertex $v$ of the lattice by
\begin{equation}
    A_v = \prod_{\partial e \ni v} X_e,
    \label{eq:star-def}
\end{equation}
where $\partial e \ni v$ denotes the edges $e$ connected to the vertex $v$, and $X_e$ denotes the Pauli $X$ operator acting on the edge $e$. Similarly, the \textit{plaquette stabilizers} $B_f$ are defined on each face $f$ by
\begin{equation}
    B_f = \prod_{e \in \partial f} Z_e,
    \label{eq:plaquette-def}
\end{equation}
where $e \in \partial f$ denotes the edges $e$ composing the face $f$, and $Z_e$ is the Pauli $Z$ operator on the system in $e$.

The \textit{toric code subspace} is defined by all states $\ket{\psi}$ stabilized by all the star and plaquette stabilizers on the lattice, i.e.
\begin{align}
    &A_v \ket{\psi} = +1 \ket{\psi} \quad \forall v, \\
    & B_f \ket{\psi} = +1 \ket{\psi} \quad \forall f.
\end{align}

 \begin{figure}[t]
     \centering
     \includegraphics[width=0.65\linewidth]{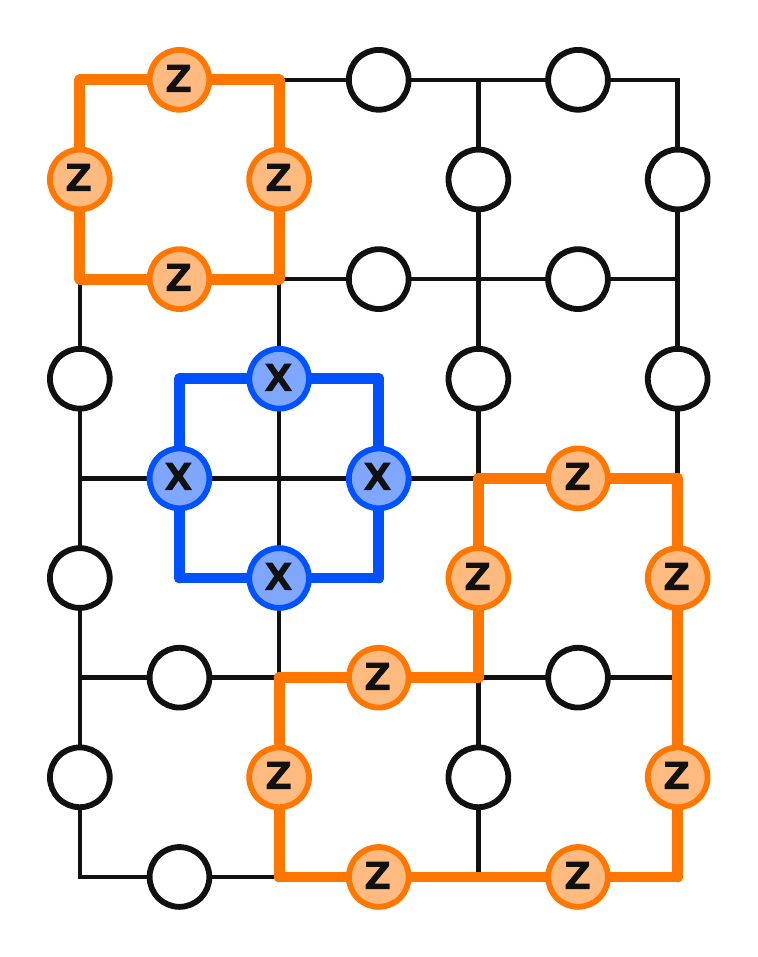}
    \caption{Illustration of the primal and dual loops acting on a patch toric code. Physical qubits are associated with the edges of the primal lattice. The orange paths represent contractible loops in the primal lattice, and the corresponding product of Pauli $Z$ operators. The blue path represents a contractible loop in the dual lattice, and the corresponding product of Pauli $X$ operators acting on the primal edges crossed by the loop.}
    \label{fig:toric-code-diagram}
 \end{figure}

A useful way to represent these stabilizers is in terms of loops on the lattice. These loops specify the support of a collection of Pauli operators. For a path $\gamma$ on the original lattice (also called the primal lattice), we associate the operator
\begin{equation}
    Z(\gamma)=\prod_{e\in\gamma} Z_e.
\end{equation}
As illustrated in Fig.~\ref{fig:toric-code-diagram} by the orange loop, a plaquette stabilizer can be identified with the smallest such loop, namely, $B_f = Z(\partial f)$. More generally, the product of plaquette stabilizers associated with a collection of faces corresponds to a loop along the boundary of the region formed by those faces.

Similarly, the star stabilizers define loops on the \textit{dual lattice}, obtained by shifting the original lattice by half an edge in both directions. A dual-lattice path $\gamma^\ast$ represents the operator
\begin{equation}
    X(\gamma^\ast)
    =
    \prod_{e\in\gamma^\ast} X_e ,
\end{equation}
that is, an $X$ operator is applied to every physical qubit whose edge is crossed by the dual loop. In particular, the boundary of a single dual face surrounds one vertex $v$ of the primal lattice and corresponds precisely to the star stabilizer $A_v$, as illustrated by the blue loop in Fig.~\ref{fig:toric-code-diagram}. The products of star stabilizers behave similarly to the plaquettes, and the resulting operator is supported only on the boundary of the corresponding collection of dual faces. We refer to all such closed loops as \textit{contractible loops}, since they can be generated by products of stabilizers and therefore correspond themselves to stabilizer operators.

Finally, to denote the edges of the lattice, we will use the following convention for the indices modulo $L$. Let $(i,j)$ and $(i+1,j)$ be the left and right vertices of the lattice at the boundary of a horizontal edge; we then denote this edge by 
\begin{equation}
    H(i,j) = \left\{(i,j),(i+1,j)\right\}.
\end{equation}
Similarly, let $(i,j)$ and $(i,j+1)$ be the bottom and top vertices of the lattice at the boundary of a vertical edge; we then denote this edge by 
\begin{equation}
    V(i,j) = \left\{(i,j),(i,j+1)\right\}.
\end{equation} 
Furthermore, the vertex $(0,0)$ can be defined arbitrarily, and by convention, we always display it on the bottom left of the figures (see Fig.~\ref{fig:comm-graph-diagram}).

\subsection{Self-testing}

Self-testing is a form of certification where a quantum state and the measurements performed on it are uniquely determined from observations, up to unavoidable degrees of freedom, without requiring a characterization of the experimental apparatus \cite{Mayers2004,Supic2020-ox}.
The idea was later extended to other structures, such as circuits \cite{Magniez2006SelfTestingQuantumCircuits} and supermaps \cite{Barizien2026SelfTestingSupermaps}.
Of particular interest for us is its extension to the case where the certified object is not a single state, but an entire subspace \cite{Baccari2020}.

In the usual non-communicating scenario, we consider a set of $N$ untrusted parties that are spacelike separated, so that no communication between them is possible. A referee assigns to each party $i$ a classical input $x_i$, chosen uniformly at random, and the party returns an output $a_i$. If, for example, these outcomes are dichotomic and labeled $a_i\in\{\pm1\}$,  the referee can estimate the expectation value of the product of the outputs associated with each input string $\mathbf{x}$. Within the quantum formalism, the operation performed by party $i$ upon receiving input $x_i$ is represented by an unknown observable $\widetilde{O}_{x_i}$, acting on the uncharacterized shared state $\ket{\Psi}$, so that the corresponding expectation value is given by
\begin{equation}
\left \langle \prod_i a_i \right \rangle_\mathbf{x} = \bra{\Psi} \bigotimes_i \widetilde{O}_{x_i} \ket{\Psi}. 
\label{eq:expect-eq}
\end{equation}
We call this experiment the \textit{physical experiment} (PE). Informally, a self-testing protocol certifies that if the observed expectation values from a PE match those of a perfectly characterized and known \textit{reference experiment} (RE), then the physical state and measurements used in the PE are those of the reference experiment up to some local degrees of freedom. Throughout this article, we use tilded letters to refer to the uncharacterized measurements and observables made in the physical experiment, and untilded letters to refer to the characterized and trusted measurements made in the reference experiment.

In the model introduced in \cite{Barrett2007} (see also \cite{Meyer2023, Meyer2026}), the non-communicating assumption is relaxed, and, after receiving their inputs, the parties are allowed to communicate classical messages to their neighbors in a given communication graph, up to some specified distance. This is analogous to the LOCAL model of distributed algorithms \cite{Linial1987, Linial1992}, where at each \textit{round of communication} the parties are allowed to synchronously send a message of arbitrary size to their immediate neighbors. Therefore, in $r$ rounds of such a protocol, the parties will potentially have access to all the classical information present within a distance $r$. In this way, a self-testing protocol robust to $r$ rounds of communication must thus remain valid even if every party has access to all this information. 

Importantly, in the communication model considered here, as well as in \cite{Barrett2007,Meyer2023,Meyer2026}, the classical messages exchanged between the parties are not allowed to depend on the outcomes of quantum measurements. This condition is necessary for the communication to remain genuinely classical rather than merely being implemented through a classical channel. Indeed, since the uncharacterized parties are not assumed to have bounded local dimension, they may share an arbitrary amount of entangled states. Therefore, if they were allowed to locally perform arbitrary quantum operations and communicate the resulting measurement outcomes, they could use this pre-shared entanglement together with these classical messages to teleport quantum systems between neighboring parties, effectively enabling quantum communication. Consequently, to account for all the classical information locally available to the parties, it suffices to allow them to communicate their inputs, since any other classical message can be generated from these inputs together with shared randomness. Throughout the remainder of the manuscript, we therefore refer to this restricted communication model simply as classical communication.

To start making the above definitions more precise, let us introduce the notion of submeasurements, which were shown to be a useful tool for the certification of nonlocality in many-body systems (see, e.g., \cite{Hart2025, Daniel2022}). In fact, the authors in \cite{Barrett2007} showed that for graph states, there are communicating classical strategies capable of reproducing the expectation value of all Pauli measurements performed on graph states, but the same strategies fail to reproduce all the statistics of the submeasurements therein.

\begin{definition}[Submeasurements]
 Given a measurement ${M}_\mathbf{x} = \bigotimes^n_{i=1} {O}_{x_i}$, we call another measurement ${S}_\mathbf{x}$ a submeasurement of ${M}_\mathbf{x}$ if there exists a subset $I\subseteq \{1,\ldots,n\}$ such that
\begin{equation}
S_{\mathbf{x}}
=
\bigotimes_{i=1}^{n} S_{x_i},
\qquad
S_{x_i}
=
\begin{cases}
O_{x_i}, & i\in I,\\
\Id, & i\notin I.
\end{cases}
\end{equation}
In this case, we denote
\begin{equation}
S_{\mathbf{x}}
=
\left.M_{\mathbf{x}}\right|_{I}.
\end{equation}
\end{definition}
For example, $ A \otimes B \otimes \Id$ is a submeasurement of $A \otimes B \otimes C$.
Operationally, a submeasurement corresponds to the referee sending the inputs to every party and averaging out the outputs of the parties that are not in the set $I$.
This is a crucial tool for our proof, as the parties do not know which outputs are disregarded and thus have to always behave the same if the information if all the information available to them does not change. 

When the RE and PE are \textit{compatible}, meaning that they have the same number of parties and observables available at each party, we can compare them via the following definition.

\begin{definition}[Simulation]
Consider a reference experiment with state $\ket{\psi}$, and fix a set of submeasurements $\mathcal{S}=\{S_{\mathbf{x}}\}_{\mathbf{x}}$.
Let $\ket{\Psi}$ be the state of a compatible physical experiment, and let $\{\widetilde{S}_{\mathbf{x}}\}_{\mathbf{x}}$ denote the corresponding physical submeasurements, obtained by retaining the same subsets of parties in the respective measurements.
We say that the physical experiment simulates $\mathcal{S}$ if
\begin{equation}
    \bra{\Psi}\widetilde{S}_{\mathbf{x}}\ket{\Psi}
    =
    \bra{\psi}S_{\mathbf{x}}\ket{\psi}
\end{equation}
for every $S_{\mathbf{x}}\in\mathcal{S}$.
\label{def:simulation}
\end{definition}

Furthermore, to self-test a subspace and not any individual state inside it, any comparison between the PE and the RE must be made through submeasurements that are \emph{constant} over this subspace. More precisely, a submeasurement $S_x$ is said to be constant over a subspace $\mathcal{C}$ if, for every pair of states $\ket{\phi},\ket{\phi'}\in\mathcal{C}$, they satisfy
\begin{equation}
    \bra{\phi'}{S}_\mathbf{x}\ket{\phi'} = \bra{\phi}{S}_\mathbf{x}\ket{\phi}.
\end{equation}

We can now define self-testing of a subspace in a manner similar to \cite{Baccari2020}. The idea is that any physical experiment reproducing the statistics of a reference experiment, consisting of a state in the target subspace and measurements that are constant over it, can be mapped by local isometries to a state belonging to that same space.

\begin{definition}[Self-testing subspaces]
    Let 
\begin{equation}
    \mathcal{C} = \Span(\{\ket{\psi_\kappa}\}_\kappa),
\end{equation} 
be a subspace of $
 \bigotimes_i \mathcal{H}'_i$, and
 \begin{equation}
{M}_\mathbf{x} = \bigotimes_i {O}_{x_i}     
 \end{equation}
define the measurements and submeasurements associated with the input string $\mathbf{x}$, and that are constant over $\mathcal{C}$.
Furthermore, let 
\begin{equation}
\ket{\Psi} \in  \left ( \bigotimes_i \mathcal{H}_{i} \right) \otimes \mathcal{H}_P
\end{equation}
be any purification of a state that, together with the measurements $\bigl\{\widetilde{M}_\mathbf{x}\bigr\}_\mathbf{x}$, simulates the chosen set of submeasurements $\mathcal{S}=\{S_{\mathbf{x}}\}_{\mathbf{x}}$ of the reference experiment $\left(\ket{\psi},\left\{{M}_\mathbf{x}\right\}_\mathbf{x}\right)$, where $\ket{\psi}$ is any pure state in $\mathcal{C}$, and where the operators $\widetilde{M}_\mathbf{x}$ are understood to act trivially on the purifying system $\mathcal{H}_P$.

We say that the observed statistics $\{\langle \widetilde{S}_\mathbf{x}\rangle \}_{\mathbf{x}}$ self-tests the entangled subspace $\mathcal{C}$, if one can deduce that there exist local isometries $\Phi_i: \mathcal{H}_{_i} \rightarrow \mathcal{H}_i'\otimes \mathcal{H}''_i$ and states $\ket{\xi_\kappa} \in \mathcal{H}'' \otimes \mathcal{H}_P$ such that
    \begin{equation}
         \left(\Phi\otimes \Id_P\right) \ket{\Psi} = \sum_\kappa c_\kappa \ket{\psi_\kappa} \otimes \ket{\xi_\kappa},
    \label{eq:self-testing-def}     
    \end{equation}
where
\begin{equation}
    \Phi = \bigotimes_i \Phi_i.
\end{equation}
\label{def:self-testing-subspace}
\end{definition}

\begin{definition}[Communication-robust protocol]
    A self-testing protocol is robust to classical communication up to distance $r$ if Eq.~\eqref{eq:self-testing-def} can still be established when the untrusted parties have access, throughout the protocol, to all classical information available within distance $r$.
\end{definition}

As highlighted in \cite{Baccari2020}, Definition~\ref{def:self-testing-subspace} introduces an additional degree of freedom compared with standard notions of self-testing, encoded in the coefficients $c_i$. This additional freedom arises because the definition certifies only that the extracted state lies within a given subspace, without necessarily specifying which particular state in that subspace is realized.

To prove that a given protocol self-tests the toric subspace according to this definition, the following proposition will be useful.
\begin{proposition}
    For self-testing the toric subspace according to Definition~\ref{def:self-testing-subspace}, it suffices to show that the statistics
\begin{enumerate}
    \item Certify anticommutation locally, i.e., show the existence of two binary observables $\widetilde{X}_e$ and $\widetilde{Z}_e$ at each party $e$ that anticommute on the state $\{\widetilde{X}_e,\widetilde{Z}_e\}\ket{\Psi} = 0$, and
    \item Certify the stabilizer, i.e., show that these observables also reproduce the stabilizers of the toric code
\begin{align}
    &\prod_{\partial e \ni v}\widetilde{X}_e \ket{\Psi} = +1 \ket{\Psi} \quad \forall v, \nonumber\\
    & \prod_{e \in \partial f} \widetilde{Z}_e \ket{\Psi} = +1 \ket{\Psi} \quad \forall f. \nonumber
\end{align}

\end{enumerate}
\label{prop: self-testing-criteria}
\end{proposition}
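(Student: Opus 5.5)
We prove the proposition with the standard swap-isometry construction used for subspace self-tests of stabilizer codes \cite{Baccari2020}. The only inputs are the two certified properties. The plan is first to regularize the observables, then to build the local isometry, and finally to show that the extracted qubit register lies in the toric code subspace.

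\textbf{Step 1: regularize the observables.} For each party $e$, take $\widetilde{X}_e,\widetilde{Z}_e$ to be binary, so they are Hermitian unitaries. Anticommutation is certified only on $\ket{\Psi}$, not as an operator identity. To handle this, let $\mathcal{H}_{\mathrm{supp}}$ be the span of all vectors obtained by applying products of the observables of \emph{other} parties (tensored with operators on $\mathcal{H}_P$) to $\ket{\Psi}$. Restrict each local operator to the support of the reduced state of party $e$. On this support, the relation $\{\widetilde{X}_e,\widetilde{Z}_e\}\ket{\Psi}=0$ implies $\{\widetilde{X}_e,\widetilde{Z}_e\}\rho_e=0$. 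The reason is that $\{\widetilde X_e,\widetilde Z_e\}\otimes\Id$ annihilates $\ket\Psi$, and therefore annihilates the range of $\rho_e$. So on the relevant local support, the anticommutation holds exactly. Jordan's lemma then gives a decomposition $\mathcal{H}_e\cong\mathbb{C}^2\otimes\mathcal{H}''_e$. In this decomposition $\widetilde{Z}_e=Z\otimes\Id$ and $\widetilde{X}_e=X\otimes\Id$; anticommutation excludes the one-dimensional Jordan blocks. Equivalently, we can use the usual SWAP circuit with the ancilla $\ket{0}$, a Hadamard, controlled-$\widetilde{Z}_e$, Hadamard, and controlled-$\widetilde{X}_e$. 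This defines $\Phi_e:\mathcal{H}_e\to\mathbb{C}^2\otimes\mathcal{H}''_e$ such that $\Phi_e\widetilde{X}_e=(X\otimes\Id)\Phi_e$ and $\Phi_e\widetilde{Z}_e=(Z\otimes\Id)\Phi_e$ on the support.

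\textbf{Step 2: transport the stabilizers through the isometry.} Set $\Phi=\bigotimes_e\Phi_e$ and $\ket{\Phi_\Psi}=(\Phi\otimes\Id_P)\ket{\Psi}$. The intertwining relations apply party by party. Each stabilizer is a tensor product over distinct parties, and each factor acts on the support of $\ket{\Psi}$. Hence
\[
\Bigl(\prod_{\partial e\ni v}X_e\otimes\Id\Bigr)\ket{\Phi_\Psi}=\ket{\Phi_\Psi},\qquad
\Bigl(\prod_{e\in\partial f}Z_e\otimes\Id\Bigr)\ket{\Phi_\Psi}=\ket{\Phi_\Psi}.
\]
Here the stabilizers act on the extracted qubits. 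The operators are applied one party at a time, and after each application the state stays in the span where the local relations hold. This is the point that needs care. Because the support condition was defined via operators on the other parties, each partial product still lands in the correct subspace.

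\textbf{Step 3: conclude membership in the code space.} Let $\Pi_{\mathcal C}=\prod_v\frac{\Id+A_v}{2}\prod_f\frac{\Id+B_f}{2}$ be the code projector. From Step 2, $(\Pi_{\mathcal C}\otimes\Id)\ket{\Phi_\Psi}=\ket{\Phi_\Psi}$. Expand $\ket{\Phi_\Psi}$ in an orthonormal qubit basis completed from a basis $\{\ket{\psi_\kappa}\}$ of $\mathcal{C}$. The components outside $\mathcal{C}$ vanish, which gives $\ket{\Phi_\Psi}=\sum_\kappa c_\kappa\ket{\psi_\kappa}\otimes\ket{\xi_\kappa}$ with normalized $\ket{\xi_\kappa}\in\mathcal{H}''\otimes\mathcal{H}_P$. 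This is exactly Eq.~\eqref{eq:self-testing-def}.

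The main obstacle is Step 1. The anticommutation is certified only on the state, and we must upgrade it to an operator identity on the relevant local support that is compatible with all the other parties' operators. I would first handle this by the reduced-state support argument above. If that proves delicate, the fallback is the swap-circuit computation that uses only relations acting on $\ket{\Psi}$: commute the other parties' operators past each other and apply $\{\widetilde X_e,\widetilde Z_e\}\ket\Psi=0$ at the last step.
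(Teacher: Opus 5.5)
Your proof follows the paper's route: the SWAP isometry, pushing each star and plaquette operator through $\Phi$ one party at a time, and then reading off that the extracted register lies in the code space. You use $\Pi_{\mathcal C}$ where the paper uses a Schmidt decomposition across $\mathcal H'$ versus $\mathcal H''\otimes\mathcal H_P$; the two are equivalent.

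The one step that does not work as written is the Jordan-lemma justification in Step 1. You cannot restrict $\widetilde X_e,\widetilde Z_e$ to $\mathrm{supp}(\rho_e)$ without first knowing that this subspace is invariant under them. Even then, you only get a qubit decomposition of that subspace, not $\mathcal H_e\cong\mathbb C^2\otimes\mathcal H''_e$ with $\widetilde Z_e=Z\otimes\Id$ globally.

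Invariance can be supplied in either of two ways:
\begin{itemize}
\item By the stabilizer hypothesis, $\widetilde X_e\ket\Psi$ and $\widetilde Z_e\ket\Psi$ equal operators on the other parties applied to $\ket\Psi$. This forces both to map $\mathrm{supp}(\rho_e)$ into itself.
\item Alternatively, pass to the invariant hull of $\mathrm{supp}(\rho_e)$. The anticommutator still vanishes there, because $\{\widetilde X_e,\widetilde Z_e\}$ commutes with both involutions.
\end{itemize}

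Jordan's lemma is in any case unnecessary. The SWAP isometry satisfies $(Z_e\otimes\Id)\Phi_e=\Phi_e\widetilde Z_e$ exactly, and
$(X_e\otimes\Id)\Phi_e-\Phi_e\widetilde X_e=\ket{0}\otimes\bigl(-\tfrac12\{\widetilde X_e,\widetilde Z_e\}\bigr)+\ket{1}\otimes\tfrac12\widetilde X_e\{\widetilde X_e,\widetilde Z_e\}$.
So the intertwining holds on every vector annihilated by the anticommutator. Since $\{\widetilde X_e,\widetilde Z_e\}$ acts only on party $e$, it annihilates $O\ket\Psi$ for any operator $O$ on the other parties. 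In particular it annihilates every partial product $\prod_{e'\neq e}\widetilde X_{e'}\ket\Psi$ arising in Step 2. This settles the point you flagged as needing care, and it is exactly what the paper's argument relies on (spelled out in its noise-robustness appendix).
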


Informally, the first condition in Proposition~\ref{prop: self-testing-criteria} certifies that a qubit subsystem can be extracted locally at each party, while the second ensures that the joint state of these extracted qubits is supported on the toric code subspace. This is very similar to the statement made in \cite{Meyer2026} and implicitly used in \cite{McKague2014,McKague2016}.
For completeness, a proof is provided in Appendix ~\ref{app:proof-prop1}.

\section{Self-testing the toric code without communication}
\label{sec:without-comm}

To show the construction of a self-testing protocol where communication is allowed, it will be convenient to start with the case where the parties cannot communicate.
Self-testing protocols for the toric code subspace have already been proposed \cite{Baccari2020,Hart2025}.
However, those constructions are not robust to classical communication (see Appendix~\ref{sec:communicating-classical-strategies}).
In this section, we present an alternative self-test that is generalizable to the communicating scenario and can be implemented using only single-qubit Pauli operations.

The referee provides inputs $x_{e}\in\{0,1,2\}$ to the subsystem on each edge $e$.
These inputs correspond to measurements in the $X,Y,Z$ basis on a state in the toric code subspace in the trusted scenario of the RE. The corresponding observables of the untrusted parties in the PE are denoted by $\widetilde{X}_{e}$, $\widetilde{Y}_{e}$, and $\widetilde{Z}_{e}$, and act on the purified shared state $\ket{\Psi}$. For convenience, we omit the identity operator on the purifying system $\mathcal{H}_P$, on which these observables act trivially. In return, each party $e$ will output a measurement result $a_{e} \in \{ \pm 1 \}$. 

In Fig.~\ref{fig:measurements-vertical}, we show a subset of parties in a $3\times 3$ patch of a lattice of the toric code and inputs defining four submeasurements $S_1, S_2, S_3, S_4$ in the RE and respectively $\widetilde{S}_1, \widetilde{S}_2, \widetilde{S}_3, \widetilde{S}_4$ in the PE.
The parties who are not in the submeasurement receive the input $1$.

\begin{figure*}[t]

    \begin{subfigure}{.97\textwidth}
        \centering
                \caption{}
        \label{fig:measurements-vertical}
        \includegraphics[width=\textwidth]
        {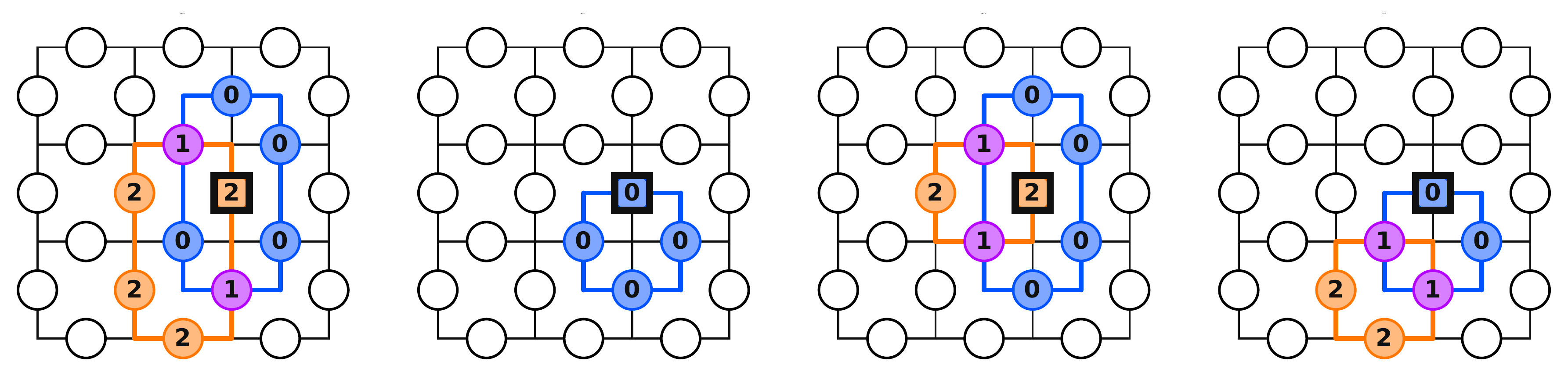}
    \end{subfigure}

    \vspace{3mm}

    \begin{subfigure}{.97\textwidth}
        \centering
        \caption{}
        \label{fig:measurements-horizontal}
        \includegraphics[width=\textwidth]
        {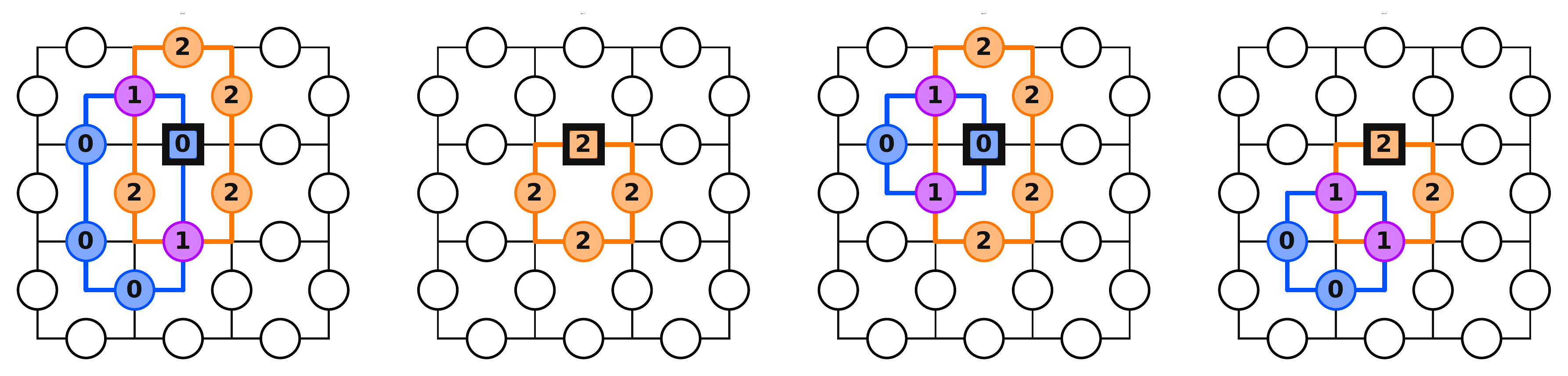}
    \end{subfigure}
    
    \caption{A $3\times 3$ patch of the toric code with the inputs given by the referee to each party in the submeasurements 1 through 4, from left to right, for certifying the anticommutation relations without any communication. For (a) a vertical edge and (b) a horizontal edge. Orange and blue loops represent primal and dual contractible loops, respectively. The numbers inside the qubits indicate the corresponding measurement inputs, while the target qubit (a) $t = V(2,1)$ and (b) $t = H(1,2)$ is highlighted by a square. Parties without explicit inputs receive $1$ and do not participate in the submeasurement.
    }
    \label{fig:measurements}

\end{figure*}

These submeasurements are obtained from the product of contractible loops shown in Fig.~\ref{fig:measurements-vertical}. For this, the parties located at the intersections of the loops would in principle be applying $XZ$ or $ZX$, depending on the product order. Since only a single setting can be assigned to a party, we replace these intersections by a $Y$. In the RE, each replacement introduces a phase factor of $\pm i$, where the sign is determined by the order of the applied operators. As the intersections occur in pairs, these phases combine to an overall factor of $-1$.
Consequently, for any state $\ket{\psi}$ in the toric code subspace, these submeasurements satisfy
\begin{align}
    &S_1 \ket{\psi} = -\ket{\psi}, \\
    &S_2 \ket{\psi} = +\ket{\psi},\\
    &S_3 \ket{\psi} = -\ket{\psi}, \\
    &S_4 \ket{\psi} = -\ket{\psi}.
\end{align}

Therefore, any physical experiment simulating the behavior of these submeasurements must satisfy
\begin{equation}
    \big\langle\widetilde{S}_{k} \big\rangle = \left\langle {S}_k \right\rangle  = \lambda_k, 
    \label{eq:anticomm-expectations-0-rounds}
\end{equation}
for $k=1,2,3,4$, where $\lambda_2 = 1$ and $\lambda_1 = \lambda_3 =\lambda_4 = -1$.
Since every party performs an operator with spectrum $\pm1$, if the PE satisfy Eq.~\eqref{eq:anticomm-expectations-0-rounds}, we have
\begin{equation}
    \widetilde{S}_{k} \ket{\Psi}=\lambda_k \ket{\Psi} \quad \forall k ,
\label{eq:stab-cond}
\end{equation}
which implies
\begin{equation}
    \prod_{k=1}^4 \widetilde{S}_k \ket{\Psi}= \prod_{k=1}^4 \lambda_k \ket{\Psi} = -\ket{\Psi}.
    \label{eq:minus-sign}
\end{equation}

On the other hand, let us compute the product $S_1S_2S_3S_4$ explicitly. For the parties not participating in any submeasurement, this product is trivial and equals the identity. Now, for the other parties, we have
\begingroup
\newcommand{\pstr}[4]{%
    \makebox[.75em]{$#1$}%
    \makebox[.75em]{$#2$}%
    \makebox[.75em]{$#3$}%
    \makebox[.75em]{$#4$}}
\setlength{\arraycolsep}{0pt}
\renewcommand{\arraystretch}{0.95}
\begin{equation}
\begin{array}{r@{:\,\,}l@{\quad\;}r@{:\,\,}l}
    H(1,0)&\pstr{Z}{\Id}{\Id}{Z},
    &V(1,0)&\pstr{Z}{\Id}{\Id}{Z},\\
    H(1,1)&\pstr{X}{X}{Y}{Y},
    &V(2,0)&\pstr{Y}{X}{X}{Y},\\
    H(2,1)&\pstr{X}{X}{X}{X},
    &V(1,1)&\pstr{Z}{\Id}{Z}{\Id},\\
    H(1,2)&\pstr{Y}{\Id}{Y}{\Id},
    &V(2,2)&\pstr{X}{\Id}{X}{\Id},\\
    H(2,2)&\pstr{X}{\Id}{X}{\Id},
    &V(2,1)&\pstr{Z}{X}{Z}{X} .
\end{array}
\label{eq:cancelations}
\end{equation}
\endgroup
With the exception of $V(2, 1)$, all other Pauli strings above evaluate to $\Id$, therefore
\begin{equation}
    \prod_{k=1}^{4} S_k = Z_{V(2,1)}X_{V(2,1)}Z_{V(2,1)}X_{V(2,1)}.
    \label{eq:trusted-alt}
\end{equation}
Crucially, by Naimark dilation, we may assume without loss of generality that the dichotomic measurements performed in the PE are projective \cite{Supic2020-ox}. Their associated observables therefore satisfy $\widetilde{X}_e^2 = \widetilde{Y}_e^2= \widetilde{Z}_e^2 = \Id$. What follows is that Eq.~\eqref{eq:trusted-alt} also holds for the PE
\begin{equation}
    \prod_{k=1}^{4} \widetilde{S}_k = \widetilde{Z}_{V(2,1)}\widetilde{X}_{V(2,1)}\widetilde{Z}_{V(2,1)}\widetilde{X}_{V(2,1)}.
    \label{eq:untrusted-alt}
\end{equation}

Taking Eqs.~\eqref{eq:minus-sign} and \eqref{eq:untrusted-alt} together, and multiplying by $\widetilde{X}_{V(2,1)} \widetilde{Z}_{V(2,1)}$ on the left, we can conclude that
\begin{equation}
        \{\widetilde{Z}_{V(2,1)},\widetilde{X}_{V(2,1)}\}\ket{\Psi} = 0.
\end{equation}

Therefore, these four submeasurements are able to certify anticommuting observables for the party at $V(2,1)$. Because the lattice is periodic, every party on a vertical edge can also have its anticommuting operators certified with a translated version of these four submeasurements. Furthermore, a very similar pattern can be constructed for parties lying on horizontal edges, as we show in Fig.~\ref{fig:measurements-horizontal}. Thus, for a target party $t$, we denote by $S^{(t)}_k$ the four submeasurements used for these certifications, which are enough to conclude the first requirement of Proposition~\ref{prop: self-testing-criteria}.

To prove these operators form the stabilizer, it is sufficient to note that the measurements ${S}^{(t)}_2$ correspond exactly to a star or plaquette, depending on whether $t$ is a vertical or horizontal edge. Moreover, every stabilizer element corresponds to one such submeasurement. By Eq.~\eqref{eq:stab-cond}, the simulation of these measurements for all $t$ also implies that the operators $\widetilde{X}_e$ and $\widetilde{Z}_e$ define all the stabilizers of the subspace. Finally, the loops shown in Fig.~\ref{fig:measurements} do not rely on the specific boundary conditions. Therefore, our results extend to any lattice containing a $3\times3$ patch on which the same construction can be implemented. Together with the previous results, this allows us to conclude the following statement.

\begin{result}
    Let $\ket{\psi}$ be any pure state in the subspace of a toric code of lattice of size $L \geq 3$. Let $\left (\ket{\Psi} , \{\widetilde{M}^{(t)}_k\}_{k,t}\right)$ be any compatible physical experiment that simulates the reference experiment $\left (\ket{\psi}, \{{M}^{(t)}_k\}_{k,t}\right)$, whose submeasurements are translated versions of the ones shown in Fig.~\ref{fig:measurements}. Then the statistics $\left\{\left\langle \widetilde{S}^{(t)}_k\right\rangle\right\}_{t,k} $ self-test the toric code subspace.
    \label{res:not-com-res}
\end{result}

\begin{remark}
    The statistics used in Result~\ref{res:not-com-res} can equivalently be collected into a Bell inequality by summing, over all target edges, the tests introduced above with the signs of their ideal expectation values. The toric code reference experiment attains the algebraic maximum of this expression.
\end{remark}

\section{Self-testing against classical communication \label{sec:comm}}

We are now ready to extend the self-test to the case where the untrusted parties can communicate classically with each other. The scenario is closely related to the one introduced in the previous section, with the key difference that each untrusted party is now allowed to exchange secret classical messages synchronously with some other parties over $r$ rounds of communication. The allowed secrete communication is described by a given \textit{communication graph}, which encodes how the parties are spatially distributed, and where the graph distance between two parties determines the number of rounds of communication needed for them to exchange messages.
We stress that the honest implementation of the protocol (the experiment actually performed in the lab) uses no communication at all. The goal is to show that, even when the devices are untrusted and might have communicated secretly, they are still forced to share a state in the code space of the toric code, up to local isometries.

This communication model differs from the dishonest-party setting considered in \cite{Murta2023}, where a subset of the parties may act dishonestly, communicate among themselves, and perform arbitrary joint quantum operations. In that setting, the systems held by the honest parties can be certified individually, whereas the systems held by the dishonest parties are certified only collectively, through an extraction isometry acting jointly on all systems within the group. In our setting, instead, communication is restricted to classical messages exchanged within a bounded distance on a graph, while the self-testing conclusion retains a separate local isometry for every party. Consequently, a qubit of the target resource is extracted locally at each party.

The bounded classical-communication model we consider in this paper was first studied in the context of nonlocality using graph states \cite{Barrett2007} and more recently in \cite{Meyer2023, Meyer2026}.
There, a natural choice for the communication graph was the same as the underlying graph of the state.
Since here we associate the parties with the edges of a lattice, the natural definition for the neighborhood of each party is in terms of the closest lattice edges, namely, those that are adjacent to it along a common face; see Fig.~\ref{fig:comm-graph-diagram}.
More precisely, a subsystem lying on a vertical edge $V(i,j)$ shares an edge in the communication graph with the parties 
\begin{equation}
\begin{aligned}
    &H(i-1,j+1), \;\; &&H(i,j+1), \\
    &H(i-1,j), &&H(i,j),
\end{aligned}
    \label{eq:neigh-hor}
\end{equation} 
and an horizontal edge $H(i,j)$ has the neighbors 
\begin{equation}
\begin{aligned}
    &V(i,j), \;\; &&V(i+1,j), \\
    &V(i,j-1), &&V(i+1,j-1).
\end{aligned}
    \label{eq:neigh-ver}
\end{equation}

In this model, the operators performed locally by each party in the PE depend not only on the input assigned to that party by the referee, but also on the inputs received by all parties within graph distance $r$. Consequently, each observable is also indexed by the collection of inputs in this neighborhood. To simplify the notation, we denote the operators performed by the party located on edge $e$ by $\widetilde{X}^{\Gamma}_e$, $\widetilde{Y}^{\Gamma}_e$, and $\widetilde{Z}^{\Gamma}_e$ when its assigned input is $0$, $1$, and $2$, respectively, and where
\begin{equation}
    \Gamma=\{x_\ell \mid d(e,\ell)\leq r\}
\end{equation}
denotes the collection of inputs assigned to all parties within graph distance $r$ of $e$.
We will refer to $\Gamma$ as the \emph{context} of the observable. Note that in the definition of $\Gamma$ we omit its dependence on $r$ and $e$, as they can be inferred by the setting.

In the RE, the parties do not communicate, and therefore their local operations cannot depend on the inputs received by neighboring parties.
Nevertheless, so that we can use the same notation for the RE and the PE, we artificially enlarge the set of measurement settings in the RE by introducing a context label that has no operational effect. Thus, for any party $e$ and any two contexts $\Gamma$ and $\Gamma'$, the corresponding characterized observables satisfy
\begin{equation}
    X^\Gamma_e = X^{\Gamma'}_e =X_e,
    \label{eq:irrelevant-context}
\end{equation}
and similarly for $Y$ and $Z$.

\begin{figure}[t]
     \centering
     \includegraphics[width=0.7\linewidth]{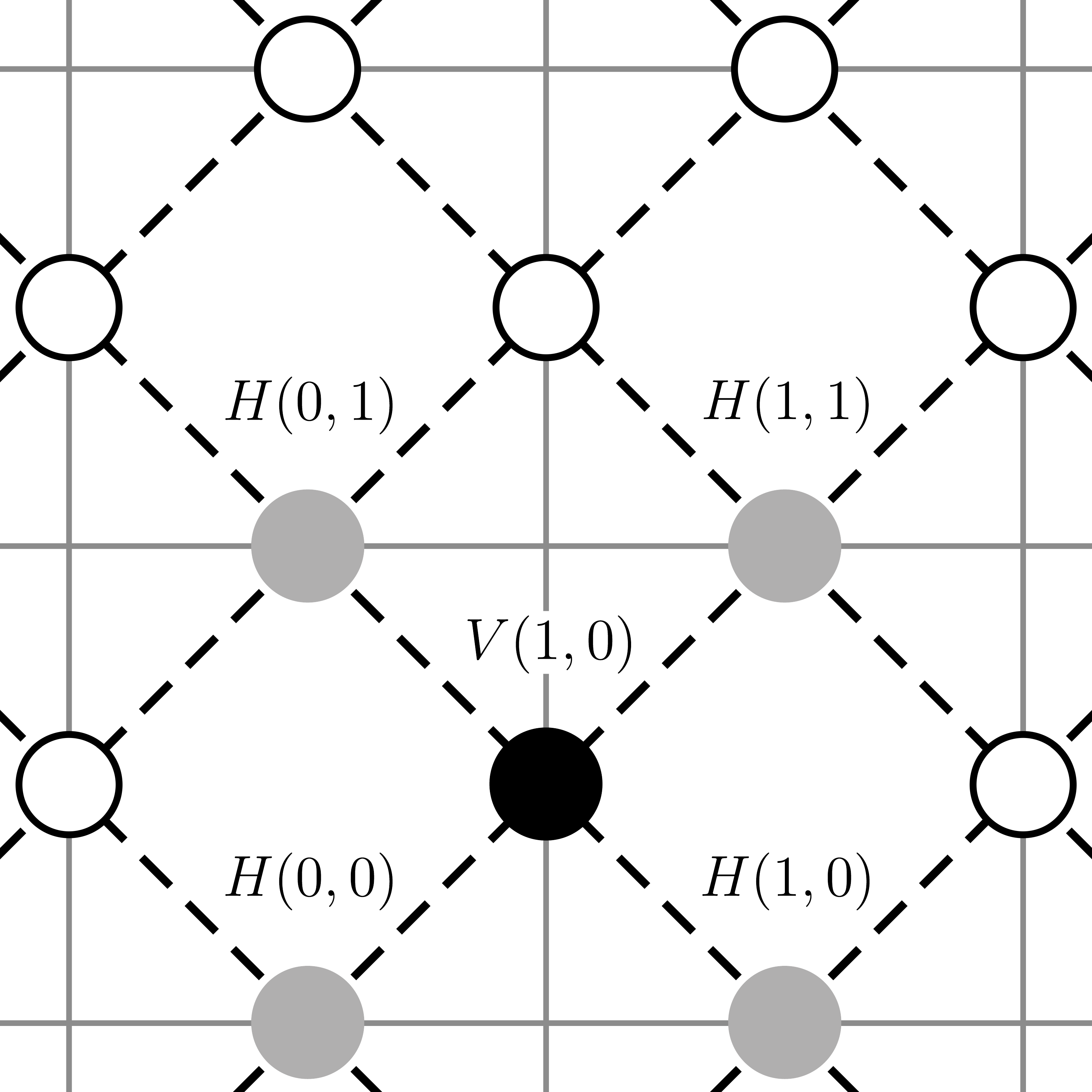}
    \caption{Communication graph (black dashed lines) for the parties on the toric code lattice (gray lines). The four neighbors of the party located at $V(1,0)$, shown in black, are highlighted in gray.}
    \label{fig:comm-graph-diagram}
 \end{figure}
 
Before we start constructing the protocols, it is instructive to understand why the self-test in Section~\ref{sec:without-comm} is not robust to even $1$ round of communication.
In the non-communicating case, each party performs the same local operation whenever it receives the same input, regardless of the inputs assigned to neighboring parties. The primal and dual loops shown in Fig.~\ref{fig:measurements} were chosen so that each local observable appears an even number of times across the four submeasurements. Moreover, for every party except the target, these occurrences are ordered so that identical operators can be paired and squared to the identity when the four submeasurements are multiplied together, as in Eq.~\eqref{eq:cancelations}.

With communication, however, a party's local operation can also depend on its context. So, although its own input remains the same in the paired submeasurements, the surrounding inputs can differ, allowing the party to distinguish these occurrences and perform different operations. The corresponding operators therefore need not coincide, and the cancellations used in the non-communicating proof no longer hold. The same problem arises if one allows the parties to communicate in the self-testing protocols proposed in \cite{Baccari2020, Hart2025}; see Appendix~\ref{sec:communicating-classical-strategies} for a discussion.

To make the protocol robust to communication, we therefore need to construct submeasurements in which both the local input and the accessible context coincide whenever the argument requires the same observable to appear. We achieve this by defining larger loops and suitably choosing the inputs of parties not participating in the submeasurement. Although the referee discards these parties' outputs, their inputs allow us to match the required contexts, which prevents the relevant parties from distinguishing between the paired submeasurements, and preserves the cancellations used in the non-communicating proof.

Let us begin by constructing submeasurements capable of self-testing in the presence of $1$ round of communication. Consider a lattice of size $L=9$. Within this lattice, we define a coarse-grained sub-lattice of size $L'=3$, whose horizontal and vertical edges are given by
\begin{align}
    H'(i,j)  = \{(3i,3j),(3i+3,3j)\},
\end{align}
and 
\begin{align}
    V'(i,j) = \{(3i,3j),(3i,3j+3)\},
\end{align}
where the vertices on the right-hand side are with respect to the original lattice (Fig.~\ref{fig:measurements-1-round}). We can then reproduce, on the coarse-grained lattice, the same primal and dual loops used in the construction of the previous section and shown in Fig.~\ref{fig:measurements-vertical}, and define the submeasurements $S^{(t)}_k$ accordingly. Every coarse-grained edge belonging to one of these loops corresponds to three consecutive edges of the finer lattice. More precisely, if a horizontal coarse-grained edge $H'(i,j)$ belongs to a primal loop $\gamma$, then
\begin{align}
H(3i,3j), H(3i+1,3j), H(3i+2,3j) \in \gamma.
\end{align}
Likewise, if the same horizontal coarse-grained edge belongs to a dual loop $\gamma^\star$, then
\begin{equation}
\begin{aligned}
H(3i+1,&3j-1), H(3i+1, 3j), \\
&\quad H(3i+1,3j+1) \in \gamma^\star.
\end{aligned}
\end{equation}
The corresponding definitions for vertical coarse-grained edges are obtained analogously.

The effect of the larger loop operators is to distance the parties in the submeasurement. If we now follow the same steps as in the non-communicating construction, the four submeasurements defined by these loop operators still satisfy 
\begin{align}
    &S_1^{(t)} \ket{\psi} = -\ket{\psi}, \\
    &S_2^{(t)} \ket{\psi} = +\ket{\psi},\\
    &S_3^{(t)} \ket{\psi} = -\ket{\psi}, \\
    &S_4^{(t)} \ket{\psi} = -\ket{\psi}.
\end{align}
and
\begin{equation}
    S^{(t)}_1 S^{(t)}_2S^{(t)}_3S^{(t)}_4 = Z_t^{\Gamma}X_t^{\Gamma}Z_t^{\Gamma}X_t^{\Gamma}.
\label{eq:missleading-eq}
\end{equation}

We can now analyze what happens when these submeasurements are performed in the PE. 
First, recall that in the RE, the cancellations leading to the above equation happen because the reference observables do not depend on the context. In the PE, however, a pairing such as $\widetilde{X}^{\Gamma}_e\widetilde{X}^{\Gamma'}_e$ reduces to the identity only if $\Gamma = \Gamma'$. Hence, we must choose the inputs so that every party has the same context in each pair of submeasurements where the operators must cancel.
For most parties, the larger loops already ensure that the contexts coincide across the paired submeasurements. For example, the party at $H(4,0)$ has identical contexts in submeasurements 1 and 3, allowing its paired operators to cancel, as shown in Fig.~\ref{fig:measurements-1-round}. These parties therefore cannot distinguish the relevant submeasurements, so their untrusted operators in the PE cancel through pairings analogous to those in Eq.~\eqref{eq:cancelations}. For some parties, however, the paired contexts still differ, as at $V(3,2)$ in submeasurements 1 and 4 (see Fig.~\ref{fig:measurements-1-round}). Thus, Eq.~\eqref{eq:missleading-eq} does not directly hold in the PE. Fortunately, we can modify the inputs of parties outside these loops so that the remaining paired contexts also coincide.

\begin{figure*}[t]
    \centering

    % ========================================================
    % FOUR SUBMEASUREMENTS
    % ========================================================
    \begin{minipage}[c]{0.74\textwidth}
        \centering

        \includegraphics[width=0.485\linewidth]
        {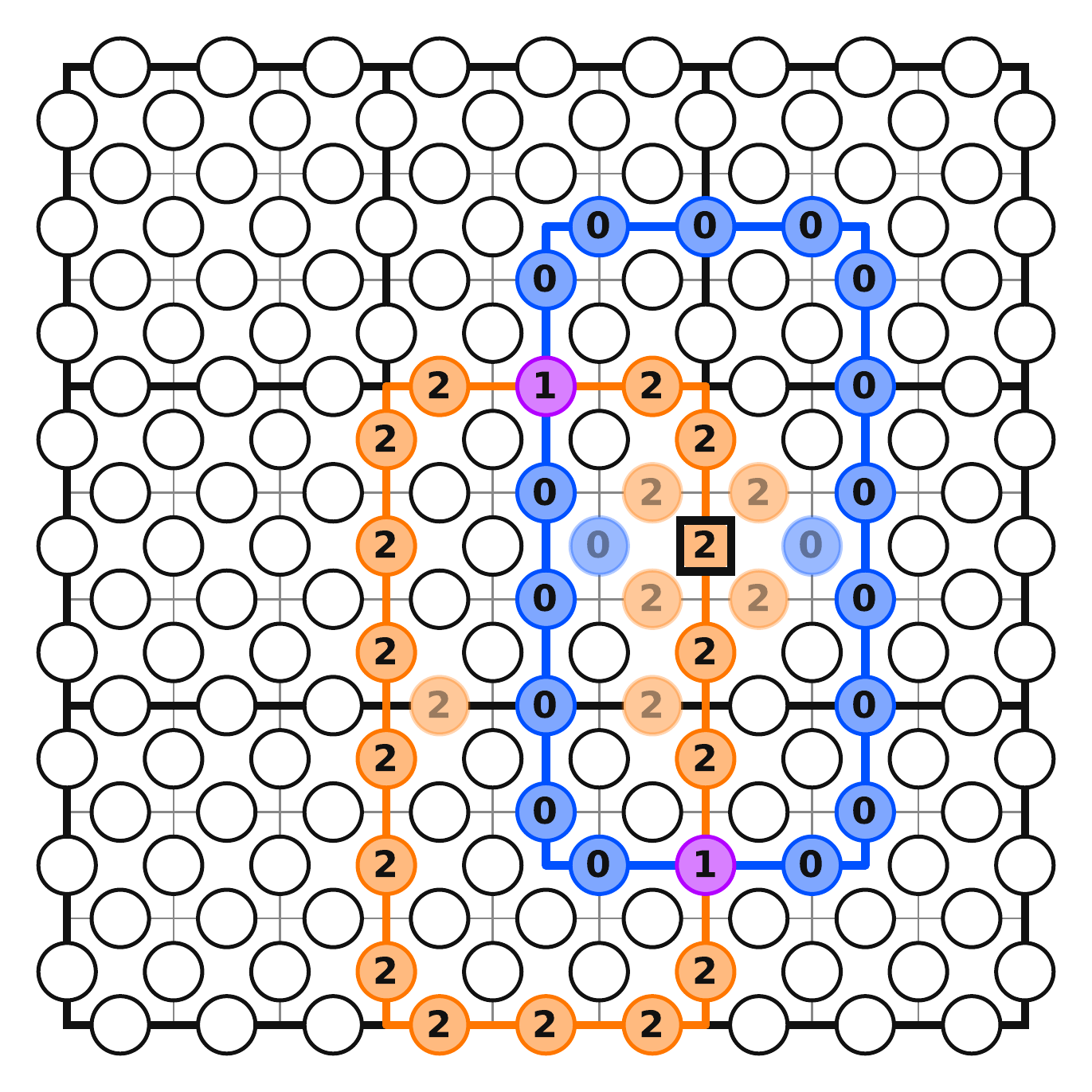}
        \hfill
        \includegraphics[width=0.485\linewidth]
        {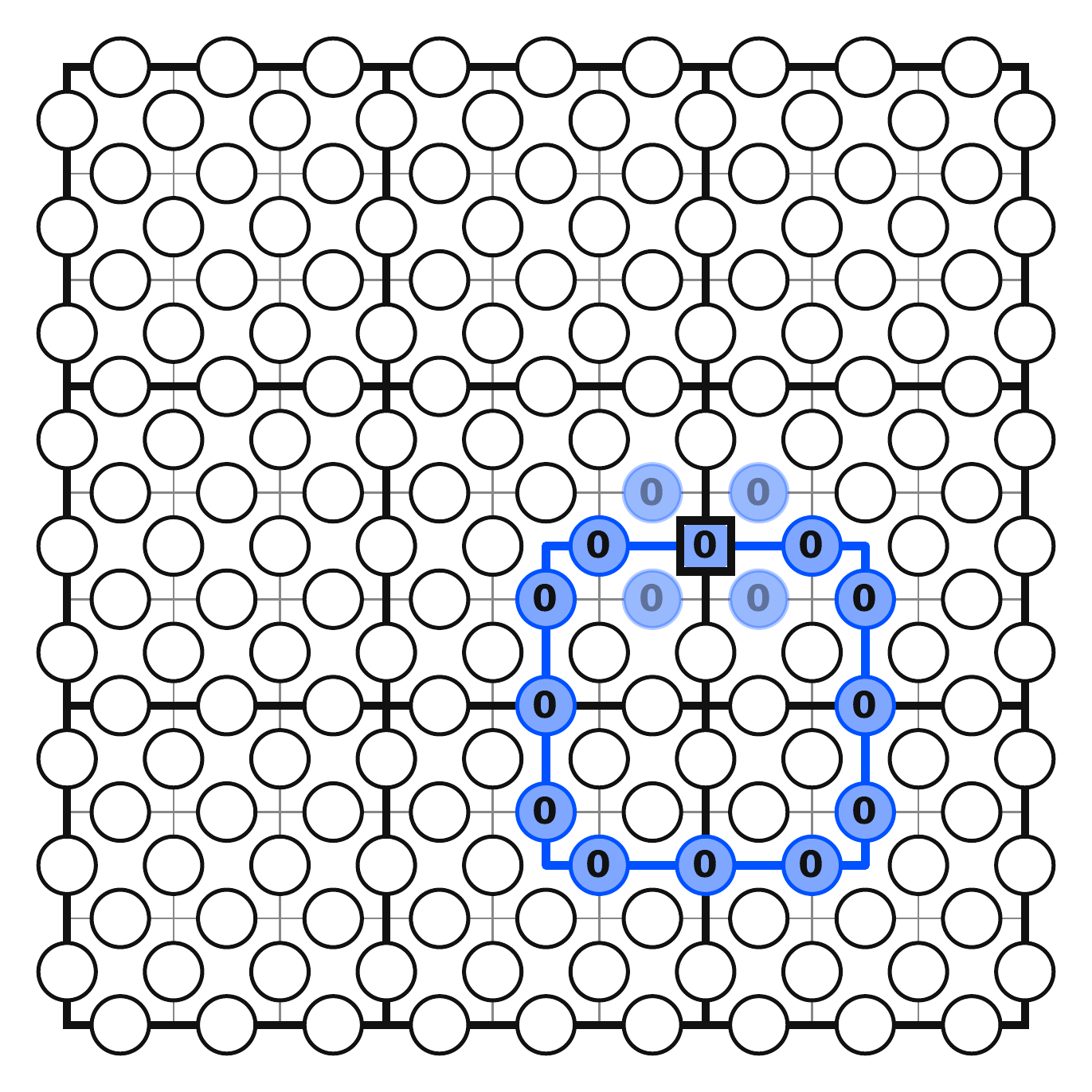}

        \vspace{0.2cm}

        \includegraphics[width=0.485\linewidth]
        {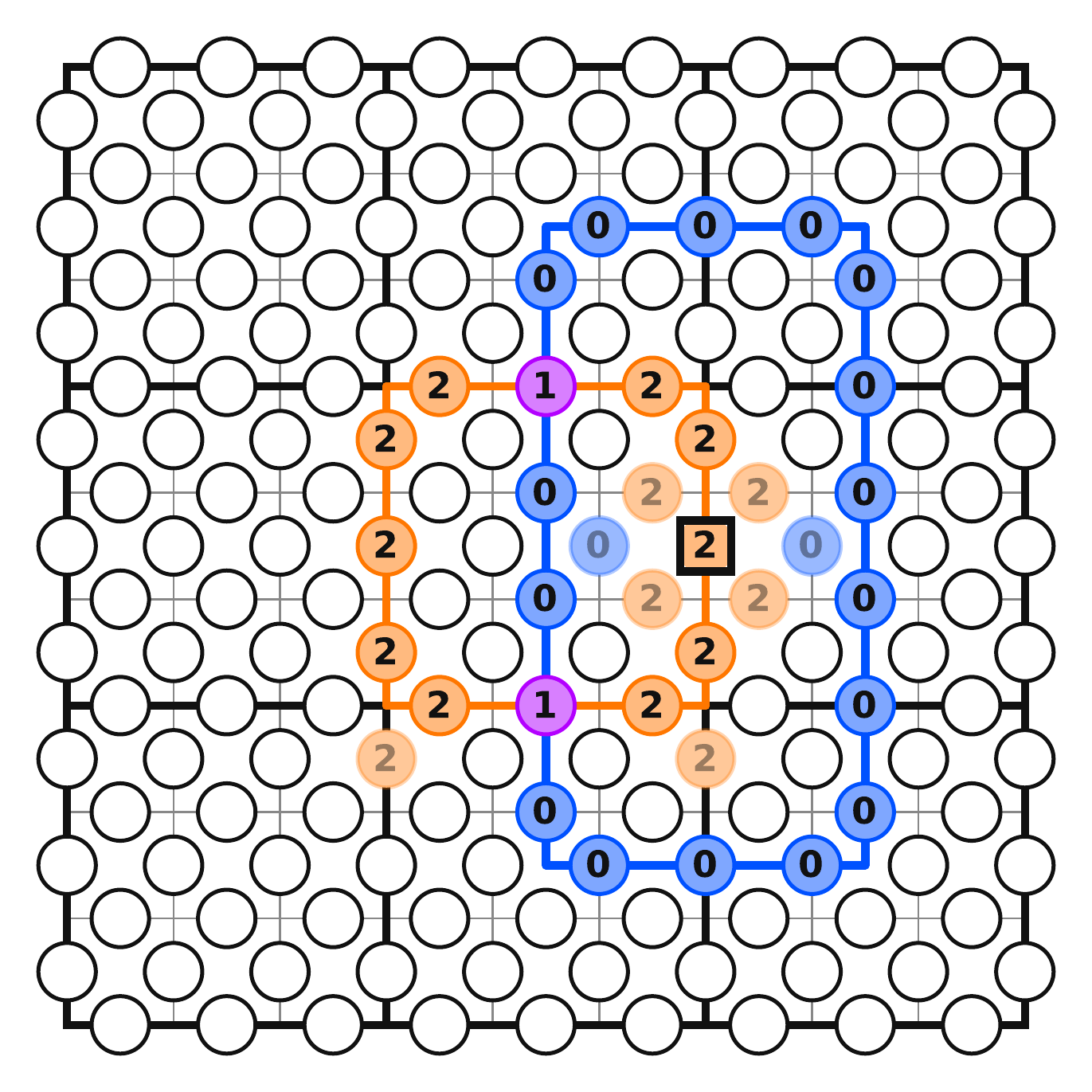}
        \hfill
        \includegraphics[width=0.485\linewidth]
        {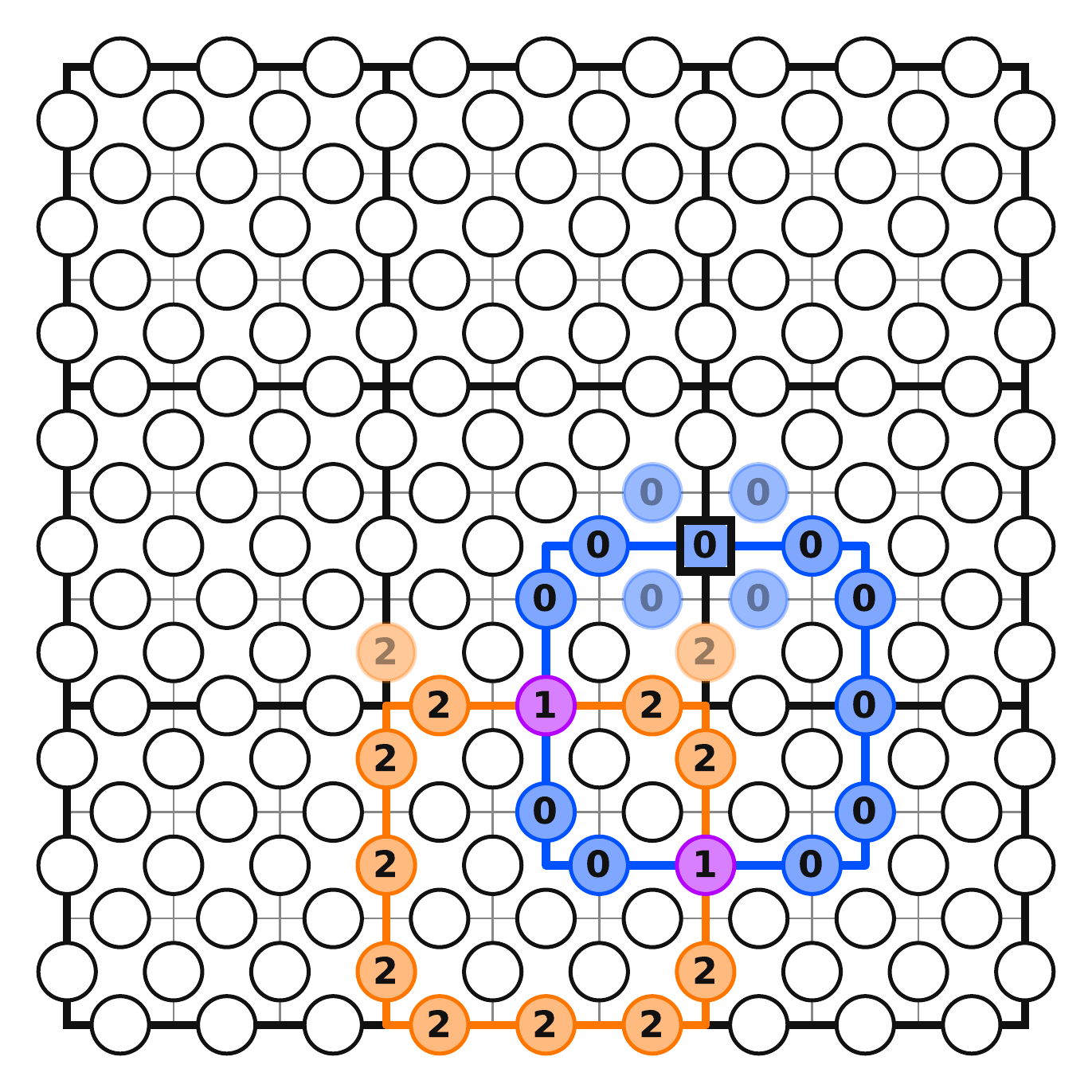}

    \end{minipage}
    \hfill
    % ========================================================
    % LEGEND
    % ========================================================
    \begin{minipage}[c]{0.25\textwidth}
        \centering

        \includegraphics[
            width=\linewidth
        ]{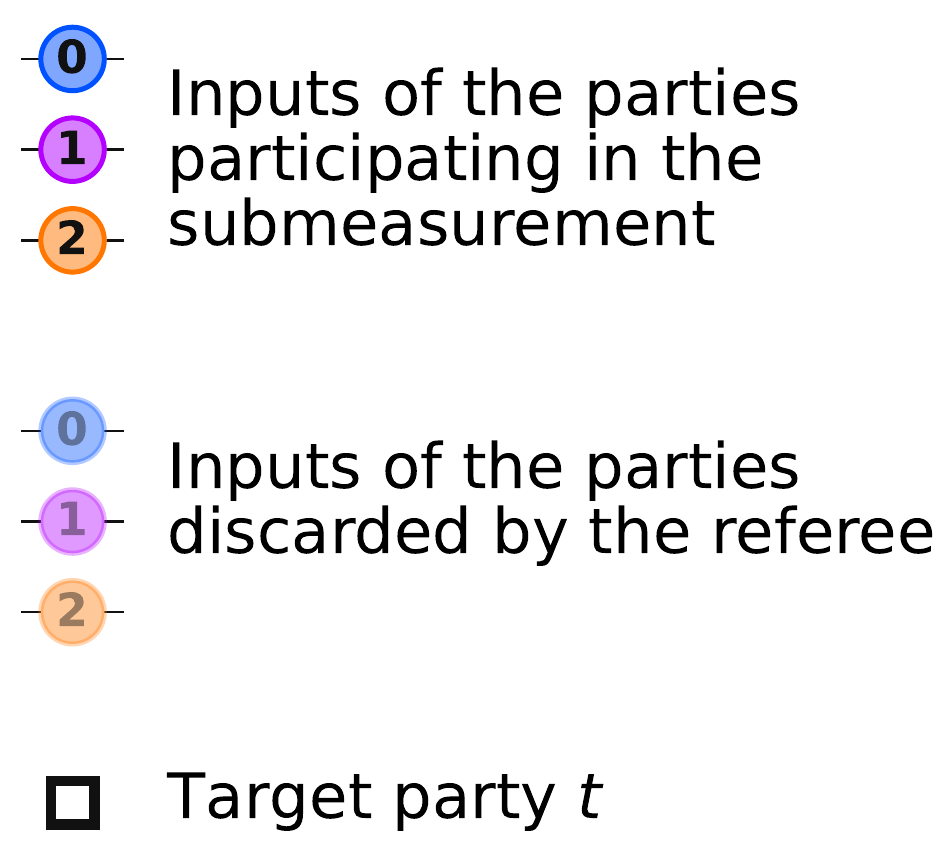}

    \end{minipage}

    \caption{
    A $9\times 9$ patch of the toric code showing the inputs assigned by the referee in measurements 1 through 4, ordered left to right and top to bottom, to certify the anticommutation relations. The fine lattice is shown in gray, while the coarse-grained lattice is highlighted in black. Orange and blue curves represent primal and dual contractible loops, respectively. The numbers displayed on the qubits indicate the corresponding measurement inputs, and the target qubit $t=V(6,4)$ is highlighted by a square. The parties included in each submeasurement are those associated with the primal and dual loops. Faded inputs, outside of the loops, correspond to the additional inputs assigned to parties outside the submeasurement to make the relevant distance-1 contexts coincide across different measurements. Parties with no inputs do not take part in the submeasurement nor are relevant for matching any context and, by convention, receive an input 1.
    }
    \label{fig:measurements-1-round}
\end{figure*}

Let us analyze the parties for which this cancellation fails. For that, we use as an example the submeasurements associated with the target party $t = V(6,4)$, shown in Fig.~\ref{fig:measurements-1-round}. We start with the party located at $H(4,4)$, and the same reasoning also applies for the one in $H(7,4)$. In the four submeasurements, this party is located in a dual loop, and hence receives the input $0$. However, the dual loop appearing in the submeasurements alternates between two possible paths
\begin{align}
    \gamma^\star_1 = \{V'(2,0)&, H'(2,1), H'(2,2), \nonumber \\ &V'(2,2), H'(1,2), H'(1,1)\},    
\end{align}
and 
\begin{equation}
    \gamma^\star_2 = \{V'(2,0), H'(2,1), V'(2,1), H'(1,1)\}.
\end{equation} 
In the two cases, the distance-1 context of $H(4,4)$ is the same, with the exception of the input assigned to the party at $V(5,4)$. In submeasurements 2 and 4, $V(5,4)$ belongs to $\gamma^\star_2$ and therefore receives input $0$, revealing to $H(4,4)$ that the dual loop continues through that neighboring edge. By contrast, in submeasurements 1 and 3, $V(5,4)$ does not belong to any loop and receives input $1$. Thus, after a single round of communication, $H(4,4)$ can determine from the input of $V(5,4)$ which of the two possible loops is being implemented. Consequently, even though $H(4,4)$ receives the same input in all four submeasurements, its context is different, and the corresponding local observables need not coincide. 

To match its context in all four submeasurements, it suffices to change the input assigned to $V(5,4)$ in the measurements $M^{(t)}_1$ and $M^{(t)}_3$ from a $1$ to a $0$, even if the referee disregards its output. In this way, the context visible to $H(4,4)$ in 1 round no longer locally reveals which direction the dual loop follows, as it is the same across the four submeasurements, yielding 
\begin{equation}
    \left( \widetilde{S}_1^{(t)}\widetilde{S}_2^{(t)}\widetilde{S}_3^{(t)}\widetilde{S}_4^{(t)}\right)_{H(4,4)} =\Id.
\end{equation}

We next consider the parties $H(3,3)$, $V(3,2)$, and $V(3,3)$, whose contexts also need to be modified. The same reasoning applies to $H(5,3)$, $V(6,2)$, and $V(6,3)$. In this case, the problem involves three possible primal loops
\begin{equation}
\begin{aligned}
    \gamma_1 = \{H'(1,0)&,V'(2,0),V'(2,1), \\&H'(1,2),V'(1,1),V'(1,0)\}, 
\end{aligned}
\end{equation}
\begin{equation}
    \gamma_2 = \{H'(1,1),V'(2,1),H'(1,2),V'(1,1)\}, 
\end{equation}
and 
\begin{equation}
    \gamma_3 = \{H'(1,0),V'(2,0),H'(1,1),V'(1,0)\}, 
\end{equation}
associated with submeasurements 1, 3, and 4, respectively. Each of the parties $H(3,3)$, $V(3,2)$, and $V(3,3)$ belongs to two of these three loop configurations and is not present in the remaining one. Importantly, $H(3,3)$ lies within the context of both $V(3,2)$ and $V(3,3)$. Hence, these two parties can distinguish between loops from the input assigned to $H(3,3)$. Conversely, $H(3,3)$ can also distinguish them by observing which of $V(3,2)$ and $V(3,3)$ receives input $2$. Therefore, to make the relevant contexts coincide across the corresponding submeasurements, we assign an input $2$ to all three parties in the measurements $M^{(t)}_1$, $M^{(t)}_3$, and $M^{(t)}_4$, regardless of whether their outputs are included in the submeasurement. This makes the direction followed by the loop indistinguishable for these parties, which recovers
\begin{equation}
\left(\widetilde{S}^{(t)}_1\widetilde{S}^{(t)}_2\widetilde{S}^{(t)}_3\widetilde{S}^{(t)}_4\right)_{e} = \Id,
\end{equation}
for $e \in \{ H(3, 3), V(3, 2), V(3, 3) \}$.

With these new input assignments, we covered all the parties whose cancellation relied on extra assumptions about the observables. 

Furthermore, as in the non-communicating case, this pattern is valid for any vertical edge by just translating the coordinates. To certify anticommutation on the horizontal edges, one could use the same idea, applied to the loops in Fig.~\ref{fig:measurements-horizontal}. However, in Section~\ref{sec:without-comm}, these loops were useful because the referee could use them to also certify all the plaquette stabilizers, according to Proposition~\ref{prop: self-testing-criteria}. Since none of the larger loops we defined for the communicating case corresponded to a star or a plaquette stabilizer, this is no longer true. We can therefore simply use the same loops shown in Fig.~\ref{fig:measurements-1-round} rotated by 90°, transforming every horizontal edge into a vertical edge and vice versa, so that the same analysis we performed also applies. This finally implies that, for every target party $t$, we have
\begin{equation}
    \widetilde{S}^{(t)}_1 \widetilde{S}^{(t)}_2\widetilde{S}^{(t)}_3\widetilde{S}^{(t)}_4 = \widetilde{Z}_t^{\Gamma}\widetilde{X}_t^{\Gamma}\widetilde{Z}_t^{\Gamma}\widetilde{X}_t^{\Gamma}.
    \label{eq:anticomm-rel-1-round-bad-context}
\end{equation}

To use Proposition~\ref{prop: self-testing-criteria}, it remains to define submeasurements whose simulation implies that the same observables in Eq.~\eqref{eq:anticomm-rel-1-round-bad-context} also define the stabilizers of the toric code subspace. So far, the context of the operators on the right-hand side of Eq.~\eqref{eq:anticomm-rel-1-round-bad-context} is composed of all the parties being assigned the input $1$. However, a star or plaquette stabilizer containing a certain party $e$ requires that exactly two of the parties in $e$'s distance-1 context also appear in that stabilizer. This makes it impossible for the referee to use a single submeasurement to certify the stabilizers using these observables, as these contexts conflict with each other.

To address this, we call the \textit{canonical contexts} as those associated with the observables whose anticommutation we aim to certify and which also define the toric code stabilizers. We denote these contexts by $\Gamma_X$ and $\Gamma_Z$. A convenient choice for them is to assign
\begin{equation}
    \Gamma_X = \{x_\ell = 0 \mid d(\ell,t) \leq r \},
\end{equation}
\begin{equation}
    \Gamma_Z = \{x_\ell = 2 \mid d(\ell,t) \leq r \},
\end{equation}
as the referee can now compute the statistics of the submeasurements
\begin{equation}
    S^{(v)} = \prod_{\partial e \ni v} X_e^{\Gamma_X} \quad \forall v,
    \label{eq:star-submeasurement}
\end{equation}
and 
\begin{equation}
    S^{(f)} = \prod_{e \in \partial f} Z_e^{\Gamma_Z} \quad \forall f,
    \label{eq:plaquette-submeasurement}
\end{equation}
without any conflict. To define the full measurements $M^{(v)}$ and $M^{(f)}$, we assign the input $1$ to every party not participating in these submeasurements nor in the canonical contexts.

One can see that these inputs can also be assigned in the four measurements certifying the anticommutation relation without interfering with Eq.~\eqref{eq:anticomm-rel-1-round-bad-context}. The complete input assignment defining $M^{(t)}_k$ can be found in Fig.~\ref{fig:measurements-1-round}.

Finally, we can follow the same argument used in the previous section. The four submeasurements $S^{(t)}_k$ still obey
\begin{equation}
    \left \langle S^{(t)}_k \right \rangle = \lambda_k,
\end{equation}
where $\lambda_1 = \lambda_3 = \lambda_4 = -1$ and $\lambda_2 = 1$. Hence, if a PE simulates these submeasurements, then
\begin{equation}
    \left\{\widetilde{Z}^{\Gamma_Z}_t,\widetilde{X}^{\Gamma_X}_t\right\} \ket{\Psi} = 0.
\end{equation}
Furthermore, if it also simulates the submeasurements $S^{(v)}$ and $S^{(f)}$ for every vertex $v$ and face $f$ in the lattice, we have
\begin{align}
    \prod_{\partial e \ni v} \widetilde{X}_e^{\Gamma_X} \ket{\Psi} = +\ket{\Psi}, \\
     \prod_{e \in \partial f} \widetilde{Z}_e^{\Gamma_Z} \ket{\Psi} = +\ket{\Psi}.
\end{align}
Thus, we can use Proposition~\ref{prop: self-testing-criteria} to show that the statistics arising from these submeasurements self-test the toric code subspace and are robust to 1 round of classical communication.

A similar reasoning applies for more rounds of communication. Indeed, in Appendix~\ref{app:generalization} we show that for any lattice of size $L \geq 6r+3$, an analogous construction can be made to self-test the subspace against $r$ rounds of communication.

To conclude, let us denote with $\mathcal{M}_r$ the set of measurements $\bigl\{ \{M^{(t)}_k\}_{k,t},\{M^{(v)}\}_v,\{M^{(f)}\}_f\bigr\}$, where $M^{(t)}_k$ correspond to the generalization of the measurements shown in Fig.~\ref{fig:measurements-1-round} to $r$ rounds of communication. Then, from the above discussion and Appendix~\ref{app:generalization} follows our main result:
\begin{result}
    Let $\ket{\psi}$ be any pure state in the subspace of a toric code of lattice of size $L \geq 6r+3$. Let $\left (\ket{\Psi} ,\widetilde{\mathcal{M}}_r\right)$ be any compatible physical experiment that simulates the reference experiment $\left (\ket{\psi},\mathcal{M}_r\right)$. Then the statistics 
    \begin{equation}
        \left\{\left\{ \big\langle \widetilde{S}^{(t)}_k \big\rangle\right\}_{t,k},\left\{\big\langle \widetilde{S}^{(v)}\big\rangle\right\}_{v},\left\{\big\langle \widetilde{S}^{(f)}\big\rangle\right\}_{f}  \right\}
    \end{equation} 
    provide a self-testing protocol for the toric code subspace robust to $r$ rounds of classical communication. 
    \label{res:comm-res}
\end{result}

Notably, this construction is asymptotically optimal. Indeed, let $L_{\min}(r)$ denote the minimum side length of a toric code lattice for which the code subspace can be self-tested when the parties are allowed $r$ rounds of communication. The analytical construction of Result~\ref{res:comm-res} already establishes the upper bound $L_{\min}(r)=\mathcal{O}(r)$. Conversely, if $r$ were allowed to grow faster than the lattice size, there would exist a lattice for which the toric code could be self-tested even when every party has access to the inputs of all the other parties. However, such a construction is impossible, since in this case any probability distribution over the outcomes could be reproduced classically using shared randomness, and thus $L_{\min}(r)=\Omega(r)$. Combining the two bounds shows that our construction is asymptotically optimal.

\section{Noise robustness}
\label{sec:noise}
The self-testing protocols presented above all rely on reproducing exactly the expectation values of an ideal reference experiment. However, in any realistic experimental implementation, measurement statistics are inevitably affected by noise and experimental imperfections. Thus, to obtain experimentally meaningful results, one must study the robustness of the self-testing protocol to noise. The goal is to show that a small deviation from the ideal simulation conditions in Definition~\ref{def:simulation} still guarantees that the realized resources remain close to those of the reference experiment. 

To do so, we adapt Definition~\ref{def:simulation} and say that the PE $\epsilon$-simulates the RE if we have
\begin{equation}
    \left \lvert \bra{\Psi}\widetilde{S}_\mathbf{x} \ket{\Psi}  - \bra{\psi} {S}_\mathbf{x} \ket{\psi} \right \rvert \leq \epsilon,
\end{equation}
for all the submeasurements considered in the protocol.

To quantify how much a noisy experiment deviates from the ideal case, we use the distance of a pure state $\ket{\psi}$ to a subspace $\mathcal{C}$ given by $d(\ket{\psi},\mathcal{C}) = \left \lVert (\Id - \Pi_\mathcal{C}) \ket{\psi}\right \rVert_2$, where $\Pi_\mathcal{C}$ corresponds to the projector onto the subspace $\mathcal{C}$. For the toric code, this given by 
\begin{equation}
    \Pi_{TC} = \prod_v \frac{(\Id + A_v)}{2}\prod_f \frac{(\Id + B_f)}{2}.
\end{equation}
The following proposition gives a bound on this distance.
\begin{proposition}
Let
    \begin{equation}
    \left \{S^{(t)}_k\right\}_{k,t},\quad \left \{S^{(v)}\right\}_{v}, \quad \text{and}\quad  \left \{S^{(f)}\right\}_{f}    
    \end{equation}
    be the submeasurements used in Results~\ref{res:not-com-res} and \ref{res:comm-res}.
    For any PE $\epsilon$-simulating these submeasurements, there exists an isometry $\Phi = (\bigotimes_i \Phi_i)\otimes \Id_P$ such that
\begin{align}
     \lVert \Phi  \ket{\Psi}  - (\Pi_{TC}\otimes \Id_P ) \Phi \ket{\Psi}  \rVert_2 \leq \delta(\epsilon),
\end{align}
where
\begin{equation}
    \delta(\epsilon) = L^2\sqrt{\epsilon} \left(\sqrt{2} + 8\right). 
\end{equation}
\label{prop:robust-prop}
\end{proposition}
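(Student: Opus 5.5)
We follow the exact proof, replacing each equality by a norm bound and tracking how the errors accumulate. Since every $\widetilde{S}$ is a product of commuting $\pm1$-valued observables, it is a Hermitian unitary. For such an operator, $|\langle\Psi|\widetilde{S}|\Psi\rangle-\lambda|\le\epsilon$ with $\lambda=\pm1$ gives
\begin{equation}
\lVert(\widetilde{S}-\lambda)\ket{\Psi}\rVert_2^2 = 2-2\lambda\langle\widetilde{S}\rangle \le 2\epsilon .
\end{equation}
Hence each submeasurement stabilizes $\ket{\Psi}$ up to an error $\sqrt{2\epsilon}$ in norm.

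\textbf{Approximate anticommutation.} By unitarity, we can peel off the four submeasurements one at a time:
\begin{equation}
\lVert(\widetilde{S}_1\widetilde{S}_2\widetilde{S}_3\widetilde{S}_4 - \lambda_1\lambda_2\lambda_3\lambda_4)\ket{\Psi}\rVert_2 \le 4\sqrt{2\epsilon}.
\end{equation}
The operator identity Eq.~\eqref{eq:untrusted-alt}, or Eq.~\eqref{eq:anticomm-rel-1-round-bad-context} in the communicating case with canonical contexts, still holds exactly, because it is purely algebraic. Multiplying by the unitary $\widetilde{X}\widetilde{Z}$ then gives $\lVert\{\widetilde{Z}_t,\widetilde{X}_t\}\ket{\Psi}\rVert_2\le 4\sqrt{2\epsilon}$ for every $t$. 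The stabilizer submeasurements $S^{(v)}$, $S^{(f)}$, and $S^{(t)}_2$ in the non-communicating case, give $\lVert(\widetilde{A}_v-1)\ket{\Psi}\rVert_2,\ \lVert(\widetilde{B}_f-1)\ket{\Psi}\rVert_2\le\sqrt{2\epsilon}$, where $\widetilde{A}_v,\widetilde{B}_f$ denote the products of the physical observables.

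\textbf{Isometry.} For $\Phi_i$ we take the standard SWAP isometry built from $\widetilde{X}_e,\widetilde{Z}_e$, as in the proof of Proposition~\ref{prop: self-testing-criteria} (Appendix~\ref{app:proof-prop1}). Two preliminary steps are needed. First, $\widetilde{Z}_e$ is regularized to a unitary $\widetilde{Z}'_e$ that exactly anticommutes with $\widetilde{X}_e$ up to an error controlled by the anticommutator; the usual Jordan-lemma argument gives $\lVert(\widetilde{Z}'_e-\widetilde{Z}_e)\ket{\Psi}\rVert_2\lesssim\lVert\{\widetilde{Z}_e,\widetilde{X}_e\}\ket{\Psi}\rVert_2$. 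Second, we use the exact intertwining relations $\Phi\widetilde{X}_e=X_e\Phi$ and $\Phi\widetilde{Z}'_e=Z_e\Phi$. From these,
\begin{equation}
\lVert(A_v-1)\Phi\ket{\Psi}\rVert_2\le\lVert(\widetilde{A}_v-1)\ket{\Psi}\rVert_2,
\end{equation}
and for $B_f$ the bound picks up an extra $\sum_{e\in\partial f}\lVert(\widetilde{Z}'_e-\widetilde{Z}_e)\ket{\Psi}\rVert_2$. That sum is at most $4\cdot O(\sqrt{\epsilon})$ after peeling factors off one by one with the triangle inequality and unitarity. This step, together with extracting a clean per-edge constant, is where the coefficient $8$ will come from, and it is the main technical obstacle. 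The regularization must be set up so that the $\widetilde{Z}$ operators inside each plaquette can be swapped one at a time while always acting on a vector close to $\ket{\Psi}$; this uses commutation of observables of distinct parties. Tracking this carefully should give a per-stabilizer error of at most $(\sqrt2+8)\sqrt{\epsilon}$.

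\textbf{Projector.} We write $\Pi_{TC}=\prod_{j=1}^{N}P_j$, where the $N=2L^2$ commuting projectors are $P_j=(\Id+G_j)/2$ with $G_j\in\{A_v,B_f\}$. Telescoping gives
\begin{equation}
\lVert(\Id-\Pi_{TC})\phi\rVert_2\le\sum_j\lVert(\Id-P_j)\phi\rVert_2=\tfrac12\sum_j\lVert(G_j-1)\phi\rVert_2,
\end{equation}
since every partial product of the $P_j$ is a contraction. Using $\phi=\Phi\ket{\Psi}$ and the per-stabilizer error bound gives $\tfrac12\cdot 2L^2(\sqrt2+8)\sqrt{\epsilon}=\delta(\epsilon)$. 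In the communicating case only the canonical-context observables enter, so the same computation applies verbatim.
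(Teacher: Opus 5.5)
Your overall architecture matches the paper's: a $\sqrt{2\epsilon}$ stabilization bound per submeasurement, the anticommutator bound $4\sqrt{2\epsilon}$ from the exact operator identity, SWAP isometries, and the telescoping of $\Pi_{TC}$ that produces the factor $\tfrac12$. The gap is in the step that produces the constant $8$, which is unproven and misstated.

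\textbf{The constant.} You leave the Jordan-regularization constant as $\lesssim$ and then assert that careful tracking gives a per-stabilizer error of $(\sqrt2+8)\sqrt\epsilon$. Your scheme cannot give that.
\begin{itemize}
\item Both intertwinings are exact, so each star carries error exactly $\sqrt{2\epsilon}$.
\item Each plaquette picks up the extra term $\sum_{e\in\partial f}\lVert(\widetilde Z'_e-\widetilde Z_e)\ket{\Psi}\rVert_2$.
\end{itemize}
Take a two-dimensional Jordan block with $\widetilde X=\sigma_z$ and $\widetilde Z=\cos\theta\,\sigma_z+\sin\theta\,\sigma_x$, where $\sin\theta\ge 0$, and the optimal choice $\widetilde Z'=\sigma_x$. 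Then $(\widetilde Z-\widetilde Z')^2=(2-2\sin\theta)\Id$ and $\{\widetilde X,\widetilde Z\}^2=4\cos^2\theta\,\Id$. Their ratio is $1/(2+2\sin\theta)\in[\tfrac14,\tfrac12]$.

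So the sharp constant is $\tfrac{1}{\sqrt2}$, and it is never better than $\tfrac12$. Each plaquette therefore gets $\sqrt{2\epsilon}+4\cdot\tfrac{1}{\sqrt2}\cdot 4\sqrt{2\epsilon}=(\sqrt2+16)\sqrt\epsilon$. Even the best-case constant $\tfrac12$ gives $9\sqrt{2\epsilon}$, well above $(\sqrt2+8)\sqrt\epsilon$.

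The stated $\delta(\epsilon)$ is recovered only by summing the two stabilizer types separately: $\tfrac12 L^2\bigl[\sqrt2+(\sqrt2+16)\bigr]\sqrt\epsilon=L^2(\sqrt2+8)\sqrt\epsilon$. You would also need a local ancilla for the one-dimensional Jordan blocks, where no $\widetilde Z'_e$ anticommuting with $\widetilde X_e$ exists.

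\textbf{Comparison with the paper's route.} The regularization is avoidable, and the paper does not use it.
\begin{itemize}
\item For the SWAP isometry built directly from $\widetilde X_e,\widetilde Z_e$, the relation $Z_e\Phi_e=\Phi_e\widetilde Z_e$ holds as an exact operator identity, with no anticommutation needed.
\item A two-line computation gives $(X_e\Phi_e-\Phi_e\widetilde X_e)\ket{\Psi}=-\ket{0}\otimes\tfrac12\{\widetilde Z_e,\widetilde X_e\}\ket{\Psi}+\ket{1}\otimes\tfrac12\widetilde X_e\{\widetilde Z_e,\widetilde X_e\}\ket{\Psi}$. Its norm is $\tfrac{1}{\sqrt2}\lVert\{\widetilde Z_e,\widetilde X_e\}\ket{\Psi}\rVert_2\le 4\sqrt\epsilon$.
\item The four edges of a star are peeled off using only that this local error operator commutes with the isometries and unitaries acting on other edges.
\end{itemize}
The paper thus places $(\sqrt2+16)\sqrt\epsilon$ on each star and $\sqrt{2\epsilon}$ on each plaquette. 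This is the mirror image of your accounting, with the same total. Your route closes once you prove the $\tfrac{1}{\sqrt2}$ bound and do the bookkeeping by stabilizer type. The paper's version is shorter and needs neither Jordan's lemma nor ancillas.
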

The proof of Proposition~\ref{prop:robust-prop} is a straightforward application of the techniques used in \cite{McKague2014} and can be found in Appendix \ref{app:proof-noise}.

\section{Algorithmic method for finding submeasurements \label{sec:MILP}}

In this section, we show how to systematically search for submeasurements whose statistics can be used to self-test a resource even when the parties have access to classical communication. We first present the formulation for the toric code, which can be adapted to other stabilizer subspaces admitting propositions analogous to Proposition~\ref{prop: self-testing-criteria}. We then illustrate this generality with explicit constructions for the surface-code subspace and the two-dimensional cluster state.

We begin by noting that the structure of the proof of Results \ref{res:not-com-res} and \ref{res:comm-res} is relatively straightforward. In both cases, the strategy consisted of finding measurements $M^{(t)}_k$ with submeasurements $S^{(t)}_k$ composed of products of the toric code stabilizers, up to a minus sign, satisfying
\begin{align}
    &\left\langle S^{(t)}_k\right\rangle = \lambda_k \quad \lambda_k \in \{\pm 1\}, \label{eq:eig-cond}\\
     &\prod_{k=1}^{4} \left(S^{(t)}_k\right)_t =  Z^{\Gamma_Z}_t X^{\Gamma_X}_t Z^{\Gamma_Z}_t X^{\Gamma_X}_t,\label{eq:op-alt} \\
     &\prod_{k=1}^4 \left(S^{(t)}_k\right)_e = \Id \quad \forall e \neq t, \label{eq:op-const}\\
    &\prod_{k=1}^4 \lambda_k = -1. \label{eq:prd-const}
\end{align}
The key observation is that Eq.~\eqref{eq:op-const} does not rely on the specific algebraic structure of Pauli operators, but only on the fact that the local observables are self-inverse, i.e., $\mathcal{A}^2=\Id$. Hence, the cancellation of the non-target parties can be understood purely in terms of how identical observables are paired across the four submeasurements, and the same mechanism can be applied to the uncharacterized observables of the PE. In particular, we only have three possible forms for Eq.~\eqref{eq:op-const}:
\begin{enumerate}[itemsep=-3pt]
    \item Consecutive pairings
    \begin{equation}
        \mathcal{A} \mathcal{A} \mathcal{B}\mathcal{B} = \Id;
        \label{eq:consecutive-pair}
    \end{equation}
    \item Identity-separated pairings
    \begin{equation}
        \mathcal{A} \Id \mathcal{A}\Id = \Id \quad \text{ or }  \quad \Id \mathcal{A} \Id \mathcal{A} = \Id;
        \label{eq:id-sep-pair}
    \end{equation}
    \item Nested pairings
    \begin{equation}
        \mathcal{A} \mathcal{B} \mathcal{B} \mathcal{A} = \Id;
        \label{eq:nested-pair}
    \end{equation}
\end{enumerate}
where $\mathcal{A},\mathcal{B}\in\{\Id, X^{\Gamma}_e,Y^{\Gamma}_e,Z^{\Gamma}_e\}$, and the identities appearing in the left-hand side of Eq.~\eqref{eq:id-sep-pair} arise from the definition of submeasurements, in which the outputs of some parties in the measurements can be disregarded by the referee. 

Finding submeasurements satisfying Eqs.~\eqref{eq:op-alt} and \eqref{eq:op-const} then reduces to a linear constraint problem over the parties' input assignments in a given measurement. To illustrate this, let $M^{(t)}_k$ denote the four measurements which, together with the sets $I^{(t)}_k$, define the submeasurements 
\begin{equation}
    S^{(t)}_k = \left. M^{(t)}_k \right \rvert_{I^{(t)}_k}
\end{equation}
satisfying Eqs.~\eqref{eq:eig-cond}-\eqref{eq:prd-const}, i.e., they certify the anticommutation relation for the target party $t$. Every non-target party $e$ appearing in at least one of the sets
$I_k^{(t)}$ must then satisfy Eq.~\eqref{eq:op-const}, which can be
enforced by requiring its four local observables to realize one of the
allowed pairings above.

For each measurement $M_k^{(t)}$, let $x_k^{(t)}(e)$ denote the input
assigned to party $e$, and introduce a binary variable
$\alpha_k^{(t)}(e)$ indicating whether $e$ participates in the
corresponding submeasurement
\begin{equation}
    \alpha_k^{(t)}(e) = \left\{ 
    \begin{aligned}
        1, \quad e \in I^{(t)}_k,\\
        0, \quad e \not\in I^{(t)}_k.
    \end{aligned}
    \right.
\end{equation}

Therefore, satisfying a pairing like $\mathcal{A}\Id\mathcal{A}\Id$ for a given party $e$ can be done by enforcing the following two constraints over these variables. First, one needs to enforce that the party $e$ appears in submeasurements 1 and 3, and does not appear in the other two
\begin{align}
    &\alpha_1^{(t)}(e) = \alpha_3^{(t)}(e) = 1, \\
    & \alpha_2^{(t)}(e) = \alpha_4^{(t)}(e) = 0.
    \label{eq:alpha-const}
\end{align}
Secondly, the operators appearing in submeasurements 1 and 3 must be the same. This amounts to matching not only the input of the party $e$, but also matching the input of every party at distance $r$ from $e$, i.e.
\begin{align}
    x_1^{(t)}(\ell) = x_3^{(t)}(\ell) \quad \forall \ell : d(e,\ell) \leq r.
    \label{eq:gamma-const}
\end{align}

Similar arguments can also be made for the other pairings, and the explicit linear formulations are given in Appendix~\ref{app:ILP-Formulation}. The choice of pairing itself can also be incorporated as linear constraints by introducing auxiliary binary variables enforcing that one allowed pairing is selected for each party appearing in the submeasurements. Finally, at the target party, Eq.~\eqref{eq:op-alt} is imposed by fixing the inputs in its distance-$r$ context according to the canonical context.

The remaining conditions, Eqs.~\eqref{eq:eig-cond} and \eqref{eq:prd-const}, can likewise be expressed through similar linear constraints. For simplicity, we do not write these constraints explicitly in the main text and instead refer to them in the form of Eqs.~\eqref{eq:eig-cond}-\eqref{eq:prd-const}. Their complete formulation can be found in Appendix~\ref{app:ILP-Formulation}.

The constraints above address the first condition of Proposition~\ref{prop: self-testing-criteria}, namely the certification of local anticommutation relations. To complete the formulation, we now turn to the second condition, which requires certifying the stabilizers of the toric code. So far, we only needed one submeasurement per stabilizer to certify them. This was due to the convenient choice of the canonical context that allowed all involved observables to be measured in the same submeasurement without conflicting contexts. However, in principle, one could use different canonical contexts for every party, and which might not obey this property. In those cases, a strategy similar to the one used for the anticommutation certificate can be used, where we define more measurements $M^{(v)}_j$ ($M^{(f)}_m$) and their corresponding submeasurements $S^{(v)}_j$ ($S^{(f)}_m$) that satisfy analogous constraints to those in Eqs.~\eqref{eq:eig-cond}-\eqref{eq:prd-const}, but for a given star (plaquette) stabilizer located at a vertex $v$ (face $f$). By the same argument as above, these can also be cast as linear constraints analogous to the previous case.

Finally, using only four submeasurements at each step is also not special; it is simply the minimum number needed to find the anticommutation relations. It could be the case that there exists more intricate loop constructions using more submeasurements that allow for solutions which are not possible using only four submeasurements.  We therefore can specify the parameters $K_t$, $J_v$, and $M_f$ corresponding to the numbers of submeasurements used to construct the anticommutation certificate for target $t$, the star stabilizer certificate for vertex $v$, and the plaquette stabilizer certificate for face $f$, respectively. Notably, in these cases, one also needs to adjust the possible pairings accordingly.

Therefore, we can search for submeasurements capable of self-testing by solving the following problem:
\begin{align}
    \textbf{find: }& 
    M^{(t)}_k,I^{(t)}_k, M^{(v)}_j,I^{(v)}_j, M^{(f)}_m,I^{(f)}_m
    \nonumber\\[1em]
    \textbf{such that:}& 
    \nonumber\\[2em]
    \substack{
        \textbf{Qubit}\\
        \textbf{certification}
    }&
    \left\{
    \begin{aligned}
        &  S^{(t)}_k = \left.\left(M^{(t)}_k \right)\right\rvert_{I^{(t)}_k} \quad \forall k \in [K_t] \\
        &\left\langle S^{(t)}_k \right\rangle = \lambda_k^{(t)} \quad \lambda_k^{(t)} \in \{\pm 1\} ,
        \\
        &\prod_{k=1}^{K_t} \left(S^{(t)}_k\right)_t = Z^{\Gamma_Z}_t X^{\Gamma_X}_t Z^{\Gamma_Z}_t X^{\Gamma_X}_t,
        \\
        &\prod_{k=1}^{K_t} \left(S^{(t)}_k\right)_e = \Id \quad \forall e \neq t,
        \\
        &\prod_{k=1}^{K_t} \lambda_k^{(t)} = -1.
    \end{aligned}
    \right. \forall t,
    \nonumber\\
        \substack{
            \textbf{Star stabilizer} \\
        \textbf{certification}
    }&
    \left\{
    \begin{aligned}
        &  S^{(v)}_j = \left.\left(M^{(v)}_j \right)\right\rvert_{I^{(v)}_j} \quad \forall j \in [J_v] \\
        & \left\langle S^{(v)}_j \right\rangle = \mu_j^{(v)}
        \quad \mu_j^{(v)} \in \{\pm 1\},
        \\
        & \prod_{j=1}^{J_v} \left(S_j^{(v)}\right)_e = X_e^{\Gamma_X}
    \quad \forall \partial e \ni v, 
    \\
    & 
     \prod_{j=1}^{J_v} \left(S_j^{(v)}\right)_e = \Id
    \quad \forall \partial e \not\ni v
    \\
    &
    \prod_{j=1}^{J_v} \mu_j^{(v)} = +1,
        \\
    \end{aligned}
    \right. \forall v,
    \nonumber\\
   \substack{
        \textbf{Plaquette}\\
        \textbf{stabilizer}\\ \textbf{certification}
    }&
    \left\{
    \begin{aligned}
        &  S^{(f)}_m = \left.\left(M^{(f)}_m \right)\right\rvert_{I^{(f)}_m} \quad \forall m \in [M_f] \\
        & \left\langle S^{(f)}_m \right\rangle = \chi_m^{(f)}
        \qquad \chi_m^{(f)} \in \{\pm 1\},
        \\
        & \prod_{m=1}^{M_f} \left(S_m^{(f)}\right)_e = Z_e^{\Gamma_Z}
    \qquad \forall e \in \partial f, 
    \\
    & 
     \prod_{m=1}^{M_f} \left(S_m^{(f)}\right)_e = \Id
    \quad \forall e \not\in \partial f,
    \\
    &
    \prod_{m=1}^{M_f} \chi_m^{(f)} = +1,
        \\
    \end{aligned}
    \right. \forall f.
\label{eq:opt-problem}
\end{align}

We emphasize that all constraints in this formulation can be written as linear constraints on the inputs assigned to each measurement. Hence, Eq.~\eqref{eq:opt-problem} defines an integer linear program (ILP), whose formulation can be found in Appendix~\ref{app:ILP-Formulation}.
This provides a systematic way of searching for submeasurements capable of self-testing the toric code using standard ILP solvers, such as Gurobi \cite{gurobi}. As we show next, this ILP can also be easily adapted to self-test other stabilizer subspaces.

Furthermore, in Appendix \ref{app:proof-noise}, we prove a more general form of Proposition~\ref{prop:robust-prop}, which can be used to show that every solution of Eq.~\eqref{eq:opt-problem} is noise robust. More precisely, its robustness is

\begin{equation}
    \delta(\epsilon) =  \frac{\sqrt\epsilon}{2} \left[\sum_v\left(\sqrt{2}J_v + \sum_{\partial e \ni v} K_e\right) + \sqrt{2} \sum_f M_f\right].
\end{equation}

In the remainder of this section, we illustrate the usefulness of this formulation through some applications.

\subsection{Minimal lattice size for bounded communication}
\label{sec:toric-mip}

\begin{figure}
    \centering
    \includegraphics[width=0.95\linewidth]{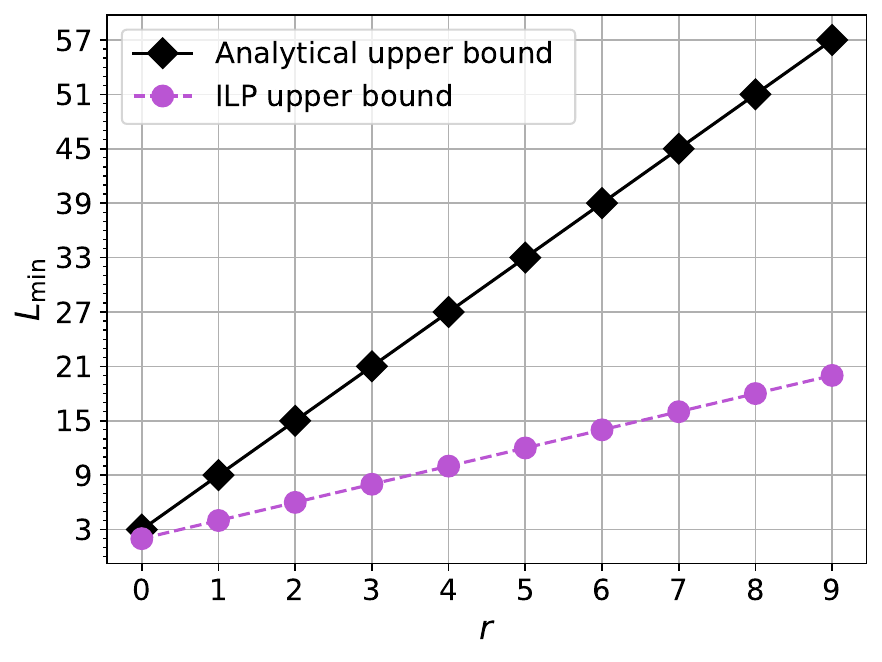}
    \caption{Upper bounds on the minimum lattice size required for self-testing against $r$ rounds of classical communication. Black diamonds correspond to the analytical construction presented in Section~\ref{sec:comm}, while purple circles indicate the minimum lattice sizes obtained numerically via the integer linear program. The corresponding explicit constructions are given in Appendix~\ref{app:minimal_lattices} and in \cite{froes2026code}.}
    \label{fig:Lmin-plot-tc}
\end{figure}

In Section~\ref{sec:comm}, we showed a protocol to self-test the toric code subspace against an arbitrary classical communication distance. However, the price for tolerating a larger communication distance was an increase in the lattice size. This naturally raises the question of whether the lattice sizes required by our construction are optimal.

Although we already established that the construction in Result~\ref{res:not-com-res} has the optimal scaling, it does not imply that it achieves the smallest possible lattice size for each fixed value of $r$. The ILP formulation in Eq.~\eqref{eq:opt-problem} then gives us a tool to find a better upper bound on $L_{\min}$ for these cases, by searching for solutions on progressively smaller lattices.

To fully solve the problem in Eq.~\eqref{eq:opt-problem}, one would need to solve for all the anticommutation and stabilizer certificates simultaneously. To make the computation manageable, we reduce the feasible set by imposing that every party lying on a horizontal edge has the same canonical context, while every party lying on a vertical edge also shares the same canonical context, which may differ from the horizontal one. Therefore, one needs to search for only two anticommutation certificates, taking a horizontal and a vertical edge as the respective target parties. These two canonical contexts are also imposed in the constraints for the star and plaquette stabilizers. It is then sufficient to find a single certificate for an arbitrary vertex and an arbitrary face, respectively, since all remaining certificates can be obtained by translation symmetry of the lattice. Finally, we restrict our search to using only four submeasurements per certificate.

The results are shown in Fig.~\ref{fig:Lmin-plot-tc}, where we compare the resulting upper bounds on $L_{\min}(r)$ with those obtained from the analytical construction. Despite this restricted search space, the ILP is able to find smaller lattices for every value of $r$ tested, showing that the explicit analytical construction is not optimal. 

Perhaps unsurprisingly, for the non-communicating case we find a solution on the smallest possible lattice, namely $L=2$. This construction, however, crucially relies on the periodic boundary conditions of the lattice (see Fig.~\ref{fig:patterns_L2} in Appendix~\ref{app:minimal_lattices}). Consequently, extending this solution in the same way as in Section~\ref{sec:comm} would yield solutions only for lattices whose size is exactly of the form $L=4r+2$. In particular, unlike the constructions in Results~\ref{res:not-com-res} and~\ref{res:comm-res}, this solution could not be extended to arbitrary larger lattices.

Furthermore, all solutions found have a lattice size of the form $L=2r+2$, which we conjecture to be the minimum possible size for each communication distance $r$. All constructions are available in \cite{froes2026code}, and those with $L\leq12$ are shown in Appendix~\ref{app:minimal_lattices}.

\subsection{Surface Codes}
\label{sec:surface-mip}

\begin{figure}[t]
    \centering
    \includegraphics[width=0.75\linewidth]{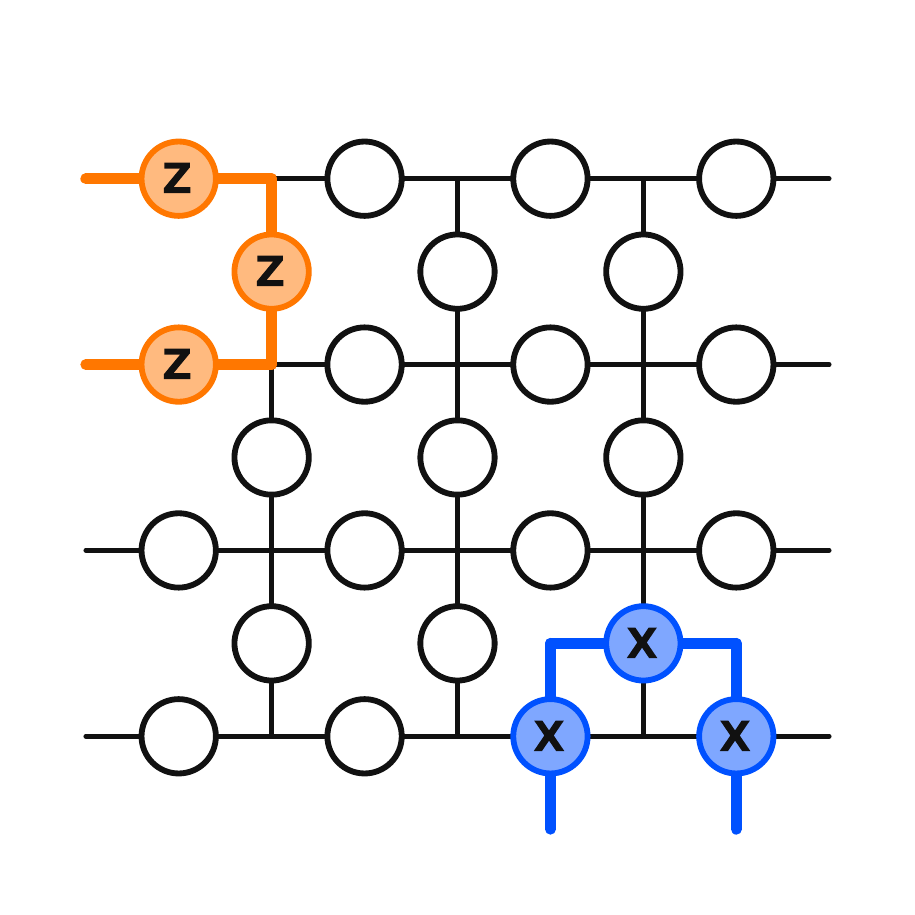}
    \caption{$4\times4$ surface code with rough boundaries on the left and right and smooth boundaries on the top and bottom. Stabilizers located at the boundaries correspond to a truncated plaquette at a rough boundary and a truncated star at a smooth boundary. As a consequence, primal (dual) loops can be open on rough (smooth) boundaries.}
    \label{fig:boundaries}
\end{figure}

So far, we only needed to solve the problem for the party on one edge, and the results for the other parties could be inferred from the translational symmetry of the lattice. However, we can also investigate some cases where this symmetry is lost, and a new set of submeasurements is needed for every party. 

To illustrate this, we consider another quantum error-correcting code closely related to the toric code, namely the surface code \cite{bravyi1998}. Its construction is very similar to that of the toric code, with the stabilizer again given by Eqs.~\eqref{eq:star-def} and \eqref{eq:plaquette-def}. The main difference is that, instead of imposing periodic boundary conditions, we introduce two distinct types of open boundaries, shown in Fig.~\ref{fig:boundaries}. The so-called \textit{smooth boundary} truncates the star stabilizers associated with vertices lying on that boundary. In particular, a vertex $v$ on a smooth boundary has only three edges connected to it, and the corresponding star stabilizer is (see the blue path in Fig.~\ref{fig:boundaries})
\begin{equation}
    A_v = X_{e_1} X_{e_2} X_{e_3}, \quad \partial {e_1},\partial {e_2}, \partial {e_3} \ni v.
\end{equation}
Conversely, a \textit{rough boundary} truncates the plaquette stabilizers adjacent to it, so that a plaquette $f$ touching a rough boundary is supported only on the three edges of its boundary that remain within the lattice (see the orange path in Fig.~\ref{fig:boundaries})
\begin{equation}
    B_f = Z_{e_4} Z_{e_5} Z_{e_6}, \quad {e_4},{e_5},{e_6} \in \partial f.
\end{equation}
In this way, the contractible loops defined for the toric code do not need to be closed anymore. A primal loop going over a rough boundary can be represented as an open path in the primal lattice. Similarly, a dual loop going over a smooth boundary is represented by an open path in the dual lattice (see Fig.~\ref{fig:boundaries}). The surface code subspace is then defined as the simultaneous +1 eigenspace of all stabilizers associated with this modified lattice. 

\begin{figure*}[t]
\centering
\includegraphics[width=\textwidth]
{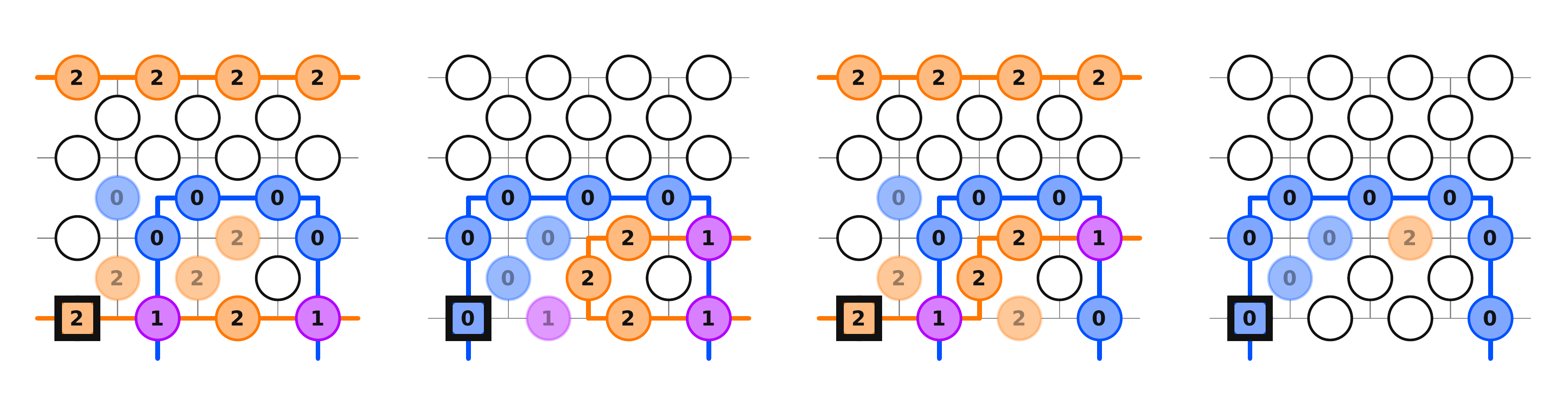}
\caption{ A $4\times4$ surface code with the inputs given by the referee to each party in measurements 1 through 4, from left to right, which are used for certifying the anticommutation relation for the party $H(0,0)$, highlighted with a square. The boundaries correspond to rough boundary conditions on the left and right, and smooth boundaries on the top and bottom.}
\label{fig:surfacecode}
\end{figure*}

Whereas the toric code encodes two logical qubits, the surface code on a square patch with alternating rough and smooth boundary conditions encodes a single logical qubit. Nonetheless, the lack of periodic boundary conditions allows the surface code to be implemented experimentally in a planar topology, requiring only 2D local interactions \cite{Horsman2012}. This has led to its use in several recent experimental demonstrations on superconducting qubit processors \cite{Krinner2022,Google2023,Google2024}.

For self-testing the surface code, we can once again use Proposition~\ref{prop: self-testing-criteria}, since its proof does not rely on any of the differences between the surface code and toric code subspaces. Moreover, the ILP formulation presented in Eq.~\eqref{eq:opt-problem} can also be applied directly, provided that the definitions of smooth and rough boundaries are taken into account. However, these boundaries break translational symmetry, so the argument used in the previous section no longer applies. In particular, it is no longer sufficient to search for only two certificates for the anticommutation relations and one for each type of stabilizer. Instead, parties at different positions relative to the boundaries may require different certificates, and each case must therefore be considered separately. Finding the corresponding submeasurements would thus require solving Eq.~\eqref{eq:opt-problem} in its full generality. Again, to make the search more tractable, we restrict the feasible set. This time, we fix the same canonical context presented in Section~\ref{sec:comm} for every party, independently of its orientation. In this way, we only need to search for the set of submeasurements certifying the anticommutation relation for each target party, while the plaquette and star stabilizers can be certified using the analogues of Eqs.~\eqref{eq:star-submeasurement} and~\eqref{eq:plaquette-submeasurement}. Moreover, the certificate associated with each target party is independent of the others and can therefore be searched for separately. Finally, we again restrict our search to only use $K_t=4$.

As a result, for each party we identified four submeasurements capable of certifying the anticommutation relation in a $4\times4$ surface code with smooth boundaries on the top and bottom, and rough on the left and right borders. More importantly, these submeasurements are also robust to 1 round of classical communication. Figure~\ref{fig:surfacecode} shows one such certificate for a party $H(0,0)$ located on an edge of a rough boundary. The submeasurements for the other parties can be found in \cite{froes2026code}. Therefore, this set of submeasurements, along with submeasurements analogous to the ones in Eqs.~\eqref{eq:star-submeasurement} and \eqref{eq:plaquette-submeasurement}, can be used to self-test this surface code, even when allowing a single round of classical communication to the untrusted parties.

\subsection{Graph states}
\label{sec:graph-states-mip}

Although the ILP framework we introduced was motivated by the toric and surface codes, the construction itself relies on features that are not specific to these subspaces. This suggests that similar optimization techniques may be useful in other self-testing problems involving stabilizers composed of Pauli $X$ and $Z$ observables. 

To exemplify, we demonstrate its application to graph states. Formally, a graph state $\ket{G}$ 
is associated with an undirected graph $G=(V,E)$, where each vertex $v\in V$ corresponds to a qubit. For every vertex $v$, we also define the stabilizer
\begin{equation}
G_v = X_v \prod_{u \in N(v)} Z_u,
\end{equation}
where $N(v)$ denotes the neighborhood of the vertex $v$. The state $\ket{G}$ is then the unique state stabilized by all of these operators
\vspace{1em}
\begin{equation}
    G_v \ket{G} = + \ket{G} \quad \forall v \in V. \\[1em]
\end{equation}

In \cite{Meyer2026}, the authors investigated the self-testing of graph states in the same communication scenario considered here, where the communication graph is given by the underlying graph of the state. For some families of symmetric graph states, the authors were able to construct protocols that self-test the states directly. In particular, they provide analytical constructions for cycle graph states and graph states defined on the honeycomb lattice, which relied on the fact that every party was equivalent under some symmetry. For arbitrary graph states, however, their general construction is indirect and requires additional auxiliary qubits that are not part of the target state. More precisely, the target graph is first embedded into a larger \textit{inflated graph}, in which each edge is replaced by a chain of auxiliary vertices whose length depends on the allowed communication range. The corresponding inflated graph state can then be self-tested in the presence of classical communication, after which the auxiliary qubits are projectively measured out through a procedure referred to as \textit{deflation}, leaving the desired graph state on the original vertices. Thus, although this construction applies to arbitrary graph states, it requires additional parties and physical subsystems beyond those belonging to the state that one ultimately wants to certify.

\begin{figure*}
    \centering
    \includegraphics[width=\textwidth]{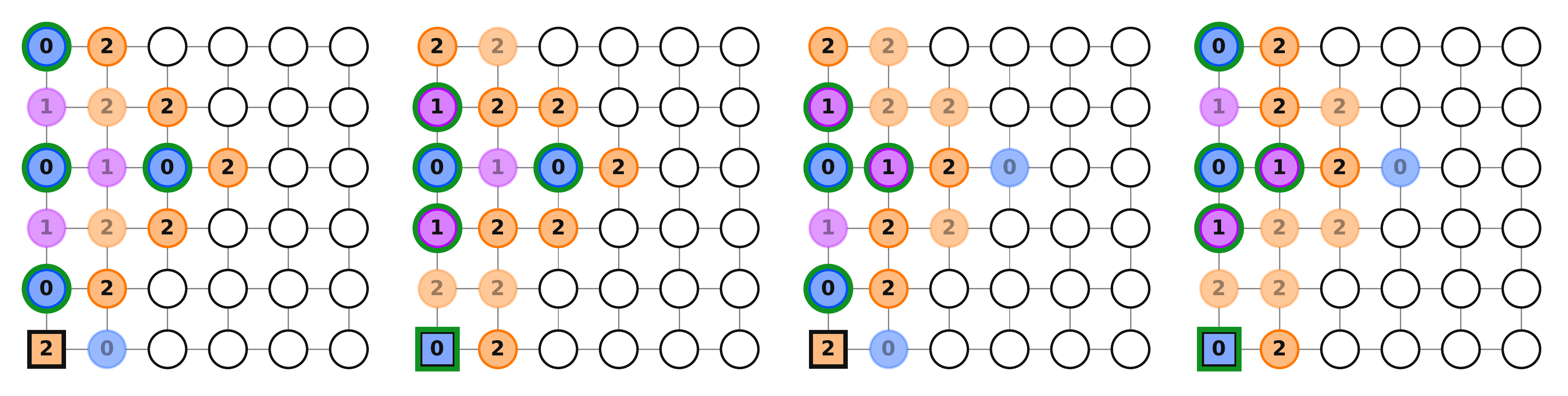}
    \caption{Submeasurements used to certify the anticommutation relation for the party associated with vertex $(0,0)$, highlighted by a square, in a $6\times 6$ cluster state. Vertices marked by thicker green circles, together with the highlighted square when applicable, indicate the stabilizer $G_v$ selected in the corresponding submeasurement. The parties participating in the submeasurement correspond to those being acted on by the product of these stabilizers.}
    \label{fig:cluster-states}
\end{figure*}

Here, we extend the results of \cite{Meyer2026} by showing that certain graph states can be self-tested directly, without requiring either auxiliary vertices or the symmetries exploited in their direct constructions.
The analogue of Proposition~\ref{prop: self-testing-criteria} for graph states was already established in \cite{Meyer2026}.
We can therefore formulate an ILP analogous to Eq.~\eqref{eq:opt-problem} for them by replacing the (sub)measurements used to certify the star and plaquette stabilizers, together with their associated constraints, with (sub)measurements that certify the corresponding graph state stabilizers. Meanwhile, the constraints associated with the anticommutation certificate remain conceptually the same; one just needs to adapt them to the geometrical constraints of the graph state stabilizers (see Appendix \ref{app:graph-state} for a discussion).  More precisely, the new constraints are given by
\begin{align}
    \left\{
    \begin{aligned}
        &  S^{(v)}_j = \left.\left(M^{(v)}_j \right)\right\rvert_{I^{(v)}_j} \quad \forall j \in [J_v] \\
        & \left\langle S^{(v)}_j \right\rangle = \mu_j^{(v)}
        \quad \mu_j^{(v)} \in \{\pm 1\},
        \\
        & \prod_{j=1}^{J_v} \left(S_j^{(v)}\right)_v = X_v^{\Gamma_X}, 
        \\
        & \prod_{j=1}^{J_v} \left(S_j^{(v)}\right)_u = Z_u^{\Gamma_Z} \quad \forall u \in N(v)\\
    & 
     \prod_{j=1}^{J_v} \left(S_j^{(v)}\right)_u = \Id
    \quad \forall u \not \in N(v)\cup\{v\}
    \\
    &
    \prod_{j=1}^{J_v} \mu_j^{(v)} = +1,
        \\
    \end{aligned}
    \right. \forall v.
\label{eq:opt-problem-graph}
\end{align}

To avoid solving the full ILP for all parties simultaneously, we can use the same strategy as for the surface code and fix a common canonical context for every party. Here, however, it is no longer convenient to assign the same input as the target to every party in its context. For example, for bipartite graphs, the stabilizers can instead be certified using a single submeasurement if we define the canonical context associated with the observable $X^{\Gamma_X}_t$ as the one where each party $u$, located at distance at most $r$ from the target $t$, receives an input according to
\begin{equation}
    x^{(t)}_k(u) = \left \{
    \begin{aligned}
        &0, \quad d(u,t) \equiv 0 \mod2, \\
        &2, \quad d(u,t) \equiv 1 \mod2 .
    \end{aligned}
    \right.
    \label{eq:canonical-context-X-graph}
\end{equation}
Similarly, for the $Z^{\Gamma_Z}_v$ observable, we assign the inputs according to 
\begin{equation}
    x^{(t)}_k(u) = \left \{
    \begin{aligned}
        &2, \quad d(u,t) \equiv 0 \mod2, \\
        &0, \quad d(u,t) \equiv 1 \mod2 .
    \end{aligned}
    \right.
    \label{eq:canonical-context-Z-graph}
\end{equation}
In this way, we only need one submeasurement $S^{(v)}$ per vertex $v$ to certify the stabilizers
\begin{equation}
    S^{(v)} =  X_v^{\Gamma_X} \prod_{u \in N(v)} Z_u^{\Gamma_Z},
\end{equation}
and the rest of the inputs composing the respective measurements $M^{(v)}$ are either obtained by Eqs.~\eqref{eq:canonical-context-X-graph}--\eqref{eq:canonical-context-Z-graph}, or are non important and set to 1.

A particular class of interest among the graph states is the so-called \textit{cluster states}. These are universal resources in measurement-based quantum computing \cite{Raussendorf2001, Raussendorf2003}, and are also intimately related to quantum error correction, in particular with surface codes \cite{Raussendorf2006}. Recently, this connection to error correction has been explored to demonstrate noise robustness in unconditional quantum advantage in shallow circuits \cite{Bravyi2020-fo, Caha2026-js}.

We therefore illustrate the strength of our method by constructing a self-test for a two-dimensional cluster state that remains robust in the presence of one round of communication. In Fig.~\ref{fig:cluster-states}, we display the submeasurements used to certify the required anticommutation relation for a $6\times 6$ cluster state, with the target party located at vertex $(0,0)$. The complete set of submeasurements for the other parties is provided in \cite{froes2026code}.

\section{Discussion \label{sec:disc}}

We have shown that quantum error-correcting code subspaces can be self-tested even when the untrusted devices secretly exchange classical messages over bounded distances. For the toric code, we obtain a noise-robust analytical construction that tolerates the parties communicating classical messages over distances proportional to the lattice size, achieving the optimal asymptotic scaling. We also introduce a systematic search based on Integer Linear Programming, which yields protocols on smaller toric-code lattices and extends our approach to the surface code and a two-dimensional cluster state, both with robustness to nearest-neighbor classical communication.

A potential application of our work is to construct self-testing protocols tailored to the specifications of an experimental setup, using the ILP framework presented here. The search can incorporate the geometry and connectivity of the devices, together with bounds on the communication distance imposed by the timing of the experiment. This flexibility could be particularly useful in multipartite settings, where enforcing spacelike separation between every pair of measurement events is demanding. By explicitly accounting for the classical communication available to the devices, such protocols could help address the locality loophole in multipartite certification, replacing the assumption of complete isolation with experimentally justified restrictions on information propagation.

An important motivation for extending the present framework beyond classical communication comes from proposed modular architectures for fault-tolerant quantum computation, in which a surface code is distributed across separate parties connected to their neighbors~\cite{Nickerson2013,Buonacorsi2019}. 
Operating the code in such architectures requires neighboring parties to exchange both quantum and classical information, for instance when performing distributed stabilizer measurements. 
It is therefore natural to ask for certification protocols robust not only against classical communication, but also when the untrusted devices can secretly exchange \emph{quantum} messages, as the experimental architecture will include quantum communication channels between neighbouring parties.
Hence, a natural direction for future work is thus to extend our bounded-distance communication model to also include the possibility of secrete quantum communication.

Even within the classical communication setting, an important distinction is whether the target resource can be certified without introducing auxiliary parties. The authors in \cite{Meyer2026} provide a construction for arbitrary graph states using auxiliary parties whose qubits are measured out to prepare the final certified state. Its direct constructions, which certify a qubit at every party, were tailored to exploit the symmetries of cycle and honeycomb graph states. Our direct self-test of a $6\times6$ cluster state, robust to nearest-neighbor classical communication, shows that certification without auxiliary parties is possible beyond these examples. This suggests that direct self-testing may extend to broader classes of graph states, motivating a general characterization of the resources that admit such protocols and the communication distances they can tolerate.

\begin{acknowledgments}

The authors are grateful to Patrick Emonts, Owidiusz Makuta, Ivan Šupić, Eloïc Vallée and Isadora Veeren for helpful discussions.

This project was partially supported by the ANR for the JCJC grants LINKS (No. ANR-23-CE47-0003), the T-ERC QNET (No. ANR24-ERCS-0008), the project QUANTINT, as well as the European Union's Horizon 2020 Research and Innovation Programme under QuantERA Grant Agreements No. 731473 and No. 101017733. 
The Institute Quantum-Saclay and the project QuanTEdu-France are acknowledged for the funding of this project.
\end{acknowledgments}

\section*{Statement on AI usage}
The authors used GPT-5.6 Sol and Claude Opus 5.5 to assist with implementing and debugging the integer linear programs formulated in this work. All AI-assisted code was reviewed and verified by the authors, who take full responsibility for the content and results of this work.

\section*{Data availability}
All the data used in Section \ref{sec:MILP} is available in \cite{froes2026code}. 

\vspace{2em}
\bibliographystyle{quantum}
\bibliography{ref}

\onecolumn\newpage
\appendix

\section{Proof of Proposition \ref{prop: self-testing-criteria}}
\label{app:proof-prop1}
\begin{proof}
Let $\widetilde{X}_e$ and $\widetilde{Z}_e$ denote the two binary operators acting on $\mathcal{H}_e$, which we implicitly extend by acting as the identity on all other systems, including the purifying system $\mathcal{H}_P$. The observed statistics from the self-testing protocol certify
\begin{equation}
    \left\{\widetilde{X}_e,\widetilde{Z}_e\right\}\ket{\Psi} = 0,
    \label{eq:anticomm-eq-app}
\end{equation}
and 
\begin{align}
    &\prod_{\partial e \ni v} \widetilde{X}_e \ket{\Psi} = +1\ket{\Psi} \quad \forall v,\\
    &\prod_{ e \in \partial f} \widetilde{Z}_e \ket{\Psi} = +1\ket{\Psi} \quad \forall f,
\end{align}
where $\ket{\Psi} \in \bigotimes_e\mathcal{H}_e \otimes \mathcal{H}_P$ is the purification of the shared state of the untrusted parties. 
We can then use the so-called SWAP isometry, introduced in \cite{Mayers2004} and shown in Fig.~\ref{fig:swap-isometry}. For a party $e$, the isometry is $\Phi_e: \mathcal{H}_e \rightarrow \mathcal{H}'_e \otimes \mathcal{H}_e''$, and  is given by
\begin{equation}
    \Phi_e  =  \ket{0} \otimes \left( \frac{\Id+\widetilde{Z}_e}{2} \right)  +  \ket{1} \otimes \left(\widetilde{X}_e\frac{\Id-\widetilde{Z}_e}{2} \right),
\end{equation}
and $\dim(\mathcal{H}'_e) = 2$.

\begin{figure}[t]
    \centering
    \begin{quantikz}
        \lstick{$\ket{0}$} 
            & \gate{H} 
            & \ctrl{1} 
            & \gate{H} 
            & \ctrl{1}
            & \qw
        \\
        \lstick{$\ket{\psi}$}
            & \qw
            & \gate{Z}
            & \qw
            & \gate{X}
            & \qw
    \end{quantikz}
    \caption{Swap isometry used to extract the ideal qubit encoded in the physical system into an auxiliary register.}
    \label{fig:swap-isometry}
\end{figure}

Now, using the anticommutation relation from Eq.~\eqref{eq:anticomm-eq-app}, the full isometry $\Phi = \bigotimes_e \Phi_e$ satisfies 
\begin{align}
    & \prod_{\partial e \ni v} ({X}_e \otimes \Id_{\mathcal{H}''\otimes \mathcal{H}_P}) (\Phi\otimes \Id_P)\ket{\Psi} = (\Phi\otimes \Id_P) \prod_{\partial e \ni v} \widetilde{X}_e \ket{\Psi} = (\Phi\otimes \Id_P)\ket{\Psi} \quad \forall v,
    \label{eq:star-stab-app}
\end{align} 
\begin{align}
    &   \prod_{ e \in \partial f} ({Z}_e \otimes \Id_{\mathcal{H}''\otimes \mathcal{H}_P}) (\Phi\otimes \Id_P)\ket{\Psi} = (\Phi\otimes \Id_P) \prod_{e \in \partial f} \widetilde{Z}_e \ket{\Psi} = (\Phi\otimes \Id_P)\ket{\Psi} \quad \forall f,
    \label{eq:plaquette-stab-app}
\end{align}
where ${X}_e$ and ${Z}_e$ denote the Pauli operators acting on the auxiliary qubit in $\mathcal{H}'_e$. 

Now, following the proof in \cite{Baccari2020}, we derive the most general form of the state $(\Phi\otimes \Id_P)\ket{\Psi}$. To this end, we consider the Schmidt decomposition of $(\Phi\otimes \Id_P)\ket{\Psi}$ with respect to the bipartition between $\mathcal{H}'$ and $\mathcal{H}''\otimes\mathcal{H}_P$
\begin{equation}
    (\Phi\otimes \Id_P) \ket{\Psi} = \sum_i \omega _i \ket{\eta_i} \ket{\zeta_i}.
\end{equation}

From Eqs.~\eqref{eq:star-stab-app} and \eqref{eq:plaquette-stab-app}, we can conclude that
\begin{align}
    \prod_{\partial e \ni v} {X}_e \ket{\eta_i}     = \ket{\eta_i} \quad \forall v, \\
    \prod_{ e \in \partial f} {Z}_e \ket{\eta_i}     = \ket{\eta_i} \quad \forall f, 
\end{align}
for all $i$. Thus, each $\ket{\eta_i}$ is in the toric code subspace, and we can write
\begin{equation}
    \ket{\eta_i} = \sum_{\kappa=1}^4 \widetilde{c}_{i,\kappa}\ket{\psi_\kappa},
\end{equation}
where  $\{\ket{\psi_\kappa}\}_\kappa$ forms an orthonormal basis of the toric code subspace, and $\sum_\kappa \lvert \widetilde{c}_{i,\kappa}\rvert ^2 = 1$. Finally, this implies that there exist unnormalized states ${\big| \widetilde\xi_\kappa} \big\rangle$ such that 
\begin{equation}
    (\Phi\otimes \Id_P) \ket{\Psi} = \sum_{\kappa=1}^4 \ket{\psi_\kappa}\big| \widetilde{\xi}_\kappa \big\rangle = \sum_{\kappa=1}^4 c_\kappa\ket{\psi_\kappa}\ket{\xi_\kappa} ,
\end{equation}
where the $\ket{\xi_\kappa}$ are normalized, $c_\kappa \geq 0$ and $\sum_\kappa c_\kappa^2 = 1$.
\end{proof}

\section{Generalized construction for arbitrary communication distance}

\label{app:generalization}

Here, we show how to generalize the construction from the main text to show that a toric code of lattice size $L \geq 6r+3$ can be self-tested even if the untrusted parties have access to $r$ rounds of classical communication. Before we begin, the following lemma will be useful, and we prove it in Appendix \ref{app:proof_lemma}.
\begin{lemma}[Distance in the communication graph]
\label{lemma: distance-lemma}
For $a\in\mathbb{R}$, we define
\begin{equation}
    \lvert a\rvert_L
    :=
    \min_{n\in\mathbb{Z}}\lvert a+nL\rvert.
\end{equation}
Let $e$ and $e'$ be two parties in the communication graph of the toric code. If they have the same orientation, namely,
\begin{equation}
    e=H(i,j),\quad e'=H(p,q),
\end{equation}
or
\begin{equation}
    e=V(i,j),\quad e'=V(p,q),
\end{equation}
then their graph distance is
\begin{equation}
    d(e,e')
    =
    2\max\left\{
        \lvert i-p\rvert_L,
        \lvert j-q\rvert_L
    \right\}.
\end{equation}
If instead $e$ and $e'$ have different orientations $e=H(i,j)$ and $e' = V(p,q)$, then
\begin{equation}
    d(e,e')
    =
    2\max\left\{
        \left\lvert i-p+\frac{1}{2}\right\rvert_L,
        \left\lvert j-q-\frac{1}{2}\right\rvert_L
    \right\}.
\end{equation}
\end{lemma}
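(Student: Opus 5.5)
The plan is to embed the communication graph into the plane via edge midpoints and use a coordinate change in which single steps become unit steps of the $\ell_\infty$ metric. Place each party at the midpoint of its edge: $H(i,j)$ sits at $(i+\tfrac12, j)$ and $V(i,j)$ at $(i, j+\tfrac12)$, all coordinates taken modulo $L$.

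First I will read off the neighbour displacements from Eqs.~\eqref{eq:neigh-hor}--\eqref{eq:neigh-ver}. For $V(i,j)$, the neighbours $H(i-1,j+1)$, $H(i,j+1)$, $H(i-1,j)$, $H(i,j)$ sit at $(i\pm\tfrac12, j+\tfrac12\pm\tfrac12)$, so every step changes the midpoint by $(\pm\tfrac12,\pm\tfrac12)$, with all four sign combinations allowed. The same holds for horizontal edges. Consequently the communication graph is exactly the graph on the midpoint set in which two points are adjacent iff their displacement is $(\pm\tfrac12,\pm\tfrac12)$ mod $L$. Rescaling by $2$, each step changes both coordinates by exactly $\pm1$.

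Next I will prove the planar (unwrapped) distance formula. Take two midpoints whose doubled displacement is $(\Delta_x,\Delta_y)$, with $\Delta_x\equiv\Delta_y \pmod 2$; this parity condition holds automatically because the endpoints lie on the same sublattice of the diagonal lattice. Each step changes each coordinate by $\pm1$, so $d\ge\max(|\Delta_x|,|\Delta_y|)$. For the matching upper bound, walk $\max$ steps, moving monotonically in the larger coordinate. In the smaller coordinate, go forward as needed and alternate $\pm1$ for the remaining steps; the parity condition makes the surplus even, so this is possible. Translating back, same orientation gives $\Delta=(2(p-i),2(q-j))$ and hence $2\max\{|i-p|,|j-q|\}$. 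Mixed orientation, $H(i,j)$ versus $V(p,q)$, gives $\Delta=(2(i-p+\tfrac12),\,2(j-q-\tfrac12))$, which is the stated expression.

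Finally I will handle the torus. A path on the torus lifts to a path in the plane from the lift of $e$ to some translate of the lift of $e'$ by $(nL,mL)$. The distance is therefore the minimum over $n,m$ of $\max(|\Delta_x+2nL|,|\Delta_y+2mL|)$. Since $n$ and $m$ range independently, the minimum of a max of two terms each depending on only one variable is the max of the separate minima, which gives the $|\cdot|_L$ expressions.

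The step I expect to need care is this torus reduction: lifting can change parity only if $L$ is odd. The displacements $2nL$ are even, however, so parity is preserved and the planar argument still applies to each lift. I would also double-check the half-integer sign conventions in the mixed case against Eq.~\eqref{eq:neigh-hor}.
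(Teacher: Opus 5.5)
Your proposal is correct and follows essentially the same route as the paper's proof: the midpoint embedding, steps of $(\pm\tfrac12,\pm\tfrac12)$, the coordinate-wise lower bound, and an explicit path that is monotone in the dominant coordinate and zigzags in the other. The one difference is the torus step: your explicit lift to the plane, followed by exchanging the minimum over $(n,m)$ with the maximum, is a more careful version of what the paper does implicitly when it writes the bounds directly in terms of $\lvert\cdot\rvert_L$.
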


\noindent We can now begin by considering a lattice where the parties are located, and of size $L=3s$, where $s=2r+1$, and a coarse-grained lattice of size $L'=3$ embedded on it. In the notation introduced in the main text, this coarse-grained lattice has its horizontal edges described as 
\begin{equation}
    H'(i,j) = \left\{(si,sj),(si+s,sj)\right\},
\end{equation}
and vertical edges
\begin{equation}
    V'(i,j) = \left\{(si,sj),(si,sj+s)\right\},
\end{equation}
where the vertices on the right-hand side are with respect to the original lattice. We can implement the same loop operators defined in Fig.~\ref{fig:measurements-vertical} on the coarse-grained lattice, associated with the four submeasurements $S^{(t)}_k$ for a given target party $t$. These are shown diagrammatically in Fig.~\ref{fig:generalized-loops}. Namely, for the target party $t = V(2s,s+r)$, the five loops composing these submeasurements consist of the three primal loops 
\begin{align}
    \gamma_1 = \{H'(1,0),V'(2,0),V'(2,1), H'(1,2),V'(1,1),V'(1,0)\}, 
    \label{eq:first-loop-app}
\end{align}
\begin{equation}
    \gamma_2 = \{H'(1,1),V'(2,1),H'(1,2),V'(1,1)\}, 
\end{equation}
\begin{equation}
    \gamma_3 = \{H'(1,0),V'(2,0),H'(1,1),V'(1,0)\}, 
\end{equation}
and the two dual loops 
\begin{align}
    \gamma^\star_1 = \{V'(2,0), H'(2,1), H'(2,2),V'(2,2), H'(1,2), H'(1,1)\},    
\end{align}
\begin{equation}
    \gamma^\star_2 = \{V'(2,0), H'(2,1), V'(2,1), H'(1,1)\}.
    \label{eq:last-loop-app}
\end{equation} 
We use the following convention to associate edges of the coarse-grained lattice with segments of the original lattice. If $H'(i,j)\in\gamma$, then the same loop, expressed on the original lattice, contains the following $s$ consecutive edges
\begin{equation}
H(si,sj),H(si+1,sj),\dots,H(si+s-1,sj) \in \gamma.
\end{equation}
Similarly, if $V'(i,j) \in \gamma$, then
\begin{equation}
V(si,sj),V(si,sj+1),\dots,V(si,sj+s-1) \in \gamma.
\end{equation}
For the dual loop, we say that if $H'(i,j) \in \gamma^\star$, then
\begin{equation}
H(si+r,sj-r),H(si+r,sj-r+1),\dots,H(si+r,sj+r) \in \gamma^\star,
\end{equation}
whereas $V'(i,j) \in \gamma^\star$ implies that
\begin{equation}
V(si-r,sj+r),V(si-r+1,sj+r),\dots,V(si+r,sj+r) \in \gamma^\star.
\end{equation}

Thus, the submeasurements used in the self-testing of the toric code, and which are robust to $r$ rounds of communication, can be described as follows. The parties participating in the submeasurements $S^{(t)}_k$  are those in $I^{(t)}_k = \mathcal{Z}_k \cup \mathcal{X}_k$, where 
\begin{equation} \left\{
    \begin{aligned}
        &\mathcal{Z}_1 = \gamma_1,\\
        &\mathcal{Z}_2 = \emptyset,\\
        &\mathcal{Z}_3 = \gamma_2,\\
        &\mathcal{Z}_4 = \gamma_3,\\
    \end{aligned}
    \right. \quad \text{ and } \quad \left\{
    \begin{aligned}
        &\mathcal{X}_1 = \gamma_1^\star,\\
        &\mathcal{X}_2 = \gamma_2^\star,\\
        &\mathcal{X}_3 = \gamma_1^\star,\\
        &\mathcal{X}_4 = \gamma_2^\star,\\
    \end{aligned}
    \right.,
\end{equation}
and the corresponding inputs for these parties are assigned according to
\begin{equation}
    x_k (e) = \left \{ 
    \begin{aligned}
    &0, \quad  e \in \mathcal{X}_k \setminus \mathcal{Z}_k\\ 
    &1, \quad e \in \mathcal{X}_k \cap \mathcal{Z}_k\\ 
    &2, \quad e \in \mathcal{Z}_k \setminus \mathcal{X}_k.\\ 
    \end{aligned}
    \right.
    \label{eq:assig-loop}
\end{equation}

\noindent As was the case in the main text, these submeasurements satisfy 
\begin{equation}
\begin{aligned}
    S^{(t)}_1 \ket{\psi} = - \ket{\psi}, \\
    S^{(t)}_2 \ket{\psi} = + \ket{\psi}, \\
    S^{(t)}_3 \ket{\psi} = - \ket{\psi}, \\
    S^{(t)}_4 \ket{\psi} = - \ket{\psi}, \\
\end{aligned}
\end{equation}
and 
\begin{equation}
    S_1^{(t)}S_2^{(t)}S_3^{(t)}S_4^{(t)}   = Z^{\Gamma}_t X^{\Gamma}_t Z^{\Gamma}_t X^{\Gamma}_t,
    \label{eq:ZXZX-app-RE}
\end{equation}
for the reference experiment. Crucially, the last relation relies on the fact that, in the RE, the observable performed by a given party is independent of the inputs assigned to its neighbors, as they do not communicate their input. For the PE, however, communication allows the observable of each party to depend on the entire context up to a distance $r$ available to it. Therefore, to recover the analogue of Eq.~\eqref{eq:ZXZX-app-RE} in the PE, the inputs within a distance $r$ of every party whose observables are paired across different submeasurements must be modified so that they match, and not simply set to $1$. Moreover, in order to certify the star and plaquette stabilizers using the same operators appearing in Eq.~\eqref{eq:ZXZX-app-RE}, we must fix the relevant observables to the canonical contexts introduced in the main text, in which every party within distance at most $r$ of the target party $t$ receives the same input from $t$.

\begin{figure*}[t]
    \centering
    \includegraphics[width=0.94\textwidth]{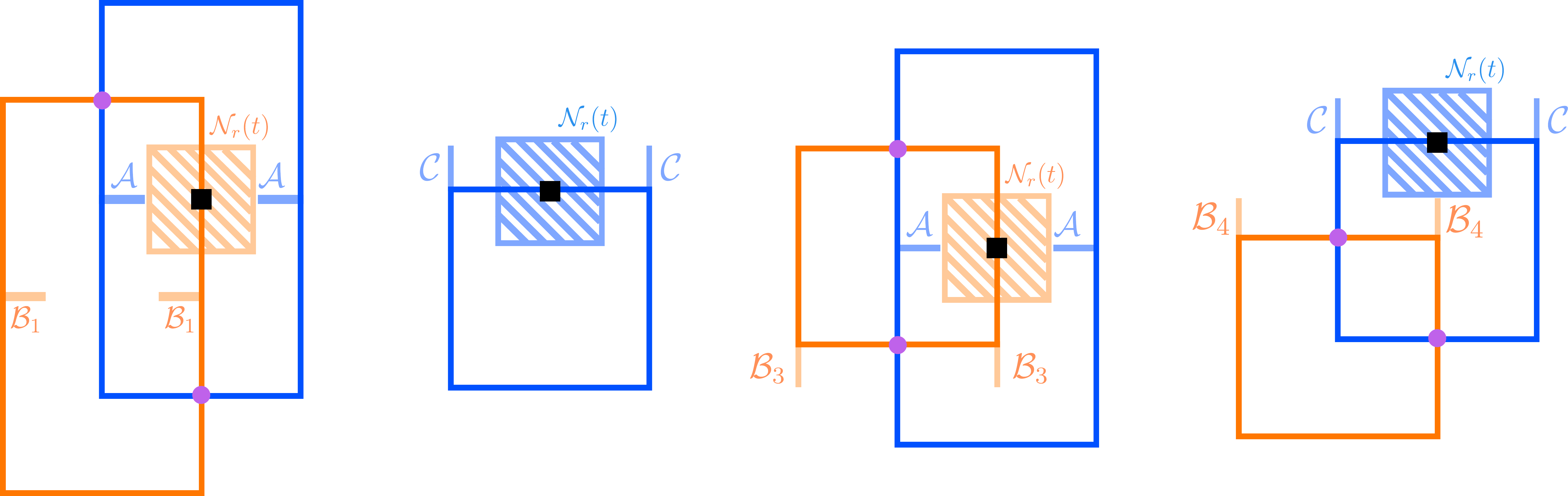}
    \caption{
    Diagrammatic representation of the input assignments for the four measurements $M_k^{(t)}$, $k=1,2,3,4$. The submeasurements involve the parties on the primal loops $\gamma_1,\gamma_2,\gamma_3$ (dark orange) and those on edges crossed by the dual loops $\gamma_1^\star,\gamma_2^\star$ (dark blue). The additional sets of parties defined in Eq.~\eqref{eq:extra-sets} are indicated in light blue and light orange, receiving inputs $0$ and $2$, respectively. The black square marks the target party $t$, while the hatched square indicates the neighborhood in which the canonical context is imposed. Finally, the coorindates for parties $\mathcal{A}, \mathcal{B}_{k}, \text{and } \mathcal{C}$ are explicitly provided in Eq.~\eqref{eq:extra-sets}.
    }
    \label{fig:generalized-loops}
\end{figure*}

To make the simulation of these submeasurements by the PE imply the conditions of Proposition~\ref{prop: self-testing-criteria}, we then need to change the inputs of the following parties. For the parties located within distance $r$ from the target $t$, the canonical context enforces that
\begin{equation}
    x^{(t)}_k(e)= \left\{
    \begin{aligned}
        &0, \quad  k\in\{2,4\},\\
        &2, \quad  k\in\{1,3\}.\\
    \end{aligned}
    \right.
    \label{eq:canonical-context-app}
\end{equation}
For the inputs regarding the rest of the parties that are neither in $I^{(t)}_k$ nor in the canonical context, we define the following set of parties (represented in Fig.~\ref{fig:generalized-loops})
\begin{align}
\label{eq:extra-sets}
\mathcal{A} &= \{V(2s-r+a,\,s+r),\; V(2s+r-a,\,s+r)\}_{a=0}^{q_\perp-1}, \nonumber\\
\mathcal{B}_1 &= \{H(s+a,\,s),\; H(2s-1-a,\,s)\}_{a=0}^{q_\perp-1},\nonumber\\
\mathcal{B}_3 &= \{V(s,\,s-1-a),\; V(2s,\,s-1-a)\}_{a=0}^{q_\perp-1},\\
\mathcal{B}_4 &= \{V(s,\,s+a),\; V(2s,\,s+a)\}_{a=0}^{q_\perp-1},\nonumber
\\
\mathcal{C} &= \{H(s+r,\,2s-r+a),\; H(2s+r,\,2s-r+a)\}_{a=0}^{q_\parallel-1}, \nonumber
\end{align}
where
\begin{equation}
      q_{\perp}
    :=
    \left\lfloor\frac{r+1}{2}\right\rfloor \quad \text{ and } \quad  q_{\parallel}
    :=
    \left\lfloor\frac{r}{2}\right\rfloor,
\end{equation}
such that their input obeys
\begin{equation}
    x_k(e) = \left\{ 
    \begin{aligned}
        &0, \quad e\in \mathcal{A}, \text{ } k\in \{1,3\} \text{ or } e \in \mathcal{C},\text{ } k \in \{2,4\}, \\
        &2, \quad e \in B_k, \\
        &1, \quad \text{otherwise},
    \end{aligned}
    \right.
    \label{eq:change-input-ass}
\end{equation}
where $\mathcal{B}_2=\emptyset$. Also note that the sets $\mathcal{A},\mathcal{B}_k,\mathcal{C}$ are disjoint from $I^{(t)}_k$ and from $\mathcal{N}_r(t) = \{\ell: d(t,\ell) \leq r\}$, and on $\mathcal{N}_r(t)\cap I^{(t)}_k$ the inputs assigned according to Eq.~\eqref{eq:canonical-context-app} and Eq.~\eqref{eq:assig-loop} agree, so the assignment is well defined for every party. Moreover, the measurements for all other vertical targets follow by translation of the coordinate system, and for horizontal targets by a rotation by $90^\circ$ of the loops defined here. 

The remainder of this section is then devoted to showing that these are indeed the only parties that need their input assignment modified in order to ensure that the context of each relevant party is identical whenever its observables are paired across different submeasurements.

We start by reducing the set of parties that need to be considered and can possibly affect the cancellations needed to conclude Eq.~\eqref{eq:ZXZX-app-RE}. Notably, many parties neither participate in any of the primal or dual loops defining the submeasurements nor belong to the canonical context of the target party. Their inputs can therefore affect neither the expectation values of the submeasurements nor the contexts that must be matched. Hence, the inputs of all such parties may be set to $1$ in every measurement. Once the canonical context around the target party is fixed, it therefore remains only to consider parties lying on at least one of the loops defined in Eqs.~\eqref{eq:first-loop-app}-\eqref{eq:last-loop-app}. For this purpose, Lemma~\ref{lemma: distance-lemma} can be used to determine which of these parties may lie within one another's communication contexts. 

The first case we examine concerns the relative orientations of the line segments composing these loops. Consider a party $e$ lying on a vertical segment of a primal loop $\gamma$ and a party $e'$ lying on a horizontal segment of a dual loop $\gamma^\star$. Their coordinates are of the form
\begin{equation}
e = V(si, sj+n) \quad n \in [0,1,\dots,s-1],
\end{equation}
\begin{equation}
e' = H(sp+r, sq-r+m) \quad m \in [0,1,\dots, s-1].
\end{equation}
Therefore
\begin{equation}
d(e,e') = 2\max \left\{\left\lvert sp+r-si+\frac{1}{2}\right\rvert_L, \left\lvert sq-r+m-sj-n-\frac{1}{2}\right\rvert_L\right\}.
\end{equation}
However, notice that
\begin{equation}
\left\lvert sp+r-si+\frac{1}{2}\right\rvert_L \geq \left\lvert r + \frac{1}{2} - s \lvert i-p\rvert_{3}\right\rvert_L = s\left\lvert  \frac{1}{2} -  \lvert i-p\rvert_{3}\right\rvert_{3} \geq r+\frac{1}{2},
\end{equation}
where in the last inequality we used the fact that $i-p$ is an integer. Thus, we can conclude that they cannot be in each other's context, as
\begin{equation}
d(e,e') \geq s > r.
\end{equation}
The converse case, in which $e$ lies on a horizontal edge of a primal loop and $e'$ lies on a vertical edge of a dual loop, follows analogously. 

A similar restriction applies to parties lying on loop segments with the same orientation. In particular, if two parties $e$ and $e'$ lie on distinct segments of the same orientation that are not adjacent to each other, then their distance is also greater than the communication distance. We show this explicitly for the case in which both parties lie on horizontal edges, $H(si+n,sj)$ and $H(sp+m,sq)$, and the case for two vertical edges follows in the same way. In that case, the distance of these two such edges is
\begin{equation}
    d(e,e') = 2\max \left\{\left\lvert si+n-sp-m\right\rvert_L, \left\lvert sj-sq \right\rvert_L  \right\}.
    \label{eq:dist-1}
\end{equation}
However, note that for them to lie on distinct and non-adjacent coarse-grained edges, we must have $j\neq q$ once the size of the coarse-grained lattice is only $L'=3$, and thus
\begin{equation}
    \left\lvert sj-sq \right\rvert_L = s\left\lvert j-q \right\rvert_3 \geq s.
\end{equation}
which implies that 
\begin{equation}
    d(e',e) \geq 2s > r.
    \label{eq:dist-2}
\end{equation}

Therefore, the only parties that can affect the distance $r$ context of another party fall into one of the following categories: they belong to the same loop segment, they lie on two intersecting loop segments, or they belong to two adjacent loop segments.

The first two cases do not require any additional modifications to the input assignments. Indeed, the coarse-grained loops have exactly the same pairing structure as the loops used in the $r=0$ construction shown in Eq.~\eqref{eq:cancelations}. Therefore, whenever one of these structures appears within the context of a given party, the same structure also appears within the corresponding context in the paired submeasurement. Consequently, these cases cannot give rise to any mismatch between the corresponding contexts.

It remains, therefore, to consider parties close to the endpoints of adjacent segments. This is precisely the situation responsible for the context mismatches explained in the main text for $r=1$. Depending on the submeasurement, a loop reaching a coarse-grained vertex may continue along one adjacent segment in the same direction or along a perpendicular one. A party sufficiently close to this vertex can then use these inputs to distinguish which continuation was chosen and prevent the self-testing.

To fix this, we first consider two adjacent segments of different orientations. Without loss of generality, we can translate their common endpoint to $(0,0)$ and consider the parties
\begin{equation}
    e_a = H(a-1,0),
    \qquad
    e'_b = V(0,b-1),
\end{equation}
where $a,b\in [1,2,\dots,s]$ encodes the position of the edges present in these segments, starting from the common endpoint. Lemma~\ref{lemma: distance-lemma} gives
\begin{align}
    d(e_a,e'_b)
    &=
    2\max\left\{
        \left|a-\frac{1}{2}\right|_L,
        \left|b-\frac{1}{2}\right|_L
    \right\}\\
    &=
    2\max\{a,b\}-1.
\end{align}
Hence, the two parties can belong to one another's context only if
\begin{equation}
    \max\{a,b\}
    \leq
    \left\lfloor\frac{r+1}{2}\right\rfloor.
\end{equation}
Equivalently, the party closest to the endpoint will have
\begin{equation}
    q_{\perp}
    =
    \left\lfloor\frac{r+1}{2}\right\rfloor
\end{equation}
parties from the adjacent segment in its context. 

This observation leads to the first required modification of the measurement assignments. Consider a coarse-grained edge that belongs to two different loops defining a pair of submeasurements, so that the observables associated with all parties along this edge cancel when the two submeasurements are multiplied. If the two loops differ on a coarse-grained edge adjacent to the one under consideration, then parties close to their common endpoint may distinguish the two submeasurements by detecting whether the parties along the perpendicular segment received the corresponding loop inputs. To prevent them from accessing this information, we also assign the appropriate loop input to the first $q_{\perp}$ parties along the segment that is absent from the paired submeasurement. Importantly, these parties are not included in the support of the submeasurement, and their outputs are therefore disregarded by the referee. In this way, all parties sufficiently close to the common endpoint receive the same context in both submeasurements, and thus cannot distinguish between the two loops.

This modification is required in two situations, corresponding to the two cases identified in the main text for $r=1$. First, consider the two dual loops $\gamma_1^\star$ and $\gamma_2^\star$. These loops share the segments associated with $H'(1,1)$ and $H'(2,1)$, but differ in the continuation through $V'(2,1)$, which belongs to $\gamma_2^\star$ but not to $\gamma_1^\star$. Parties sufficiently close to either endpoint of $V'(2,1)$ may therefore detect whether this segment is present and distinguish the two loop configurations. Thus, in the submeasurements containing $\gamma_1^\star$, we also assign input $0$ to the first $q_{\perp}$ parties associated with the segment defined by $V'(2,1)\in \gamma_1^\star$ from both endpoints, while disregarding their outputs. The relevant parties on the segments associated with $H'(1,1)$ and $H'(2,1)$ then see the same context, in the perpendicular direction, independently of whether $\gamma_1^\star$ or $\gamma_2^\star$ is being considered by the referee. This corresponds exactly to the change in the input assignment to the parties belonging to $\mathcal{A}$, shown in Eq.~\eqref{eq:change-input-ass} (see also Fig.~\ref{fig:generalized-loops}).

The same type of mismatch arises for the three primal loops $\gamma_1$, $\gamma_2$, and $\gamma_3$, and correspond to the changes in the inputs of the parties in the sets $\mathcal{B}_k$ in Eq.~\eqref{eq:change-input-ass}, and represented in  Fig.~\ref{fig:generalized-loops}. At the vertex $(s,s)$, the relevant adjacent segments are those associated with $H'(1,1)$, $V'(1,0)$, and $V'(1,1)$, while at the vertex $(2s,s)$, they are those associated with $H'(1,1)$, $V'(2,0)$, and $V'(2,1)$. The three loop configurations differ by which of these segments is absent. The loop $\gamma_1$ does not contain $H'(1,1)$, $\gamma_2$ does not contain $V'(1,0)$ or $V'(2,0)$, and $\gamma_3$ does not contain $V'(1,1)$ or $V'(2,1)$. We therefore assign input $2$ to the first $q_{\perp}$ parties along each corresponding missing continuation. More explicitly, this modification is made along $H'(1,1)$ for $\gamma_1$, along $V'(1,0)$ and $V'(2,0)$ for $\gamma_2$, and along $V'(1,1)$ and $V'(2,1)$ for $\gamma_3$. In this way, the parties lying sufficiently close to either vertex see the same relevant continuations in their contexts, regardless of which of the three primal loops is being probed by the referee.

There is one further possibility when $r\geq 2$, which accounts for the modifications related to the final set $\mathcal{C}$ in Eq.\eqref{eq:change-input-ass}. Two adjacent coarse-grained segments may have the same orientation and form a straight continuation of one another. Again translating their common endpoint to the origin, consider, for example,
\begin{equation}
    e_a = V(0,-a),
    \qquad
    e'_b = V(0,b-1).
\end{equation}
Since the two parties have the same orientation, Lemma~\ref{lemma: distance-lemma} gives
\begin{equation}
    d(e_a,e'_b)
    =
    2(a+b-1).
\end{equation}
Therefore, a party on one segment can detect inputs on the other segment only within the first
\begin{equation}
    q_{\parallel}
    =
    \left\lfloor\frac{r}{2}\right\rfloor
\end{equation}
edges beyond the common endpoint. Notice that this contribution vanishes for $r=1$, which explains why no corresponding modification is required in the explicit construction of Fig.~\ref{fig:measurements-1-round}. For $r\geq 2$, however, the distance-$r$ context of parties lying on the segments associated with the coarse-grained edges $H'(1,1)$ and $H'(2,1)$ can extend beyond their endpoints and reach the adjacent collinear segments in the loop $\gamma^\star_1$. These parties may therefore detect whether the dual loop continues straight beyond that endpoint, allowing them to distinguish the generalizations of $\gamma^\star_1$ and $\gamma^\star_2$, even after adding the continuation of these loops in the perpendicular direction. To prevent this, in the submeasurement where the straight continuation is absent, we assign the input $0$ to the first $q_{\parallel}$ parties along that missing continuation. As a result, the parties on $H'(1,1)$ and $H'(2,1)$ see the same straight continuation in their contexts in the relevant paired submeasurements, and can no longer distinguish between the two loops.

Finally, by inspection of the coarse-grained loops in Eqs.~\eqref{eq:first-loop-app}--\eqref{eq:last-loop-app} (see also Fig.~\ref{fig:generalized-loops}), one can see that these are the only adjacent segments whose presence differs between submeasurements and whose local observables are paired according to Eq.~\eqref{eq:cancelations}. Hence, the modifications above exhaust all possible context mismatches.

The only remaining observation is to show that the addition of the canonical context does not introduce any further context mismatches. This follows directly from the distance bounds derived above. In particular, parties lying on loop segments parallel or adjacent to the segment containing the target party are too far away to detect any of the additional inputs used to impose the canonical context. For parties lying on the same loop segment as the target, no additional mismatch arises, since whenever the canonical context falls within distance $r$ of a given party, the same context also appears in the corresponding paired submeasurement that cancels that party's observable.

After these modifications to the input assignments defining the measurements $M^{(t)}_k$ and the corresponding $\widetilde{M}^{(t)}_k$, the submeasurements in the PE satisfy
\begin{equation}
\widetilde{S}^{(t)}_1\widetilde{S}^{(t)}_2\widetilde{S}^{(t)}_3\widetilde{S}^{(t)}_4 = 
\widetilde{Z}^{\Gamma_Z}_t
\widetilde{X}^{\Gamma_X}_t
\widetilde{Z}^{\Gamma_Z}_t
\widetilde{X}^{\Gamma_X}_t.
\end{equation}

The remainder of the proof for self-testing then proceeds exactly as in the main text. In particular, the construction shown above does not depend on the boundary conditions of the lattice. Therefore, we can self-test against $r$ rounds of communication in any lattice in which we can embed the construction shown here.

\section{Proof of Lemma \ref{lemma: distance-lemma}}
\label{app:proof_lemma}
\begin{proof}
We can associate each edge of the toric code lattice with the midpoint of its two vertices. Accordingly, a horizontal edge is represented by the coordinates
\begin{equation}
    H(i,j) := \left(i+\frac{1}{2},j\right),
\end{equation}
while a vertical edge is represented by
\begin{equation}
    V(i,j) := \left(i,j+\frac{1}{2}\right).
\end{equation}

Now, following Eq.~\eqref{eq:neigh-hor}, the four neighbors of a vertical edge $V(i,j)$ are
\begin{equation}
    H(i-1,j)=\left(i-\frac{1}{2},j\right),
\end{equation}
\begin{equation}
    H(i,j)=\left(i+\frac{1}{2},j\right),
\end{equation}
\begin{equation}
    H(i,j+1)=\left(i+\frac{1}{2},j+1\right),
\end{equation}
and
\begin{equation}
    H(i-1,j+1)=\left(i-\frac{1}{2},j+1\right).
\end{equation}

Hence, a single round of communication corresponds to a displacement of the form
\begin{equation}
    D=\left(\frac{a}{2},\frac{b}{2}\right),
    \qquad
    a,b\in\{\pm 1\}.
    \label{eq:displacement-cond}
\end{equation}
The same set of possible displacements is obtained when moving from a horizontal edge to any of its neighboring vertical edges. Thus, the distance between any two edges is given by the minimum number of such displacements necessary for going from one edge to the other. 

Now, let $e$ and $e'$ be two edges of the lattice, and denote their relative displacement by $e-e'=(\Delta x,\Delta y)$.
From Eq.~\eqref{eq:displacement-cond}, each round of communication can change either coordinate by at most $1/2$. Therefore, any path connecting $e$ and $e'$ must contain at least $2|\Delta x|_L$ steps to account for the displacement along the horizontal direction, and at least $2|\Delta y|_L$ steps to account for the displacement along the vertical direction, where we use $\lvert \cdot \rvert_L$ to account for the periodicity of the lattice. Consequently, we can lower bound the graph distance between $e$ and $e'$ by
\begin{equation}
    d(e,e')
    \geq
    2\max\left\{
        |\Delta x|_L,
        |\Delta y|_L
    \right\}.
\end{equation}
 
In fact, this bound is tight. To see this, consider first the case where $ |\Delta x|_L \geq |\Delta y|_L$. Let $\sign(\Delta x)$ and $\sign(\Delta y)$ denote the signs of the two components of the displacement, i.e.
\begin{equation}
    \sign(\Delta \mu) =
    \begin{cases}
        1, & \Delta \mu \geq 0,\\
        -1, & \Delta \mu < 0,
    \end{cases}
    \qquad \mu\in\{x,y\}.
\end{equation}
We first take $2|\Delta y|_L$ steps of the form
\begin{equation}
    \left(\frac{\sign(\Delta x)}{2},\frac{\sign(\Delta y)}{2}\right).
\end{equation}
Their total displacement is $\left(\sign(\Delta x) |\Delta y|_L,\Delta y\right)$, so that the required displacement along the $y$ direction has already been achieved. The remaining displacement is therefore $\left(\sign(\Delta x)\bigl(|\Delta x|_L-|\Delta y|_L\bigr),0\right)$.
This can be obtained by taking $|\Delta x|_L-|\Delta y|_L$ pairs of steps of the form
\begin{equation}
    \left(\frac{\sign(\Delta x)}{2},\frac{1}{2}\right),
    \qquad
    \left(\frac{\sign(\Delta x)}{2},-\frac{1}{2}\right),
\end{equation}
since each pair produces the net displacement $ (\sign{(\Delta_x)},0)$.
Therefore, the total number of steps in this path is
\begin{align}
    2|\Delta y|_L
    +2\bigl(|\Delta x|_L-|\Delta y|_L\bigr)
    =
    2|\Delta x|_L.
\end{align}
An analogous construction applies to the case where $|\Delta y|_L \geq |\Delta x|_L$, with the roles of the two coordinates interchanged, yielding a path of length $2|\Delta y|_L$. Hence, in both cases, the lower bound is achievable, and we conclude that
\begin{equation}
    d(e,e')
    =
    2\max\left\{
        |\Delta x|_L,
        |\Delta y|_L
    \right\}.
\end{equation}
In the case that the edges have the same orientation, this has the form
\begin{equation}
    d(e,e')
    =
    2\max\left\{
        |i-p|_L,
        |j-q|_L
    \right\}.
\end{equation}
On the other hand, if the two edges have different orientations, let
$e=H(i,j)$ and $e'=V(p,q)$. From their midpoint coordinates
\begin{equation}
    e-e'
    =
    \left(
        i-p+\frac{1}{2},
        j-q-\frac{1}{2}
    \right),
\end{equation}
and therefore
\begin{equation}
    d(e,e')
    =
    2\max\left\{
        \left|i-p+\frac{1}{2}\right|_L,
        \left|j-q-\frac{1}{2}\right|_L
    \right\}.
\end{equation}
\end{proof}

\section{ILP Formulation}
\label{app:ILP-Formulation}
In this section, we show how the problem in Eq.~\eqref{eq:opt-problem} can be formulated as an integer linear program. Let us start by showing the constraints related to the submeasurements $S^{(t)}_k$ shown in Eqs.~\eqref{eq:eig-cond}-\eqref{eq:prd-const}, which are used to produce the anticommutation certificate. Following Eq.~\eqref {eq:op-const}, we need to encode all the pairings in Eqs.~\eqref{eq:consecutive-pair}-\eqref{eq:nested-pair} as linear constraints over integer variables for every non-target party.  First, the identity-separated pairing $\mathcal{A}\Id\mathcal{A}\Id$ discussed in the main text can be formally written as follows. Let $\pi_1^{(t)}(e)\in\{0,1\}$ be a binary variable indicating whether the
pairing $\mathcal{A}\Id\mathcal{A}\Id$ is selected for a non-target party $e$ in the set of submeasurements that certify the anticommutation relations for the target party $t$. The constraints shown in Eq.~\eqref{eq:alpha-const} and \eqref{eq:gamma-const} can be written as
\begin{equation}
\left\{
\begin{aligned}
&\lvert \Gamma_1^{(t)}(e,r)-\Gamma_3^{(t)}(e,r)\rvert 
    \leq 2\left(1-\pi_1^{(t)}(e)\right),
\\
&\alpha_1^{(t)}(e)\geq \pi_1^{(t)}(e),
\\
&\alpha_3^{(t)}(e)\geq \pi_1^{(t)}(e),
\\
&\alpha_2^{(t)}(e)\leq 1-\pi_1^{(t)}(e),
\\
&\alpha_4^{(t)}(e)\leq 1-\pi_1^{(t)}(e).
\end{aligned}
\right.
\label{eq:pairing-A1A1}
\end{equation}
where the abuse of notation $\lvert \Gamma_1^{(t)}(\ell,r)-\Gamma_3^{(t)}(\ell,r)\rvert$ should be interpreted elementwise, i.e. 
\begin{equation}
     \lvert x_1^{(t)}(\ell)-x_3^{(t)}(\ell)\rvert \leq 2\left(1-\pi_1^{(t)}(e)\right) \qquad \forall \ell : d(e,\ell) \leq r.
\end{equation} 
Notably, the first constraint implies that $\lvert x_1^{(t)}(\ell) - x_3^{(t)}(\ell)\rvert \leq 2$ if $\pi_1^{(t)}(e) = 0$, which does not constrain these variables, as the possible inputs are $\{0,1,2\}$. In the same way, the variables $\alpha_k^{(t)}(e) \in \{0,1\}$ remain unconstrained if $\pi_1^{(t)}(e) = 0$. On the other hand, if $\pi_1^{(t)}(e) = 1$, we recover Eqs.~\eqref{eq:alpha-const} and \eqref{eq:gamma-const}. Similarly, the other identity-separated pairing
$\Id\mathcal{A}\Id\mathcal{A}$ is encoded by introducing
$\pi_2^{(t)}(e)\in\{0,1\}$ and imposing
\begin{equation}
\left\{
\begin{aligned}
&\lvert \Gamma_2^{(t)}(e,r)-\Gamma_4^{(t)}(e,r)\rvert
\leq 2\left(1-\pi_2^{(t)}(e)\right),
\\
&\alpha_2^{(t)}(e)\geq \pi_2^{(t)}(e),
\\
&\alpha_4^{(t)}(e)\geq \pi_2^{(t)}(e),
\\
&\alpha_1^{(t)}(e)\leq 1-\pi_2^{(t)}(e),
\\
&\alpha_3^{(t)}(e)\leq 1-\pi_2^{(t)}(e).
\end{aligned}
\right.
\label{eq:pairing-1A1A}
\end{equation}
Crucially, we do not let $\mathcal{A} = \Id$ in these pairings; otherwise, there would be no need to constrain the context for the party $e$. For the other two pairings, we also need to ensure that at least one of $\mathcal{A}$ and $\mathcal{B}$ is not the identity and participates in a submeasurement.  Therefore, to encode the consecutive pairing
$\mathcal{A}\mathcal{A}\mathcal{B}\mathcal{B}$, we separate it into
three cases:
\begin{align}
\mathcal{A}\mathcal{A}\Id\Id \text{ case: }\qquad
&\left\{
\begin{aligned}
&\lvert \Gamma^{(t)}_1(e,r)-\Gamma^{(t)}_2(e,r)\rvert
\leq 2\left(1-\beta_{\mathrm{consec},1}^{(t)}(e)\right),
\\
&\alpha_1^{(t)}(e)\geq \beta_{\mathrm{consec},1}^{(t)}(e),
\\
&\alpha_2^{(t)}(e)\geq \beta_{\mathrm{consec},1}^{(t)}(e),
\\
&\alpha_3^{(t)}(e)\leq 1-\beta_{\mathrm{consec},1}^{(t)}(e),
\\
&\alpha_4^{(t)}(e)\leq 1-\beta_{\mathrm{consec},1}^{(t)}(e),
\end{aligned}
\right.
\\[1ex]
\Id\Id\mathcal{B}\mathcal{B} \text{ case: }\qquad
&\left\{
\begin{aligned}
&\lvert \Gamma^{(t)}_3(e,r)-\Gamma^{(t)}_4(e,r)\rvert
\leq 2\left(1-\beta_{\mathrm{consec},2}^{(t)}(e)\right),
\\
&\alpha_3^{(t)}(e)\geq \beta_{\mathrm{consec},2}^{(t)}(e),
\\
&\alpha_4^{(t)}(e)\geq \beta_{\mathrm{consec},2}^{(t)}(e),
\\
&\alpha_1^{(t)}(e)\leq 1-\beta_{\mathrm{consec},2}^{(t)}(e),
\\
&\alpha_2^{(t)}(e)\leq 1-\beta_{\mathrm{consec},2}^{(t)}(e),
\end{aligned}
\right.
\\[1ex]
\mathcal{A}\mathcal{A}\mathcal{B}\mathcal{B} \text{ case: }\qquad
&\left\{
\begin{aligned}
&\lvert \Gamma^{(t)}_1(e,r)-\Gamma^{(t)}_2(e,r)\rvert
\leq 2\left(1-\beta_{\mathrm{consec},3}^{(t)}(e)\right),
\\
&\lvert \Gamma^{(t)}_3(e,r)-\Gamma^{(t)}_4(e,r)\rvert
\leq 2\left(1-\beta_{\mathrm{consec},3}^{(t)}(e)\right),
\\
&\alpha_1^{(t)}(e)\geq \beta_{\mathrm{consec},3}^{(t)}(e),
\\
&\alpha_2^{(t)}(e)\geq \beta_{\mathrm{consec},3}^{(t)}(e),
\\
&\alpha_3^{(t)}(e)\geq \beta_{\mathrm{consec},3}^{(t)}(e),
\\
&\alpha_4^{(t)}(e)\geq \beta_{\mathrm{consec},3}^{(t)}(e),
\end{aligned}
\right.
\\[1ex]
&\beta_{\mathrm{consec},1}^{(t)}(e)
+\beta_{\mathrm{consec},2}^{(t)}(e)
+\beta_{\mathrm{consec},3}^{(t)}(e)
=
\pi_3^{(t)}(e),
\qquad
\pi_3^{(t)}(e),\beta_{\mathrm{consec},i}^{(t)}(e)\in\{0,1\}.
\label{eq:consecutive-ILP}
\end{align}
where the last constraint is to ensure that one of these cases is satisfied if $\pi^{(t)}_3(e)=1$, and unconstrained otherwise.
Similarly, the nested pairing
$\mathcal{A}\mathcal{B}\mathcal{B}\mathcal{A}$ is separated into
three cases:
\begin{align}
\mathcal{A}\Id\Id\mathcal{A}\text{ case: } \qquad
&\left\{
\begin{aligned}
&\lvert \Gamma^{(t)}_1(e,r)-\Gamma^{(t)}_4(e,r)\rvert
\leq 2\left(1-\beta_{\text{nested},1}^{(t)}(e)\right),
\\
&\alpha_1^{(t)}(e)\geq \beta_{\text{nested},1}^{(t)}(e),
\\
&\alpha_4^{(t)}(e)\geq \beta_{\text{nested},1}^{(t)}(e),
\\
&\alpha_2^{(t)}(e)\leq 1-\beta_{\text{nested},1}^{(t)}(e),
\\
&\alpha_3^{(t)}(e)\leq 1-\beta_{\text{nested},1}^{(t)}(e),
\end{aligned}
\right.
\\[1ex]
\Id\mathcal{B}\mathcal{B}\Id\text{ case: } \qquad
&\left\{
\begin{aligned}
&\lvert \Gamma^{(t)}_2(e,r)-\Gamma^{(t)}_3(e,r)\rvert
\leq 2\left(1-\beta_{\text{nested},2}^{(t)}(e)\right),
\\
&\alpha_2^{(t)}(e)\geq \beta_{\text{nested},2}^{(t)}(e),
\\
&\alpha_3^{(t)}(e)\geq \beta_{\text{nested},2}^{(t)}(e),
\\
&\alpha_1^{(t)}(e)\leq 1-\beta_{\text{nested},2}^{(t)}(e),
\\
&\alpha_4^{(t)}(e)\leq 1-\beta_{\text{nested},2}^{(t)}(e),
\end{aligned}
\right.
\\[1ex]
\mathcal{A}\mathcal{B}\mathcal{B}\mathcal{A}\text{ case: } \qquad
&\left\{
\begin{aligned}
&\lvert \Gamma^{(t)}_1(e,r)-\Gamma^{(t)}_4(e,r)\rvert
\leq 2\left(1-\beta_{\text{nested},3}^{(t)}(e)\right),
\\
&\lvert \Gamma^{(t)}_2(e,r)-\Gamma^{(t)}_3(e,r)\rvert
\leq 2\left(1-\beta_{\text{nested},3}^{(t)}(e)\right),
\\
&\alpha_1^{(t)}(e)\geq \beta_{\text{nested},3}^{(t)}(e),
\\
&\alpha_2^{(t)}(e)\geq \beta_{\text{nested},3}^{(t)}(e),
\\
&\alpha_3^{(t)}(e)\geq \beta_{\text{nested},3}^{(t)}(e),
\\
&\alpha_4^{(t)}(e)\geq \beta_{\text{nested},3}^{(t)}(e),
\end{aligned}
\right.
\\[1ex]
&\beta_{\text{nested},1}^{(t)}(e)
+\beta_{\text{nested},2}^{(t)}(e)
+\beta_{\text{nested},3}^{(t)}(e)
=
\pi_4^{(t)}(e),
\qquad
\pi_4^{(t)}(e),\beta_{\text{nested},i}^{(t)}(e)\in\{0,1\}.
\label{eq:nested-ILP}
\end{align}

 These pairings can also be straightforwardly generalized for the cases where one has $K_t$ submeasurements, and we denote by $P(K_t)$ the total number of possible pairings in this case. In general, to enforce that a party appearing in at least one submeasurement gets assigned a pairing, it is easy to see that the following constraint must be satisfied for every party $e\neq t$
\begin{equation}
    \sum_{k=1}^{K_t} \alpha_k^{(t)}(e) \leq K_t \sum_{i=1}^{P(K_t)} \pi_i^{(t)}(e).
\end{equation}
The converse must also be true, as no pairing is needed if the party is not in any submeasurement; hence, the following constraint is also necessary
\begin{equation}
    \sum_{i=1}^{P(K_t)} \pi^{(t)}_i(e) \leq P(K_t) \sum_{k=1}^{K_t} \alpha^{(t)}_k(e).
\end{equation}

For the constraint in Eq.~\eqref{eq:op-alt}, we need to enforce that the target party will end up with alternating $2$ and $0$ inputs, as well as the correct canonical context $\Gamma^{c}$, which we done by $\Gamma^{c}_x(t,r)$ for the canonical context for the $X$ operator. and $\Gamma^{c}_z(t,r)$, for the $Z$ operator. For four submeasurements, this amounts to the constraints
\begin{equation}
\left\{
    \begin{aligned}
     &\Gamma^{(t)}_2(t,r) = \Gamma^{(t)}_4(t,r) = \Gamma^c_x(t,r),  \quad x_2^{(t)}(t) = x_4^{(t)}(t) = 0 , \\
     &\Gamma^{(t)}_1(t,r) = \Gamma^{(t)}_3(t,r) = \Gamma^c_z(t,r), \quad x_1^{(t)}(t) = x_3^{(t)}(t) = 2,  \\
     & \alpha^{(t)}_1(t) = \alpha^{(t)}_2(t) = \alpha^{(t)}_3(t) = \alpha^{(t)}_4(t) = 1 ,
\end{aligned}
\right. ,
\label{eq:ZXZX-app}
\end{equation}
For a larger number of submeasurements, the construction can be generalized to allow additional pairings to appear at the target party and to permit the canonical operators to occur in different submeasurements. For example, one may need to allow words such as $Z\Id XZX$ or $ZXZ \mathcal{A}\mathcal{A}X$. It is also worth noting that the choice between obtaining the certificate $ZXZX$ or $XZXZ$ is arbitrary, since any set of submeasurements producing one of these two words can be reordered to produce the other.

Now, the observables appearing in a given submeasurement, namely those
for which $\alpha_k^{(t)}(e)=1$, must arise from the product of primal
and dual contractible loop operators, so that the resulting submeasurement
is, up to a possible sign, a stabilizer of the toric code and therefore
satisfies Eq.~\eqref{eq:eig-cond}. Recall that primal loops are generated
by products of plaquette stabilizers, while dual loops are generated by
products of star stabilizers.

To encode this condition in an integer linear program, let
$p_k^{(t)}(f)\in\{0,1\}$ indicate whether the plaquette stabilizer
associated with face $f$ is included in the product defining
$S_k^{(t)}$. Similarly, let $q_k^{(t)}(v)\in\{0,1\}$ indicate whether
the star stabilizer associated with vertex $v$ is included. For each
edge $e$, denote by $f_1(e)$ and $f_2(e)$ its two adjacent faces, and
by $v_1(e)$ and $v_2(e)$ its two endpoint vertices. An edge $e$ belongs to the resulting primal loop precisely when an odd
number of its two adjacent plaquettes is selected, and it belongs to
the resulting dual loop precisely when an odd number of its two endpoint
stars is selected. We encode these two parities by the binary variables
$\theta_k^{(t)}(e)$ and $\widehat{\theta}_k^{(t)}(e)$, respectively.
The edge participates in the submeasurement if and only if it belongs
to at least one of these two loops, i.e.,
$\alpha_k^{(t)}(e)=1$ if and only if
$\theta_k^{(t)}(e)=1$ or $\widehat{\theta}_k^{(t)}(e)=1$.
These conditions can be written using auxiliary integer variables as
\begin{equation}
\left\{
\begin{aligned}
&p^{(t)}_k(f_1(e)) + p^{(t)}_k(f_2(e))
    = \theta_{k}^{(t)}(e) +2g_{k}^{(t)}(e),
\\
&q^{(t)}_k(v_1(e)) + q^{(t)}_k(v_2(e))
    = \widehat{\theta}_{k}^{(t)}(e)
    +2\widehat{g}_{k}^{(t)}(e),
\\
&\alpha_k^{(t)}(e)\geq \theta_k^{(t)}(e),
\\
&\alpha_k^{(t)}(e)\geq \widehat{\theta}_k^{(t)}(e),
\\
&\alpha_k^{(t)}(e)
    \leq \theta_k^{(t)}(e)+\widehat{\theta}_k^{(t)}(e).
\end{aligned}
\right.
\end{equation}
Furthermore, the input given by that party must follow these loops. More precisely, a party $e$ receives the input $0$ if it is in a dual loop only, an input $1$ if it is in both a dual and a primal loop, and $2$ if it is only in a primal loop. More importantly, if $\alpha^{(t)}_k(e) = 0$, we must not impose any restrictions on $e$'s input. This can be written as the following constraint
\begin{equation}
    \lvert x^{(t)}_k(e) - \theta_{k}^{(t)}(e)  + \widehat{\theta}_k^{(t)}(e) -1 \rvert \leq 2\left(1-\alpha_{k}^{(t)}(e)\right).
\end{equation}

The sign factor of these submeasurements corresponds to the number of intersections between the primal and dual loops. For each intersection, we get a $\pm i$ factor, depending on the order of the performed operations. The number of intersections can be computed by counting the number of parties receiving the input $x^{(t)}_k(e) = 1$. For that, let $y^{(t)}_k(e) \in \{0,1\}$ be a binary variable indicating whether the input received by $e$ is one and participates in the submeasurement. More precisely,  $y^{(t)}_k(e)$ must be subject to
\begin{equation}
\left \{
\begin{aligned}
     &\lvert y^{(t)}_k(e) - 1\rvert \leq \left(1-\alpha^{(t)}_k(e)\right) + \left(1-\theta^{(t)}_k(e)\right) + \left(1-\widehat{\theta}^{(t)}_k(e)\right),
     \\
      & y^{(t)}_k(e)  \leq \alpha^{(t)}_k(e), \\
      & y^{(t)}_k(e)  \leq \theta^{(t)}_k(e), \\
      & y^{(t)}_k(e)  \leq \widehat{\theta}^{(t)}_k(e). \\
\end{aligned}
\right.
\end{equation}
The number of intersections between the primal and dual loop operators
is necessarily even. Indeed, the primal and dual loops are generated
by products of plaquette and star stabilizers, respectively, and these
operators commute. Hence, their supports overlap on an even number of
edges. The same argument applies to the surface code. Therefore, the sign associated with these submeasurements is $\lambda_k^{(t)} = \langle S^{(t)}_k \rangle$, which can be written as an ILP constraint by observing that 
\begin{equation}
    \lambda_k^{(t)}  = (-1)^{\sigma_k^{(t)}},
\end{equation}
where $\sigma_k^{(t)}\in\{0,1\}$ is a binary variable subject to the constraint
\begin{equation}
       \sum_e y^{(t)}_k(e) - 2\sigma_k^{(t)} = 4h^{(t)}_k.
\end{equation}
Hence, the constraint in Eq.~\eqref{eq:prd-const} is equivalent to
\begin{equation}
    \sum_{k=1}^{K_t} \sigma^{(t)}_k = 1 + 2\tau^{(t)},
\end{equation}
where both $h^{(t)}_k$ and $\tau^{(t)}$ are auxiliary integer variables, 

With these constraints, we have completed the ILP formulation regarding the
submeasurements $S_k^{(t)}$. In particular, the constraints above enforce
all the conditions in Eqs.~\eqref{eq:eig-cond}-\eqref{eq:prd-const},
and therefore guarantee that the resulting submeasurements provide the
anticommutation certificate required for the self-testing protocol. To conclude the full formulation of Eq.~\eqref{eq:opt-problem}, we must also specify the constraints related to the submeasurements $S^{(j)}_v$ and $S^{(m)}_f$. However, these are very similar to the ones for anticommutation. The main difference worth noting is regarding the constraint
\begin{equation}
    \prod_{j=1}^{J_v} S^{(v)}_j = \prod_{\partial e \ni v} X_e^{\Gamma_X}.
\end{equation}
Contrary to the relation in Eq.~\eqref{eq:op-alt}, the order in which each term on the right-hand side appears is not defined. This means that for every party $\partial e \ni v$, we must allow pairings similar to those shown above. For example, for $J_v = 4$ submeasurements, one must allow the pairings
\begin{equation}
    \begin{aligned}
       & X^{\Gamma_X} \mathcal{A}  \mathcal{A} \Id, \\
        &X^{\Gamma_X} \mathcal{A}  \Id \mathcal{A}, \\
        &X^{\Gamma_X} \Id \mathcal{A}  
        \mathcal{A}, \\
        &\Id X^{\Gamma_X} \mathcal{A} \mathcal{A},
    \end{aligned}
    \qquad 
    \begin{aligned}
       & \mathcal{A}  \mathcal{A} X^{\Gamma_X} \Id, \\
        & \Id \mathcal{A}  \mathcal{A} X^{\Gamma_X}, \\
        & \mathcal{A}  \Id \mathcal{A} X^{\Gamma_X}, \\
        &\mathcal{A}   \mathcal{A} \Id X^{\Gamma_X}, 
    \end{aligned}
    \quad    
    \begin{aligned}
       & X^{\Gamma_X} \Id  \Id \Id, \\
        &\Id X^{\Gamma_X}   \Id \Id, \\
        &\Id \Id X^{\Gamma_X}  \Id, \\
        &\Id \Id \Id X^{\Gamma_X}.
    \end{aligned}
    \qquad 
\end{equation}
An analogous construction is also required for the certification of the
plaquette stabilizers. In this case, the local pairings are defined so
that the product of the corresponding submeasurements reduces to the
desired canonical $Z$ operator on each edge of the plaquette and to the
identity on all other parties. These pairings can be encoded using the
same construction introduced above.

\begin{remark}
    In general, an ILP admits multiple equivalent formulations, which may differ substantially in their computational runtime. The formulation presented above is therefore not unique and was chosen primarily for clarity and pedagogical purposes. A variety of standard modeling techniques can be used to strengthen the formulation and improve the performance of generic ILP solvers; see, e.g., \cite{Vielma2015}. The numerical results reported in Section \ref{sec:MILP} were obtained using a more tailored implementation, in which the search space was reduced by exploiting properties of the problem and by imposing additional constraints explained in the main text that accelerate the solution process. The implementation of techniques to improve the performance of the ILP without restricting the search space is then left to future work.
\end{remark}

\subsection{Graph states}
\label{app:graph-state}

In this section, we show how to modify the above constraints for self-testing graph states. Notably, the general idea of the constraints remains the same as in the toric code. Indeed, all the constraints regarding the pairings, as well as Eq.~\eqref{eq:ZXZX-app}, remain exactly the same. 

The first difference arises when enforcing that each submeasurement is obtained from a product of the graph stabilizers, rather than from the primal and dual loop construction considered above. Let $q_k^{(t)}(v)\in{0,1}$ indicate whether the graph stabilizer associated with vertex $v$ is included in the product defining $S_k^{(t)}$. Recall that selecting the stabilizer centered at $v$ contributes an $X$ operator at $v$, while every selected stabilizer centered at a neighboring vertex $u\in N(v)$ contributes a $Z$ operator at $v$. Consequently, the input $x_k^{(t)}(v)$ depends both on $q_k^{(t)}(v)$ and on the parity of the selected stabilizers in the neighborhood of $v$. More specifically, if an even number of neighboring vertices satisfy $q_k^{(t)}(u)=1$, all the corresponding $Z$ contributions cancel. In this case, the input of party $v$ is $0$ if $q_k^{(t)}(v)=1$, corresponding to an $X$ operator, and is unconstrained otherwise. Conversely, if an odd number of neighboring vertices satisfy $q_k^{(t)}(u)=1$, a single $Z$ contribution remains at $v$. The input is then $2$ if $q_k^{(t)}(v)=0$, corresponding to a $Z$ operator, and $1$ if $q_k^{(t)}(v)=1$, corresponding to a $Y$ operator.

To encode this parity, we introduce a binary variable $z_k^{(t)}(v)\in\{0,1\}$, together with an auxiliary integer variable $g_k^{(t)}(v)$, such that
\begin{equation}
\sum_{u\in N(v)} q_k^{(t)}(u)
=
z_k^{(t)}(v)
+
2g_k^{(t)}(v).
\end{equation}
Thus, $z_k^{(t)}(v)=1$ if and only if an odd number of neighboring stabilizers are selected. Now the party $v$ will participate in a given submeasurement if and only if it is influenced by one of the selected stabilizers. More precisely, we have
\begin{equation}
    \left \{
    \begin{aligned}
        & \alpha^{(t)}_k(v) \geq q_k^{(t)}(v) \\
        & \alpha^{(t)}_k(v) \geq z_k^{(t)}(v) \\
        & \alpha^{(t)}_k(v) \leq q_k^{(t)}(v) +  z_k^{(t)}(v) \\
        \end{aligned}
    \right.
\end{equation}
Furthermore, we constrain the input of each party depending on whether it participates or not in the submeasurement by imposing
\begin{equation}
\left|
x_k^{(t)}(v)
-
\left(1+z_k^{(t)}(v)-q_k^{(t)}(v)\right)
\right|
\leq
2\left(1-\alpha_k^{(t)}(v)\right).
\end{equation}
Indeed, when $\alpha_k^{(t)}(v)=1$, the constraint enforces
\begin{equation}
x_k^{(t)}(v)
=
1+z_k^{(t)}(v)-q_k^{(t)}(v),
\end{equation}
yielding inputs $0$, $1$, and $2$ for the local Pauli operators $X$, $Y$, and $Z$, respectively. On the other hand, when $\alpha_k^{(t)}(v)=0$, necessarily $q_k^{(t)}(v)=z_k^{(t)}(v)=0$, and the constraint reduces to $\lvert x_k^{(t)}(v)-1\rvert\leq 2$. Since $x_k^{(t)}(v)\in\{0,1,2\}$, this leaves the input completely unconstrained, as required.

The second difference to the toric code arises when we need to compute the sign of the total submeasurement. As before, the sign of the submeasurement depends on the appearance of the Pauli $Y$, substituting the cases in which a party would otherwise be acted on by both $X$ and $Z$. Let $y_k^{(t)}(v)\in\{0,1\}$ indicate whether party $v$ receives input $1$ and participates in the submeasurement. From the construction above, this occurs precisely when both the stabilizer centered at $v$ is selected and an odd number of neighboring stabilizers are selected, i.e., when $q_k^{(t)}(v)=z_k^{(t)}(v)=1$. Hence,
\begin{equation}
\left\{
\begin{aligned}
&y_k^{(t)}(v)\leq q_k^{(t)}(v),\\
&y_k^{(t)}(v)\leq z_k^{(t)}(v),\\
&y_k^{(t)}(v)\geq q_k^{(t)}(v)+z_k^{(t)}(v)-1.
\end{aligned}
\right.
\end{equation}
Therefore, $n_y^{(t)} = \sum_v y_k^{(t)}(v)$ gives the number of $Y$ substitutions appearing in $S_k^{(t)}$. In contrast to the toric code case, however, this number is not the only factor contributing to the final sign of the submeasurement. As an example, let us compute the product of two consecutive vertices $1$ and $2$ in a three-vertex line graph
\begin{align}
    G_1G_2 &= (X_1Z_2)(Z_1X_2Z_3) \nonumber\\
    &= (-iY_1)(+iY_2)Z_3 \\
    & = + Y_1Y_2Z_3. \nonumber
\end{align}
   
On the other hand, the product of all the stabilizers is
\begin{align}
    G_1G_2G_3&= (X_1Z_2)(Z_1X_2Z_3) (Z_2X_3)\nonumber\\
    &= (-iY_1)(-X_2)(+iY_3) \\
    & = - Y_1X_2Y_3. \nonumber
\end{align}
So even though both products of stabilizers have the same number of Y's, they nonetheless have different signs attributed to them. To understand the other factor contributing to the sign, note that every time two vertices $u$ and $v$ sharing an edge are selected, the resulting operator has the form  
\begin{align}
    G_{u}G_v &= (X_uZ_vZ^{N(u)})(X_vZ_uZ^{N(v)}) \nonumber\\
    &= -X_uX_v Z_u Z_v Z^{N(u)} Z^{N(v)}.
\end{align}
Therefore, the product of all the stabilizers associated with the vertices $u$ in a set $Q$ is 
\begin{equation}
    \prod_{u \in Q} G_u = (-1)^{m} \prod_{u \in Q} X_u \prod_{u\in N(Q)_{\text{odd}}} Z_u,
\end{equation}
where $N(Q)_{\text{odd}}$ denotes the set of vertices that are in the neighborhood of an odd number of vertices in the set $Q$, and $m$ is the number of edges in the graph connecting two vertices that are inside $Q$. We can then further simplify this by substituting $XZ = -iY$, yielding
\begin{equation}
    \prod_{u \in Q} G_u = (-1)^{m + n_y/2} \prod_{\substack{u \in Q\\ u \not \in N(Q)_{\text{odd}}}} X_u \prod_{\substack{u\in N(Q)_{\text{odd}} \\ u \not \in Q}} Z_u \prod_{u \in Q \cap N(Q)_{\text{odd}}} Y_u ,
\end{equation}
where we used the fact that $(-1)^{3n_y/2}  = (-1)^{n_y/2}$. Thus, to compute the total overall sign of the submeasurements, we also need to account for the number of edges $e$ such that $q^{(t)}_k(v)=1$ for all $v \in \partial e$. For that, let $\eta^{(t)}_k(u,v) \in \{0,1\}$, we can make $\eta^{(t)}_k$ encode exactly this property for the edge $(u,v)$ by imposing
\begin{equation}
    \left \{
    \begin{aligned}
        &\eta^{(t)}_k(u,v) \leq q_k^{(t)}(u), \\
        &\eta^{(t)}_k(u,v) \leq q_k^{(t)}(v), \\
        &\eta^{(t)}_k(u,v) \geq q_k^{(t)}(u) + q_k^{(t)}(v)-1.  \\
    \end{aligned}
    \right.
\end{equation}
We then have
\begin{equation}
    m^{(t)}_k = \sum_{(u,v) \in E} \eta^{(t)}_k(u,v).
\end{equation}

Finally, to impose 
\begin{equation}
 \langle S^{(t)}_k\rangle = \lambda_k^{(t)} = (-1)^{\sigma^{(t)}_k},   
\end{equation}
we add the constraint
\begin{equation}
    \sum_{v \in V} y^{(t)}_k(v) + 2 \sum_{(u,v) \in E} \eta^{(t)}_k(u,v) - 2 \sigma^{(t)}_k = 4h^{(t)}_k,
\end{equation}
where $h^{(t)}_k$ is an auxiliary integer variable. All the other constraints to define the stabilizer certificates follow similarly.

\section{Proof of Proposition \ref{prop:robust-prop}}
\label{app:proof-noise}
Here, we prove a more general form of Proposition~\ref{prop:robust-prop}.
\begin{proposition}
Let
    \begin{equation}
    \left \{\left \{S^{(t)}_k\right\}_k^{K_t}\right\}_{t},\quad \left \{\left \{S^{(v)}_j\right\}_j^{J_v}\right\}_{v}, \quad \text{and}\quad \left \{\left \{S^{(f)}_m\right\}_m^{M_f}\right\}_{f}    
    \end{equation}
    be submeasurements that are solutions to Eq.~\eqref{eq:opt-problem}.
    For any PE $\epsilon$-simulating these submeasurements, there exists an isometry $\Phi = (\bigotimes_i \Phi_i)\otimes \Id_P$ such that
\begin{align}
     \lVert \Phi  \ket{\Psi}  - (\Pi_{TC}\otimes \Id ) \Phi \ket{\Psi}  \rVert_2 \leq \delta(\epsilon),
\end{align}
where
\begin{equation}
    \delta(\epsilon) =  \frac{\sqrt\epsilon}{2} \left[\sum_v\left(\sqrt{2}J_v + \sum_{\partial e \ni v} K_e\right) + \sqrt{2} \sum_f M_f\right].
\end{equation}
\end{proposition}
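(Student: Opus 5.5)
We follow the McKague-style robustness argument: turn each $\epsilon$-simulated expectation value into an approximate eigenvalue equation, propagate the errors through the operator products of the certificates, and push everything through the SWAP isometry of Appendix~\ref{app:proof-prop1}.

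\textbf{Step 1: approximate eigenvalue equations.} For any $\pm1$-valued observable $\widetilde{S}$ with $\lambda\in\{\pm1\}$ and $\langle\widetilde{S}\rangle\ge \lambda-\epsilon$ (in the sense $\lambda\langle \widetilde S\rangle \geq 1-\epsilon$), we have
\[
\lVert(\widetilde{S}-\lambda)\ket{\Psi}\rVert^2=2-2\lambda\langle\widetilde{S}\rangle\le 2\epsilon .
\]
Hence each submeasurement satisfies $\lVert(\widetilde{S}-\lambda)\ket{\Psi}\rVert\le\sqrt{2\epsilon}$. Since all factors are unitary, a product of $K$ such submeasurements obeys, by a telescoping triangle inequality,
\[
\Big\lVert\Big(\prod_k\widetilde{S}_k-\prod_k\lambda_k\Big)\ket{\Psi}\Big\rVert\le K\sqrt{2\epsilon}.
\]
Using the operator identities guaranteed by Eq.~\eqref{eq:opt-problem}, which hold exactly in the PE because all non-target factors cancel as self-inverse observables with matched contexts, this gives:
\begin{itemize}
\item for the anticommutation certificates, $\lVert\{\widetilde{Z}^{\Gamma_Z}_e,\widetilde{X}^{\Gamma_X}_e\}\ket{\Psi}\rVert\le K_e\sqrt{2\epsilon}$, after multiplying by the unitary $\widetilde{X}\widetilde{Z}$;
\item for the star certificates, $\lVert(\prod_{\partial e\ni v}\widetilde{X}_e-\Id)\ket{\Psi}\rVert\le J_v\sqrt{2\epsilon}$;
\item for the plaquette certificates, $\lVert(\prod_{e\in\partial f}\widetilde{Z}_e-\Id)\ket{\Psi}\rVert\le M_f\sqrt{2\epsilon}$.
\end{itemize}

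\textbf{Step 2: transport through the isometry.} We use the SWAP isometry $\Phi_e$ built from $\widetilde{X}_e=\widetilde{X}^{\Gamma_X}_e$ and $\widetilde{Z}_e=\widetilde{Z}^{\Gamma_Z}_e$. Two facts are needed:
\begin{itemize}
\item $Z_e\Phi=\Phi\widetilde{Z}_e$ holds exactly.
\item $X_e\Phi\ket{\Psi}$ differs from $\Phi\widetilde{X}_e\ket{\Psi}$ by an error controlled by the anticommutator, of size at most $\tfrac12\lVert\{\widetilde{X}_e,\widetilde{Z}_e\}\ket{\Psi}\rVert$. This comes from commuting $\widetilde{X}_e$ past the projector $(\Id\pm\widetilde{Z}_e)/2$.
\end{itemize}
The anticommutator bound must be applied to a state where the other $\widetilde{X}$'s have already been applied. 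These operators act on other tensor factors, hence commute with $\{\widetilde{X}_e,\widetilde{Z}_e\}$ and are unitary, so the norm bound is unchanged. Chaining over the four edges of a star then gives
\[
\Big\lVert\Big(A_v\Phi-\Phi\prod_{\partial e\ni v}\widetilde{X}_e\Big)\ket{\Psi}\Big\rVert\le\tfrac12\sum_{\partial e\ni v}K_e\sqrt{2\epsilon}.
\]
Combining with Step 1 yields $\lVert(A_v-\Id)\Phi\ket{\Psi}\rVert\le\sqrt{2\epsilon}(J_v+\tfrac12\sum_{\partial e\ni v}K_e)$ and $\lVert(B_f-\Id)\Phi\ket{\Psi}\rVert\le M_f\sqrt{2\epsilon}$.

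\textbf{Step 3: assemble the projector.} Write $\Id-\Pi_{TC}$ as a telescoping sum over the ordered list of commuting projectors $P_g=(\Id+g)/2$:
\[
\Id-\prod_g P_g=\sum_g\Big(\prod_{g'<g}P_{g'}\Big)(\Id-P_g).
\]
Each prefix product is a contraction, and $\lVert(\Id-P_g)\phi\rVert=\tfrac12\lVert(\Id-g)\phi\rVert$. Summing the bounds from Step 2 gives
\[
\delta=\tfrac12\sqrt{2\epsilon}\Big[\sum_v\Big(J_v+\tfrac12\sum_{\partial e\ni v}K_e\Big)+\sum_f M_f\Big].
\]
This matches the stated form up to how constants are distributed. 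In particular, the $\sqrt2$ in front of the $K_e$ term disappears if the anticommutator estimate is tracked as $\sqrt{\epsilon}$ per $K$ rather than $\sqrt{2\epsilon}$, which may require bounding $\{\widetilde X,\widetilde Z\}\ket\Psi$ via half-sums. For Proposition~\ref{prop:robust-prop} we then take $K=J=M=4$ and $J=M=1$ as appropriate, count $L^2$ vertices and faces, and the bound collapses to $L^2\sqrt{\epsilon}(\sqrt2+8)$.

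\textbf{Main obstacle.} The delicate step is Step 2: the isometry error must scale with the anticommutator acting only on $\ket{\Psi}$, not in operator norm. Since $\Phi$ is a product over all parties, we would handle this by applying the single-site estimate sequentially, always on a state of the form (unitaries on other sites)$\ket{\Psi}$. Getting the exact constants ($\tfrac12$ versus $\tfrac{1}{\sqrt2}$) will require care with the explicit form of $\Phi_e$.
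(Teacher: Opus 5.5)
Your proposal follows the paper's proof step for step. You turn the $\epsilon$-simulation into approximate eigenvalue equations, telescope the products to get $\lVert\{\widetilde Z_e,\widetilde X_e\}\ket\Psi\rVert\le\sqrt{2\epsilon}K_e$ and the star and plaquette bounds with $J_v$, $M_f$, and push through the SWAP isometry with exact intertwining for $Z$ and approximate intertwining for $X$. You then chain over the edges of a star and finish with $\lVert(\Id-\Pi_{TC})\phi\rVert\le\frac12\sum_g\lVert(\Id-g)\phi\rVert$, which the paper uses without writing out. The one step that is wrong is the constant in Step 2. Writing out $\Phi_e$ gives
\[
\bigl(X_e\Phi_e-\Phi_e\widetilde X_e\bigr)\ket\Psi=\ket{0}\otimes\Bigl(-\tfrac12\{\widetilde Z_e,\widetilde X_e\}\ket\Psi\Bigr)+\ket{1}\otimes\Bigl(\tfrac12\widetilde X_e\{\widetilde Z_e,\widetilde X_e\}\ket\Psi\Bigr).
\]
These are two orthogonal pieces of equal norm, so the error is exactly $\frac{1}{\sqrt2}\lVert\{\widetilde Z_e,\widetilde X_e\}\ket\Psi\rVert\le\sqrt{\epsilon}\,K_e$, not $\frac12\lVert\{\widetilde Z_e,\widetilde X_e\}\ket\Psi\rVert$. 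Your $\frac12$ bound fails whenever the anticommutator does not annihilate $\ket\Psi$.

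That false constant is why your final $\delta$ has coefficient $\frac{\sqrt2}{2}$ instead of $1$ in front of $\sum_{\partial e\ni v}K_e$ (inside the bracket with prefactor $\frac{\sqrt\epsilon}{2}$). Your bound is therefore strictly smaller than the stated one, not a mere redistribution of constants. For the same reason, your specialization to $L^2\sqrt\epsilon(\sqrt2+8)$ does not follow from your own constants. The suggested fix of tracking the anticommutator as $\sqrt\epsilon$ per $K$ goes in the wrong direction. Keep the anticommutator bound at $\sqrt{2\epsilon}K_e$; its $\sqrt2$ cancels against the $\frac{1}{\sqrt2}$ from the isometry. With that correction, Step 2 gives $\lVert(A_v-\Id)\Phi\ket\Psi\rVert\le\sqrt{2\epsilon}J_v+\sqrt\epsilon\sum_{\partial e\ni v}K_e$ and $\lVert(B_f-\Id)\Phi\ket\Psi\rVert\le\sqrt{2\epsilon}M_f$. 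Your Step 3 then yields exactly the stated $\delta(\epsilon)$.
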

\begin{proof}
To prove the noise robustness of the constructions shown in the main text, we closely follow the technique introduced in \cite{McKague2014}, which was also used in \cite{Meyer2026}. Crucially, it relies on two identities which we will use repeatedly.

First, we use the fact that if $\bra{\Psi} M \ket{\Psi} \geq 1-\epsilon$, then 
\begin{equation}
    \lVert \ket{\Psi} - M \ket{\Psi}\rVert_2 \leq \sqrt{2 \epsilon}.
\end{equation}
Secondly, we also use the fact that if $\lVert \ket{\Psi} - M\ket{\Psi} \rVert_2 \leq \alpha$, $\lVert \ket{\Psi} - N\ket{\Psi} \rVert_2 \leq \beta $, and $\lVert M \rVert_\infty = 1$, we have
\begin{equation}
    \lVert \ket{\Psi} - MN\ket{\Psi}\rVert_2 \leq \alpha + \beta.
\end{equation}

We then start by noting that if the PE $\epsilon$-simulates the RE, we have 
\begin{equation}
    \left \lvert \left \langle \widetilde{S}^{(t)}_k\right \rangle - \lambda_k \right \rvert \leq \epsilon, 
\end{equation}
which implies
\begin{equation}
    \left \lVert \ket{\Psi} - \lambda_k \widetilde{S}_k^{(t)} \ket{\Psi}\right \rVert_2 \leq \sqrt{2\epsilon}.
\end{equation}
Therefore
\begin{equation}
    \left \lVert \ket{\Psi} - \prod_{k=1}^{K_t}\lambda_k \widetilde{S}_k^{(t)} \ket{\Psi}\right \rVert_2 \leq \sqrt{2\epsilon} K_t.
\end{equation}
Because $\widetilde{S}^{(t)}_k$ are solutions of Eq.~\eqref{eq:opt-problem}, we have
\begin{equation}
    \left \lVert \ket{\Psi} +  \widetilde{Z}_t\widetilde{X}_t\widetilde{Z}_t\widetilde{X}_t \ket{\Psi}\right \rVert_2  = \left\lVert \left\{\widetilde{Z}_t,\widetilde{X}_t\right\} \ket{\Psi}\right \rVert_2\leq \sqrt{2\epsilon} K_t.
    \label{eq:anticomm-bound}
\end{equation}
A similar calculation for the $\epsilon$-simulation of the submeasurements $S^{(v)}_j$ and $S^{(f)}_m$ yield
\begin{align}
    &\left \lVert \ket{\Psi} -  \widetilde{A}_v \ket{\Psi}\right \rVert_2 \leq \sqrt{2 \epsilon} J_v, \\
    &\left \lVert \ket{\Psi} -  \widetilde{B}_f \ket{\Psi}\right \rVert_2 \leq \sqrt{2 \epsilon} M_f.
\end{align}

We can now use the bound in Eq.~\eqref{eq:anticomm-bound} to show
\begin{align}
    \lVert \bigl[({X}_e \otimes \Id_{\mathcal{H}''\otimes\mathcal{H}_P}) (\Phi_e\otimes \Id_P) - (\Phi_e\otimes \Id_P) &\widetilde{X}_e \bigr]\ket{\Psi} \rVert_2 ^2 \nonumber \\
    =& \left\lVert \ket{0}\otimes\left( -\frac{1}{2}\left\{\widetilde{Z}_e,\widetilde{X}_e\right\} \ket{\Psi}\right) +\ket{1} \otimes \left(\frac{\widetilde{X}_e}{2}\left\{\widetilde{Z}_e,\widetilde{X}_e\right\} \ket{\Psi}\right)\right \rVert_2 ^2\nonumber \\
     = & \left\lVert \ket{0}\otimes\left( -\frac{1}{2}\left\{\widetilde{Z}_e,\widetilde{X}_e\right\} \ket{\Psi}\right)\right \rVert_2^2 + \left\lVert\ket{1} \otimes \left(\frac{\widetilde{X}_e}{2}\left\{\widetilde{Z}_e,\widetilde{X}_e\right\} \ket{\Psi}\right)\right \rVert_2 ^2\nonumber\\
    &\leq \epsilon K_e^2,
\end{align}
where $\Phi_e$ is SWAP isometry acting on party $e$, and trivially on the other parties (see Appendix \ref{app:proof-prop1}), and the second equality comes from the orthogonality between $\ket{0}$ and $\ket{1}$. On the other hand, the analogous relation for Pauli $Z$ operators does not require the use of the bound on the anticommutator and is simply
\begin{equation}
    \left\lVert \left[({Z}_e \otimes \Id_{\mathcal{H}''\otimes \mathcal{H}_P}) (\Phi_e\otimes \Id_P) - (\Phi_e\otimes \Id_P) \widetilde{Z}_e\right ]\ket{\Psi}\right \rVert_2 = 0.
\end{equation}
Now, for two different edges $e$ and $e'$ incident on $v$, we have
\begin{align}
\Bigl\lVert \Bigl[ (X_eX_{e'}\otimes \Id_{\mathcal{H}'' \otimes \mathcal{H}_P})&(\Phi_e  \Phi_{e'}\otimes \Id_P) -
(\Phi_e  \Phi_{e'}\otimes \Id_P)\widetilde X_e\widetilde X_{e'}
\Bigr]\ket{\Psi}
\Bigr\rVert_2 \nonumber \\ 
&\leq
\Bigl\lVert
(X_e\otimes \Id_{\mathcal H'' \otimes \mathcal{H}_P})
\Bigl[
(X_{e'}\otimes \Id_{\mathcal H'' \otimes \mathcal{H}_P})(\Phi_e  \Phi_{e'}\otimes \Id_P)-
(\Phi_e  \Phi_{e'}\otimes \Id_P)\widetilde X_{e'}
\Bigr]\ket{\Psi}
\Bigr\rVert_2
\nonumber \\ &  \qquad \text{  } \qquad \text{  } \qquad \text{  }   +
\Bigl\lVert
\Bigl[
(X_e\otimes \Id_{\mathcal H''\otimes \mathcal{H}_P})(\Phi_e  \Phi_{e'}\otimes \Id_P)
-
(\Phi_e  \Phi_{e'}\otimes \Id_P)\widetilde X_e
\Bigr]
\widetilde X_{e'}\ket{\Psi}
\Bigr\rVert_2
\nonumber\\
&=
\Bigl\lVert
\Bigl[
(X_{e'}\otimes \Id_{\mathcal H'' \otimes \mathcal{H}_P})(\Phi_e  \Phi_{e'}\otimes \Id_P)
-
(\Phi_e  \Phi_{e'}\otimes \Id_P)\widetilde X_{e'}
\Bigr]\ket{\Psi}
\Bigr\rVert_2
& \nonumber\\
&\qquad \text{  } \qquad \text{  } \qquad \text{  }  +
\left\lVert
\left[
(X_e\otimes \Id_{\mathcal H'' \otimes \mathcal{H}_P})(\Phi_e  \Phi_{e'}\otimes \Id_P)
-
(\Phi_e  \Phi_{e'}\otimes \Id_P)\widetilde X_e
\right]\ket{\Psi}
\right\rVert_2
\nonumber\\
&\leq
\sqrt{\epsilon}\left(K_e+K_{e'}\right).
\end{align}
Here, the equality follows from the unitarity of $X_e$ and $\widetilde X_{e'}$, together with the fact that the observables associated with different edges act on distinct subsystems. Repeating the same argument for all edges satisfying $\partial e\ni v$ gives
\begin{equation}
\left\lVert
\left[
(A_v\otimes \Id_{\mathcal H'' \otimes \mathcal{H}_P})(\Phi\otimes \Id_P)
-
(\Phi\otimes \Id_P)\widetilde A_v
\right]\ket{\Psi}
\right\rVert_2
\leq
\sqrt{\epsilon}\sum_{\partial e\ni v}K_e.
\end{equation}
Therefore, we have 
\begin{align}
        \left\lVert
\left( \Id - A_v \otimes \Id_{\mathcal{H}'' \otimes \mathcal{H}_P}
\right)(\Phi\otimes \Id_P) \ket{\Psi}
\right\rVert_2 &\leq
\left \lVert
(\Phi\otimes \Id_P) \ket{\Psi} - (\Phi\otimes \Id_P) \widetilde{A}_v \ket{\Psi}
\right\rVert_2  \nonumber \\
& \qquad \text{ }\qquad \text{ }  \quad+  \left\lVert
\left[
(A_v\otimes \Id_{\mathcal H'' \otimes \mathcal{H}_P})(\Phi\otimes \Id_P)
-
(\Phi\otimes \Id_P)\widetilde A_v
\right]\ket{\Psi}
\right\rVert_2 \nonumber \\
&\leq \sqrt{2\epsilon} J_v + \sqrt{\epsilon}\sum_{\partial e \ni v}K_e.
\end{align}

A similar calculation also implies
\begin{equation}
\left\lVert
\left[
(B_f\otimes \Id_{\mathcal H'' \otimes \mathcal{H}_P})(\Phi\otimes \Id_P)
-
(\Phi\otimes \Id_P)\widetilde B_f
\right]\ket{\Psi}
\right\rVert_2 = 0, 
\end{equation}
and 
\begin{align}
        \left\lVert
\left( \Id - B_f \otimes \Id_{\mathcal{H}'' \otimes \mathcal{H}_P}
\right)(\Phi\otimes \Id_P) \ket{\Psi}
\right\rVert_2 &\leq
\left \lVert
(\Phi\otimes \Id_P) \ket{\Psi} - (\Phi\otimes \Id_P) \widetilde{B}_f \ket{\Psi}
\right\rVert_2 \nonumber \\ 
& \qquad \text{ }\qquad \text{ }  \quad + \left\lVert
\left[
(B_f\otimes \Id_{\mathcal H''\otimes \mathcal{H}_P})(\Phi\otimes \Id_P)
-
(\Phi\otimes \Id_P) \widetilde B_f
\right]\ket{\Psi}
\right\rVert_2 \nonumber \\
&\leq \sqrt{2\epsilon} M_f.
\end{align}

Finally, we can conclude
\begin{align}
     \Bigl\lVert
    (\Phi\otimes \Id_P)\ket{\Psi}
    -
    &(\Pi_{TC}\otimes \Id_{\mathcal H'' \otimes \mathcal{H}_P})(\Phi\otimes \Id_P)\ket{\Psi}
    \Bigr\rVert_2
     \nonumber \\&= \left \lVert (\Phi\otimes \Id_P) \ket{\Psi} - \prod_v \frac{[\Id + (A_v \otimes \Id_{\mathcal{H''}})]}{2}\prod_f \frac{[\Id + (B_f \otimes \Id_{\mathcal{H''}})]}{2} (\Phi\otimes \Id_P) \ket{\Psi}\right \rVert_2 \leq 
    \delta(\epsilon),
\end{align}
where
\begin{equation}
    \delta(\epsilon) =  \frac{\sqrt\epsilon}{2} \left[\sum_v\left(\sqrt{2}J_v + \sum_{\partial e \ni v} K_e\right) + \sqrt{2} \sum_f M_f\right].
\end{equation}
\end{proof}

\section{Classical simulation with local communication}
\label{sec:communicating-classical-strategies}

In this Appendix, we examine whether previous constructions proposed to self-test the toric code remain valid when the parties have access to some amount of classical communication. In Subsection~\ref{sec:toric-bell-one-round1}, we analyze the Bell inequality introduced in \cite{Baccari2020} and show that its quantum bound can be attained by a classical strategy when the parties have access to only a single round of classical communication. In Subsection~\ref{sec:toric-bell-one-round2}, we consider the CSS submeasurement game proposed in \cite{Hart2025}. There, we show that the statistics of a maximal winning strategy of the game can be simulated classically with a single round of communication.

\subsection{One-round local strategies achieve the quantum bound in  \cite{Baccari2020}}
\label{sec:toric-bell-one-round1}

Throughout this appendix, we follow the notation and conventions of \cite{Baccari2020}, where the scenario is defined by $N = 2L^2$ parties placed on the edges of an $L \times L$ torus. The associated stabilizers are the vertex and plaquette operators
\begin{equation}
  S_v = \prod_{i \in v} X^{(i)},
  \qquad
  S_p = \prod_{i \in p} Z^{(i)},
\end{equation}
where $i\in v$ denotes the four edges connected to the vertex $v$, and $i \in p$, the four edges defining a plaquette $p$. The corresponding Bell inequality is
\begin{equation}
  I_N^{\mathrm{tor}}
  =
  \sum_v \left \langle \widetilde S_v \right \rangle
  +
  \sum_p \left \langle \widetilde S_p \right \rangle,
\end{equation}
where the operators $\widetilde{S}_v$ and  $\widetilde{S}_p$ are obtained by choosing a distinguished edge $j$, and substituting the observables $X^{(j)}$ and $Z^{(j)}$ by the tilted observables $(A^{j}_0+A^{j}_1)/\sqrt{2}$ and $(A^{j}_0-A^{j}_1)/\sqrt{2}$, while every other edge substitutes the observables $X^{(i)}$ and $Z^{(i)}$ for $A_0^{(i)}$ and $A_1^{(i)}$, respectively. More explicitly
\begin{equation}
   X^{(i)}
  =
  \begin{cases}
    A_0^{(i)}, & i \neq j, \\[1mm]
    \frac{A_0^{(j)}+A_1^{(j)}}{\sqrt{2}}, & i=j,
  \end{cases}
  \qquad
  Z^{(i)}
  =
  \begin{cases}
    A_1^{(i)}, & i \neq j, \\[1mm]
    \frac{A_0^{(j)}-A_1^{(j)}}{\sqrt{2}}, & i=j .
  \end{cases}
\end{equation}
It was shown in \cite{Baccari2020} that the maximal quantum value of this expression is
\begin{equation}
   I_N^{\mathrm{tor}} \overset{Q}{\leq} 
  |V|+|P|
  =
  N,
\end{equation}
and that attaining this value self-tests the toric code subspace in the standard non-communicating scenario.

We now show that, once nearest-neighbor classical communication is allowed according to the same communication graph introduced in the main text, the same maximal value can already be attained by a classical strategy using a single round of input communication. 

Since communication makes the local output of a party potentially depend on inputs outside the support of the correlator being evaluated, one must specify a full input string for each correlator.

\begin{theorem}
  \label{thm:toric-bell-one-round}
  The maximal quantum value of $I_N^{\mathrm{tor}}$ can be attained by a classical strategy using one
  round of nearest-neighbor classical communication. More precisely,
  there exists a one-round classical strategy satisfying
  \begin{equation}
    \left \langle \widetilde S_v \right \rangle = 1
    \qquad \forall v,
  \end{equation}
  and
  \begin{equation}
    \left \langle \widetilde S_p \right \rangle = 1
    \qquad \forall p.
  \end{equation}
\end{theorem}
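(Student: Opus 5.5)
The plan is to exhibit an explicit strategy that uses shared randomness and one round of input communication. In it, every party other than the distinguished edge $j$ always outputs $+1$, and all the sign and magnitude bookkeeping is handled by the single party $j$. The first step is to fix full input strings for every correlator, as the remark before the theorem requires. For $\widetilde S_v$ and $\widetilde S_p$ with $j\notin v$ (resp.\ $j\notin p$), the expression is a single correlator. I assign input $0$ (resp.\ $1$) to the support and an arbitrary input to everyone else. Since all parties on the support output $+1$ deterministically, these correlators equal $1$ trivially, and no communication is needed for them.

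The nontrivial terms are those containing $j$. Expanding the tilted observables gives
\begin{equation}
\langle \widetilde S_v\rangle=\tfrac{1}{\sqrt2}\bigl(c_0^{v}+c_1^{v}\bigr),\qquad
\langle \widetilde S_p\rangle=\tfrac{1}{\sqrt2}\bigl(c_0^{p}-c_1^{p}\bigr),
\end{equation}
where $c_x^{v}$ (resp.\ $c_x^{p}$) is the correlator in which $j$ receives input $x$ and the other support edges receive $0$ (resp.\ $1$). Deterministic $\pm1$ correlators can only give $0$ or $\pm\sqrt2$, so randomness is essential. The target values are
\begin{equation}
c_0^{v}=c_1^{v}=\tfrac{1}{\sqrt2},\qquad c_0^{p}=\tfrac{1}{\sqrt2},\qquad c_1^{p}=-\tfrac{1}{\sqrt2}.
\end{equation}
Because every other support party outputs $+1$, each correlator is just the expectation of $j$'s output. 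So it suffices that $j$ outputs a random $\pm1$ with the prescribed mean, using local or shared randomness. For example, $j$ outputs $+1$ with probability $(1+1/\sqrt2)/2$ in a star context. In a plaquette context it does the same when its input is $0$, and outputs $+1$ with probability $(1-1/\sqrt2)/2$ when its input is $1$.

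This works only if $j$ can tell, from its own input and its four neighbours' inputs after one round, whether it is in a star context or a plaquette context. Its own input alone does not suffice, since both inputs occur in both contexts. This distinguishability is the main obstacle, and I resolve it with the freedom in choosing the full input strings. For the two star correlators containing $j$, I give input $0$ to every party other than $j$, including non-support parties whose outputs the referee discards. For the two plaquette correlators containing $j$, I give input $1$ to every party other than $j$. After one round of nearest-neighbour communication, as defined by the neighbourhoods in Eqs.~\eqref{eq:neigh-hor}--\eqref{eq:neigh-ver}, $j$ sees four neighbours with input $0$ in the star case and four with input $1$ in the plaquette case. So $j$'s response is a well-defined function of its context. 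Everyone else ignores communication.

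Finally I would check that this assignment is consistent. Every correlator appearing in $I_N^{\mathrm{tor}}$ has been given exactly one full input string, and the party-wise behaviour depends only on that party's distance-$1$ context, so this is a legitimate one-round classical strategy. Substituting the correlator values gives $\langle\widetilde S_v\rangle=\langle\widetilde S_p\rangle=1$ for all $v,p$. Summing over all $v$ and $p$ then yields $I_N^{\mathrm{tor}}=|V|+|P|=N$, which is the quantum maximum.
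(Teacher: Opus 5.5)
\paragraph{Gap: who chooses the non-support inputs.} Your outline matches the paper's proof: every party other than $j$ outputs $+1$, and $j$ outputs a $\pm1$ variable of mean $\pm1/\sqrt2$ depending on whether it is in a star or a plaquette term. The problem is the step you identify as the main obstacle. You make $j$'s two kinds of contexts distinguishable by choosing the inputs of parties outside the support: all $0$ around $j$ in the star terms, all $1$ in the plaquette terms. Those inputs are not at the disposal of the cheating strategy. They belong to the specification of the test, chosen by the referee, and this is exactly the freedom the main text uses to defeat communication.

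Your computation is correct for the completions you picked. But it only shows that one particular completion of $I_N^{\mathrm{tor}}$ can be simulated classically. A referee could choose differently, for instance giving $j$'s two non-support neighbours input $1$ in the star terms. Then $j$ sees two $0$'s and two $1$'s, a context in which your strategy is undefined. As written, you therefore have not shown that the inequality fails as a communication-robust test, which is the content of the theorem.

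\paragraph{Missing observation.} The support inputs alone already separate the two cases after one round, for every completion. For each plaquette $p\ni j$, let $\partial_j p$ be the two edges of $p$ that share a vertex with $j$; both are communication neighbours of $j$.
\begin{itemize}
\item In a star term at $v\ni j$, each of the two sets $\partial_j p$ contains exactly one edge incident to $v$. That edge is in the support and has input $0$, so neither $\partial_j p$ is all-$1$.
\item In a plaquette term at $p\ni j$, both edges of $\partial_j p$ are in the support and have input $1$.
\end{itemize}
So let $j$ output mean $-1/\sqrt2$ exactly when $x_j=1$ and some $\partial_j p$ is all-$1$, and mean $+1/\sqrt2$ otherwise. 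This is a single one-round strategy that attains every term equal to $1$ regardless of how the non-support inputs are filled in, and it is the paper's argument. With this replacement, the rest of your proof goes through unchanged.
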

\begin{proof} We prove this by showing an explicit strategy attaining the quantum bound. For every party $i \neq j$, the strategy is to deterministically output $a_i = +1$, independently of the input, which for every vertex $v$ and plaquette $p$ not containing the party $j$ already satisfies
\begin{equation}
    \left \langle \widetilde S_v \right \rangle = \left \langle \prod_{i \in v} a_i\right \rangle= 1, \qquad \left \langle \widetilde S_p \right \rangle  = \left \langle \prod_{i \in p} a_i\right \rangle= 1.
  \end{equation}
  It therefore remains only to specify the strategy of the distinguished party $j$. Let $p$ be one of the two plaquettes containing the edge $j$. We denote by
  \begin{equation}
    \partial_j p
    =
    \left\{
      i \in p \setminus\{j\}
      :
      i \text{ shares a vertex with } j
    \right\}
  \end{equation}
  the two edges of $p$ which are adjacent to $j$, as $|\partial_j p|=2$. Moreover, both edges in $\partial_j p$ are neighbors of $j$ in the communication graph. Hence, after one round of communication, party $j$ knows the inputs $x_i$ for every $i \in \partial_j p$, for each of the two plaquettes $p \ni j$. Define
  \begin{equation}
    C_j
    =
    \begin{cases}
      1,
      &
      \exists p \ni j, \; \forall i \in \partial_j p, \; x_i = 1 \\
      0,
      &
      \text{otherwise}.
    \end{cases}
  \end{equation}
  Thus, $C_j$ depends only on information available to $j$ after one round of communication. Let $R \in\{-1,+1\}$ be a random variable satisfying $\mathbb{E}[R]=1/\sqrt{2}$, that is
  \begin{equation}
    \Pr[R=-1]
    =
    \frac{1-1/\sqrt{2}}{2},
    \qquad
    \Pr[R=+1]
    =
    \frac{1+1/\sqrt{2}}{2}.
  \end{equation}
The strategy for the party $j$ is then to output
  \begin{equation}
    a_j
    =
    \begin{cases}
      -R,
      &
      x_j=1 \text{ and } C_j=1,
      \\[1mm]
      +R,
      &
      \text{otherwise}.
    \end{cases}
  \end{equation}
  
  In the following, we show that this strategy attains the quantum value for the remaining correlators. Let us first consider the correlators associated with a vertex $v$ containing $j$. The three other edges of $v$ will thus receive the input $0$. Consider either of the two plaquettes $p$ containing $j$. Exactly one of the two edges in $\partial_j p$ is also incident on the vertex $v$. Since this edge belongs to $v \setminus\{j\}$, it must have received the input $0$. Consequently, for both plaquettes $p \ni j$, it is impossible that all the inputs on $\partial_j p$ are equal to $1$, and therefore
  \begin{equation}
    C_j=0.
  \end{equation}
  Hence party $j$ outputs $+R$, independently of whether its own input is $0$ or $1$. Since every other party outputs $+1$, it follows that
  \begin{equation}
    \left \langle
    A_0^{(j)}
    \prod_{i \in v \setminus\{j\}} A_0^{(i)}
    \right \rangle
    =
    \mathbb{E}[R]
    =
    \frac{1}{\sqrt{2}},
  \end{equation}
  and likewise
  \begin{equation}
    \left \langle
    A_1^{(j)}
    \prod_{i \in v \setminus\{j\}} A_0^{(i)}
    \right \rangle
    =
    \mathbb{E}[R]
    =
    \frac{1}{\sqrt{2}}.
  \end{equation}
  Therefore,
  \begin{align}
    \left \langle \widetilde S_v \right \rangle
    &=
    \left \langle
    \frac{A_0^{(j)}+A_1^{(j)}}{\sqrt{2}}
    \prod_{i \in v \setminus\{j\}} A_0^{(i)}
    \right \rangle
    \nonumber\\
    &=
    \frac{1}{\sqrt{2}}
    \left(
      \frac{1}{\sqrt{2}}
      +
      \frac{1}{\sqrt{2}}
    \right)
    \nonumber\\
    &=1.
  \end{align}
  Next, consider the correlators associated with a plaquette $p$ be containing $j$. The three remaining edges of $p$ receive input $1$. In particular, the two edges belonging to $\partial_j p$ both receive input $1$. Hence
  \begin{equation}
    C_j=1.
  \end{equation}
  If the input of $j$ is $0$, then the output is $+R$, whereas if the input
  of $j$ is $1$, the output is $-R$. Therefore
  \begin{equation}
    \left \langle
    A_0^{(j)}
    \prod_{i \in p \setminus\{j\}} A_1^{(i)}
    \right \rangle
    =
    \mathbb{E}[R]
    =
    \frac{1}{\sqrt{2}},
  \end{equation}
  while
  \begin{equation}
    \left \langle
    A_1^{(j)}
    \prod_{i \in p \setminus\{j\}} A_1^{(i)}
    \right \rangle
    =
    -\mathbb{E}[R]
    =
    -\frac{1}{\sqrt{2}}.
  \end{equation}
  Consequently,
  \begin{align}
    \left \langle \widetilde S_p \right \rangle
    &=
    \left \langle
    \frac{A_0^{(j)}-A_1^{(j)}}{\sqrt{2}}
    \prod_{i \in p \setminus\{j\}} A_1^{(i)}
    \right \rangle
    \nonumber\\
    &=
    \frac{1}{\sqrt{2}}
    \left(
      \frac{1}{\sqrt{2}}
      -
      \left(-\frac{1}{\sqrt{2}}\right)
    \right)
    \nonumber\\
    &=1.
  \end{align}

  Which allow us to conclude that this strategy attains
  \begin{equation}
    I_N^{\mathrm{tor}}
    =
    |V|+|P|
    =
    N.
  \end{equation}
\end{proof}

\subsection{The CSS submeasurement game from \cite{Hart2025} is not rigid against $1$ round of communication}
\label{sec:toric-bell-one-round2}

Throughout this subsection, we follow the notation and conventions of \cite{Hart2025}. In the version of their proposed game tailored for the toric code, a party is associated with every edge $e$ of the lattice, and the referee queries the parties by assigning as inputs a pair of bits $(a_e, b_e)$, associated with a Pauli string $P$. Their corresponding local observables are denoted by
\begin{equation}
    E_e, A_e, B_e, C_e,
\end{equation}
for the inputs
\begin{equation}
    (a_e, b_e) = (0,0), (0,1), (1,0), (1,1),
\end{equation}
respectively. In the reference experiment, these correspond exactly to the Pauli operators $\Id, Z, X, Y$. The game proposed in \cite{Hart2025} can then be won if, for all queries of Pauli strings $P$ corresponding to a substring of plus or minus a stabilizer of the toric code, the player outputs a bit $y_i(a_i,b_i)$ such that
\begin{equation}
    \sum_{i \in \Lambda_P} y_i = \frac{1}{2} \sum_{i \in \Lambda_P} a_ib_i  \mod 2,
    \label{eq:win-cond}
\end{equation}
where $\Lambda_P = \supp{P}$. To prove the rigidity of the optimal strategy of this game, the authors in \cite{Hart2025} first use the stabilizer-generator queries that imposes that, when acting on their shared state, we have
\begin{equation}  \label{eq:Hart2025_stabilizer}
    \left \langle \prod_{\partial e \ni  v} A_e \right \rangle = +1, \qquad \left \langle \prod_{e \in \partial p} B_e \right \rangle = +1,
\end{equation}
and subsequently uses the four local queries which correspond to the submeasurements $S_{12}, S_{23}, S_{31}$, and $S_{0}$, which can be diagrammatically represented as 
\begin{equation}\label{eq:Hart2025_anticommutation}
\begin{tikzpicture}[
    redbox/.style={draw=red, thick, minimum size=1cm},
    bluebox/.style={draw=blue, thick, minimum size=1cm},
    bluerect/.style={draw=blue, thick, minimum width=1cm, minimum height=2cm},
    cross/.style={circle, draw, fill=white, inner sep=2pt}
]

\node at (0,0) {$S_{12}\coloneqq$};
\node[redbox]  at (1.5,0) {};
\node[bluebox] at (2,.5) {};
\node[cross] at (1.5,.5) {};
\node[cross] at (2,0) {};

\node at (4,0) {$S_{23}\coloneqq$};
\node[redbox]  at (5.5,0) {};
\node[bluebox] at (6,-.5) {};
\node[cross] at (6,0) {};
\node[cross] at (5.5,-.5) {};

\node at (8,0) {$S_{31}\coloneqq$};
\node[redbox]   at (9.5,0) {};
\node[bluerect] at (10,0) {};
\node[cross] at (9.5,.5) {};
\node[cross] at (9.5,-.5) {};

\node at (12,0) {$S_0\coloneqq$};
\node[redbox] at (13.5,0) {};

\end{tikzpicture}
\end{equation}
as well as its translated versions, to certify the anticommutation relations of these observables. Here, red edges denote the parties applying $A_e$, blue edges correspond to $B_e$, and white circles to $C_e$. For these queries, the win condition in Eq.~\eqref{eq:win-cond} imply that these operators satisfy
\begin{equation}
    \langle S_{12} \rangle = \langle S_{23} \rangle = \langle S_{31} \rangle = -1, \qquad \langle S_{0} \rangle = +1.
\end{equation}
These correlators are then used to show that the only possible winning strategy for their game corresponds to those where the parties share a state in the toric code.
At the end of their rigidity argument, the queries are precisely identified, together with the ones for the stabilizer-generator, as the smallest collection of queries required for this self-testing. The queries associated with all the other Pauli strings are not used. We now show that these correlations can be reproduced by a deterministic classical strategy with one round of nearest-neighbor classical communication.

\begin{theorem}
    The stabilizer-generator correlations Eq.~(\ref{eq:Hart2025_stabilizer}) and all translated and rotated copies of the correlations Eq.~(\ref{eq:Hart2025_anticommutation}) can be reproduced exactly by a deterministic classical strategy using one round of nearest-neighbor communication.
\end{theorem}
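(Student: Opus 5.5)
The plan is to write down an explicit deterministic one-round strategy and check it query by query, in the spirit of the proof of Theorem~\ref{thm:toric-bell-one-round}. Every party outputs $+1$ by default. The output can only deviate from $+1$ when the party receives the input $(1,1)$, i.e.\ when it is asked for $C_e$. We fix the full input string of each query by giving $(0,0)$ (the identity $E_e$) to every edge outside $\Lambda_P$, exactly as the referee would. Then each party's one-round context is simply the restriction of the queried Pauli pattern to its four neighbours in the communication graph of Eqs.~\eqref{eq:neigh-hor}--\eqref{eq:neigh-ver}.

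\textbf{Stabilizer queries.} The star and plaquette queries of Eq.~\eqref{eq:Hart2025_stabilizer} contain no input $(1,1)$. Under the default rule every party therefore outputs $+1$, so both correlators equal $+1$ automatically. The same holds for $S_0$ and all its translates and rotations, which are single plaquettes of $A_e$'s.

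\textbf{Anticommutation queries.} Each of $S_{12}$, $S_{23}$, $S_{31}$ contains exactly two $C$-edges, namely the two overlap edges of the red and blue loops. The task is therefore to design a local rule under which exactly one of the two $C$-parties outputs $-1$, for every translated and rotated copy. My concrete choice is as follows. A $C$-party outputs $-1$ if and only if, reading its four neighbours, the red ($A$ or $C$) edges of the primal plaquette it belongs to lie on a distinguished side of it. For horizontal $C$-edges, this means the adjacent vertical edges \emph{below} it carry the primal loop. For vertical $C$-edges, it means the adjacent horizontal edges to its \emph{left} carry the primal loop. This information is available after one round, because the edges of the plaquette adjacent to $e$ are neighbours of $e$ in the communication graph, as used in Theorem~\ref{thm:toric-bell-one-round}. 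The intent is to exploit the following geometry. In $S_{12}$ and $S_{23}$ the two $C$-edges are one horizontal and one vertical edge sharing a corner of the red plaquette. In $S_{31}$ they are the two opposite horizontal edges of the red plaquette. In each configuration the plaquette sits on a definite side of each $C$-edge, so an orientation rule singles out exactly one of the pair.

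\textbf{Main obstacle.} The hard step is the exhaustive verification that one fixed rule produces an odd number of $-1$'s in all four rotations of each of $S_{12}, S_{23}, S_{31}$. In particular, the two $C$-edges of $S_{31}$ are at graph distance $2$ and cannot see each other. Their outputs must therefore be coordinated purely through the surrounding $A/B$ pattern, which differs between the top and bottom edges of the plaquette, since the blue rectangle crosses each differently. If the "below/left" rule fails for some rotated copy, I would try next to let the decision depend jointly on which neighbours carry $A$ versus $B$. This makes the rule a lookup table indexed by the (few) possible neighbourhood patterns. I would then check, by enumerating the twelve rotated query shapes, that the table can be filled consistently. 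Consistency should be possible because the pattern seen by a $C$-edge determines which of the twelve shapes, and which position within it, it occupies. Finally, I would confirm that no rotated $S_{31}$ or $S_{12}$ pattern coincides locally with a stabilizer pattern. This is immediate, since stabilizer queries contain no $C$ inputs.
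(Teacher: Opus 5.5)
Your concrete rule does not work, and it breaks on $S_{12}$ itself, not on $S_{31}$. In $S_{12}$ the star sits at the top-right corner of the red plaquette, so the two $C$-edges are the top edge and the right edge. The top edge has the plaquette below it, so it outputs $-1$ under your rule. The right edge has the plaquette to its left, so it also outputs $-1$. The product is $+1$ instead of $-1$. In the $180^\circ$ rotation (star at the bottom-left corner) neither $C$-edge flips, which is again wrong.

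No choice of side convention fixes this. Suppose a $C$-party's decision depends only on its orientation and on which side of it the plaquette lies. The queries with the star at the top-right and bottom-right corners have the same right vertical $C$-edge, so that edge gives the same output in both. This forces a horizontal $C$ to output the same value whether the plaquette is above or below it. The rotation of $S_{31}$ with $C$'s on the top and bottom edges demands opposite values.

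Your rule in fact handles every rotation of $S_{31}$ correctly. What is missing is a way to tell the corner-type queries ($S_{12}$ and its rotations, which include $S_{23}$) apart from the $S_{31}$-type ones. Your lookup-table fallback could be completed: the $16$ neighbourhood patterns seen by $C$-parties across the $8$ distinct shapes are indeed pairwise distinct. However, you only assert this, and that verification is the entire content of the proof. As written, the argument is incomplete.

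The missing idea is to use $n_C(e)$, the number of communication neighbours receiving $C$. In corner-type queries the two $C$-edges share a vertex and are neighbours, so each sees $n_C=1$. In $S_{31}$-type queries they are at distance $2$, so each sees $n_C=0$.

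Keeping your restriction that only $C$-parties deviate, one valid rule is:
\begin{itemize}
\item a horizontal $C$-party with $n_C=1$ outputs $-1$;
\item a $C$-party with $n_C=0$ outputs $-1$ iff its two $A$-neighbours lie below it (horizontal edge) or to its left (vertical edge);
\item everything else outputs $+1$, which is harmless because stabilizer queries and $S_0$ contain no $C$.
\end{itemize}

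The paper's strategy uses the same $n_C=1$ condition for the corner-type queries but treats $S_{31}$ differently. It lets the unique $A$-party with two $C$- and two $B$-neighbours flip; this is the plaquette edge on the side of the two stars. A mirrored $B$-rule covers the $A\leftrightarrow B$-swapped queries, so $S_{31}$ is handled without any $C$-party deviating.
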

\begin{proof} To show that these queries are no longer sufficient to self-test the toric code, it is enough to exhibit a classical strategy that, with one round of classical communication, satisfies the winning condition. For that, let
    \begin{equation}
        \tau_e \in \{E_e,A_e,B_e,C_e\}
    \end{equation}
    denote the observable corresponding to the input received by party $e$, and let $N(e)$ denote its four neighbors in the communication graph. After one round of communication, party $e$ will then learn the inputs received by all $f\in N(e)$.

    To win the game, they then implement the following deterministic strategy. For $O\in\{A,B,C\}$, define
    \begin{equation}
        n_O(e)=\left|\left\{f\in N(e):\tau_f=O_f\right\}\right|.
    \end{equation}
    Each party $e$ outputs $y_e=1$ if one of the following conditions is satisfied:
    \begin{equation}
    \begin{aligned}
        &1.\quad \tau_e=C_e,\qquad n_C(e)=1,
        \qquad e\text{ is a horizontal edge};\\
        &2.\quad \tau_e=A_e,\qquad n_C(e)=2, \qquad n_B(e)=2; \\
        &3.\quad \tau_e=B_e,\qquad n_C(e)=2, \qquad n_A(e)=2; \\        
    \end{aligned}
    \end{equation}
    In every other case, the party outputs $y_e=0$. Notice that this strategy depends only on the input received by the party, the orientation of its edge, and the inputs received by its nearest neighbors, and can therefore be implemented after a single round of communication.

    We now show that this strategy satisfies the winning condition in Eq.~\eqref{eq:win-cond}. First, for the stabilizer-generator queries and for $S_0$, no party receives the input $C$. Therefore, neither of the two conditions above can be satisfied, and every party outputs $0$. Since $a_eb_e=0$ for all parties in these queries, we obtain
    \begin{equation}
        \sum_{e\in\Lambda_P}y_e
        =
        \frac{1}{2}\sum_{e\in\Lambda_P}a_eb_e
        =0
        \pmod 2.
    \end{equation}

    For $S_{12}$ and $S_{23}$, exactly two parties receive the input $C$, and these correspond to perpendicular neighboring edges. Hence both have $n_C(e)=1$, but exactly one of them is horizontal and therefore satisfies condition $1$. No party satisfies condition $2$ or $3$, so exactly one party outputs $1$. Since $a_eb_e=1$ precisely for the two parties receiving $C$, it follows that
    \begin{equation}
        \sum_{e\in\Lambda_P}y_e
        =1
        =
        \frac{1}{2}\sum_{e\in\Lambda_P}a_eb_e
        \pmod 2.
    \end{equation}

    Finally, for $S_{31}$, the two parties receiving $C$ correspond to parallel edges, and therefore neither satisfies condition $1$. Instead, exactly one party receiving $A$ has two neighboring parties receiving $C$ and two receiving $B$, and therefore satisfies condition $2$. Thus, again, exactly one party outputs $1$, and
    \begin{equation}
        \sum_{e\in\Lambda_P}y_e
        =1
        =
        \frac{1}{2}\sum_{e\in\Lambda_P}a_eb_e
        \pmod 2.
    \end{equation}

    The same local argument applies to all translated and rotated copies of these queries, and even for the queries corresponding to swapping the roles of $A_e$ and $B_e$ in Eq.~\eqref{eq:Hart2025_anticommutation}, by using condition 3. Hence, all the correlations used in the rigidity argument of \cite{Hart2025} satisfy the winning condition under this deterministic classical strategy with one round of nearest-neighbor communication.
\end{proof}
\begin{remark}
    The theorem concerns only the collection of queries explicitly used in the rigidity proof of \cite{Hart2025}. It does not imply that the full CSS submeasurement game is one-round classically simulable. In fact, the CSS submeasurement game is defined using all valid submeasurements of the relevant stabilizers, and therefore already contains, at the level of the tested submeasurements, the more extended queries needed to obtain rigidity in the presence of communication, including those used in our construction. The only additional ingredient in our communication-robust protocol is that we also specify the inputs assigned to parties that do not participate in the submeasurement, whose outputs are discarded by the referee but whose inputs can affect the local contexts available after communication. What fails under one round of communication is instead the particular rigidity argument given in \cite{Hart2025}which relies only on two tests which are sufficient to show the anti-commutaion relations, and we have shown that the behaviors of these two tests admit an exact one-round classical simulation. Consequently, their proof establishes rigidity only in the non-communicating setting and cannot be extended directly to one round of communication. Achieving rigidity against communication requires making essential use of additional submeasurement queries already present in their game, together with an appropriate choice of the inputs outside the corresponding submeasurements.
\end{remark}

\section{Smallest lattice sizes found for bounded communication}
\label{app:minimal_lattices}

For completeness, we present the explicit measurement patterns associated with the smallest lattice sizes found by our numerical search for communication restricted to graph distances $r=0,\dots,5$.
All lattices have periodic boundary conditions, and the four panels in each row represent the submeasurements $S_1,S_2,S_3,S_4$, from left to right.
In the anticommutation certificates, the black square marks the target edge.
In the stabilizer certificates, each black square marks a local observable that survives in the final stabilizer product and appears only in the submeasurement contributing that observable. We omit the constructions for $r=6,\dots,9$, once the lattices become too big to be displayed here, and we instead refer for their plots in \cite{froes2026code}.

Since the solver searches only for feasible solutions, some of the constructions presented here admit straightforward simplifications. For example, the submeasurements yielding the anticommutation certificate in Fig.~\ref{fig:anticommutation_h_l4} already use the canonical context defined in the main text. They can therefore be reused to certify anticommutation at vertical edges by just rotating them by 90°, as well as be used to certify the star and plaquette stabilizers using the submeasurements in Eqs.~\eqref{eq:star-submeasurement} and~\eqref{eq:plaquette-submeasurement}, respectively.

% \newpage
\paragraph{$r=0$.}
The smallest lattice found for $r=0$ has side length $L=2$.
The corresponding measurement patterns are shown in Fig.~\ref{fig:patterns_L2}.
\nopagebreak
\begin{figure}[H]
    \centering
    \begin{subfigure}{0.87\linewidth}
        \centering
        \includegraphics[width=\linewidth]{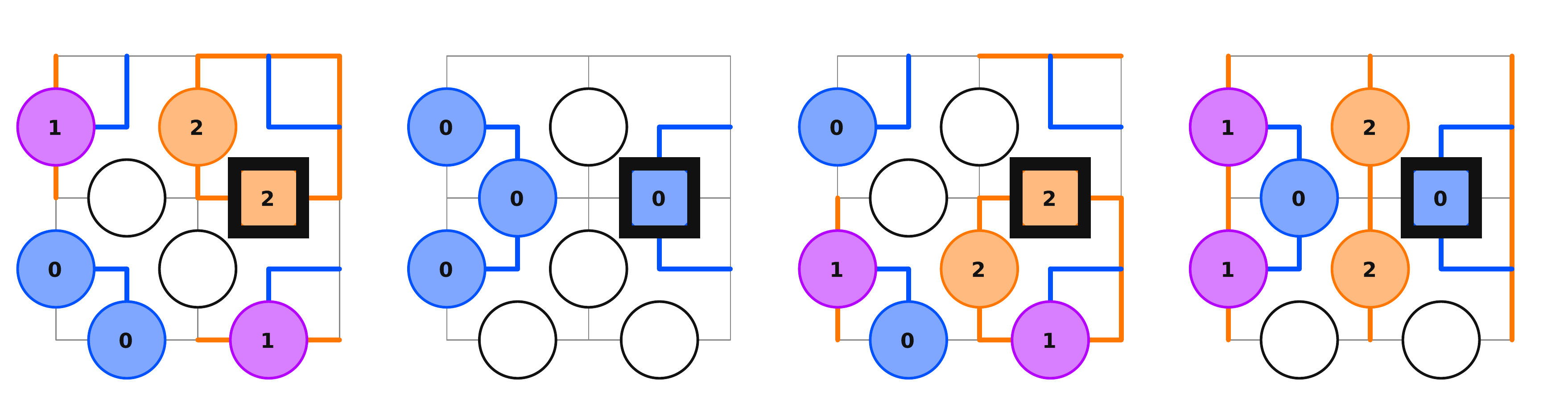}
        \caption{Anticommutation certificate (horizontal edge).}
        \label{fig:anticommutation_h_l2}
    \end{subfigure}
    
    \begin{subfigure}{0.87\linewidth}
        \centering
        \includegraphics[width=\linewidth]{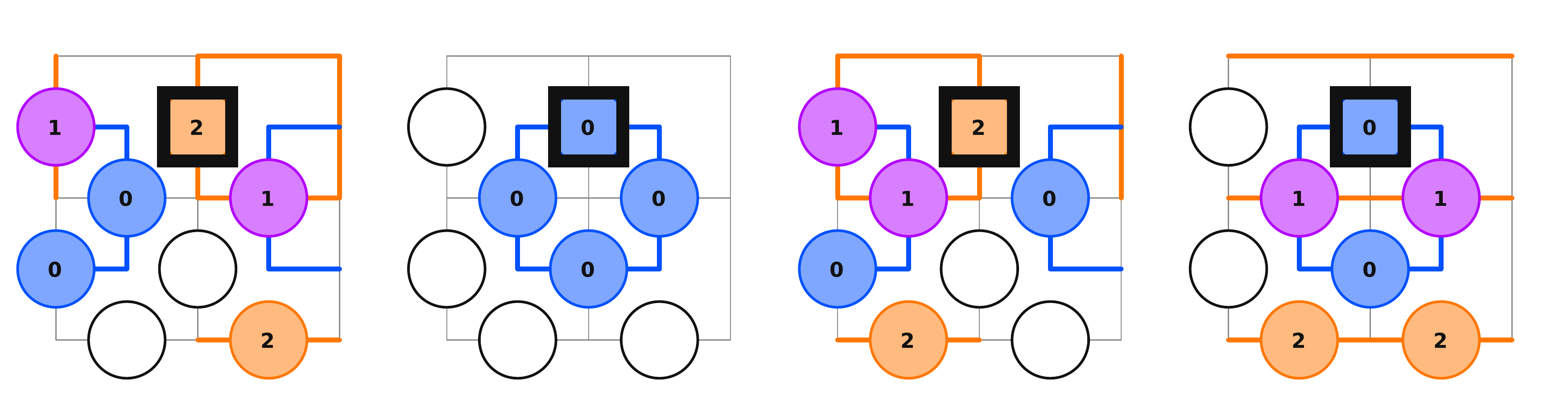}
        \caption{Anticommutation certificate (vertical edge).}
        \label{fig:anticommutation_v_l2}
    \end{subfigure}
    
    \begin{subfigure}{0.87\linewidth}
        \centering
        \includegraphics[width=\linewidth]{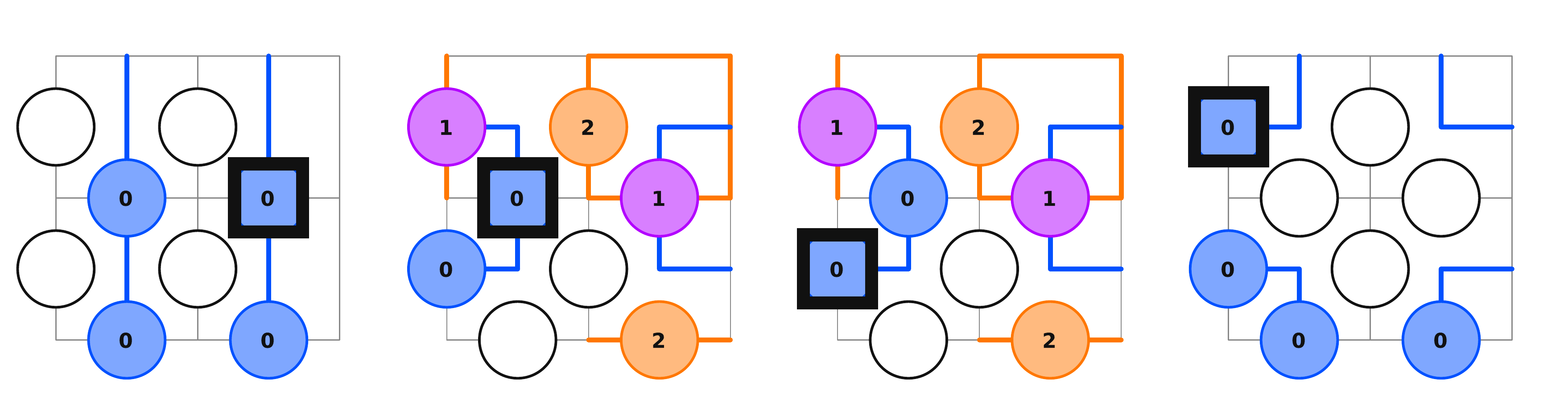}
        \caption{Star stabilizer certificate.}
        \label{fig:star_l2}
    \end{subfigure}
    
    \begin{subfigure}{0.87\linewidth}
        \centering
        \includegraphics[width=\linewidth]{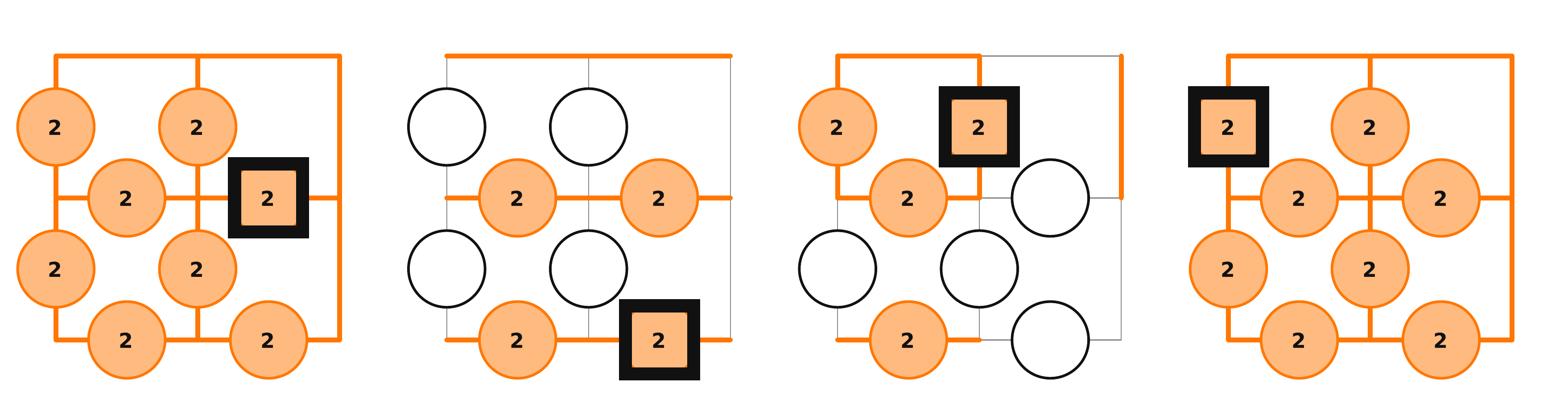}
        \caption{Plaquette stabilizer certificate.}
        \label{fig:plaquette_l2}
    \end{subfigure}
    \caption{Measurement patterns found for $r=0$ rounds of classical communication on a lattice of side length $L=2$.
    The panels show the certificates for (\subref*{fig:anticommutation_h_l2}) and (\subref*{fig:anticommutation_v_l2}) the anticommutation relations, (\subref*{fig:star_l2}) the star stabilizer, and (\subref*{fig:plaquette_l2}) the plaquette stabilizer.
    }
    \label{fig:patterns_L2}
\end{figure}

\newpage
\paragraph{$r=1$.}
The smallest lattice found for $r=1$ has side length $L=4$.
The corresponding measurement patterns are shown in Fig.~\ref{fig:patterns_L4}.
\nopagebreak
\begin{figure}[H]
    \centering
    \begin{subfigure}{0.95\linewidth}
        \centering
        \includegraphics[width=\linewidth]{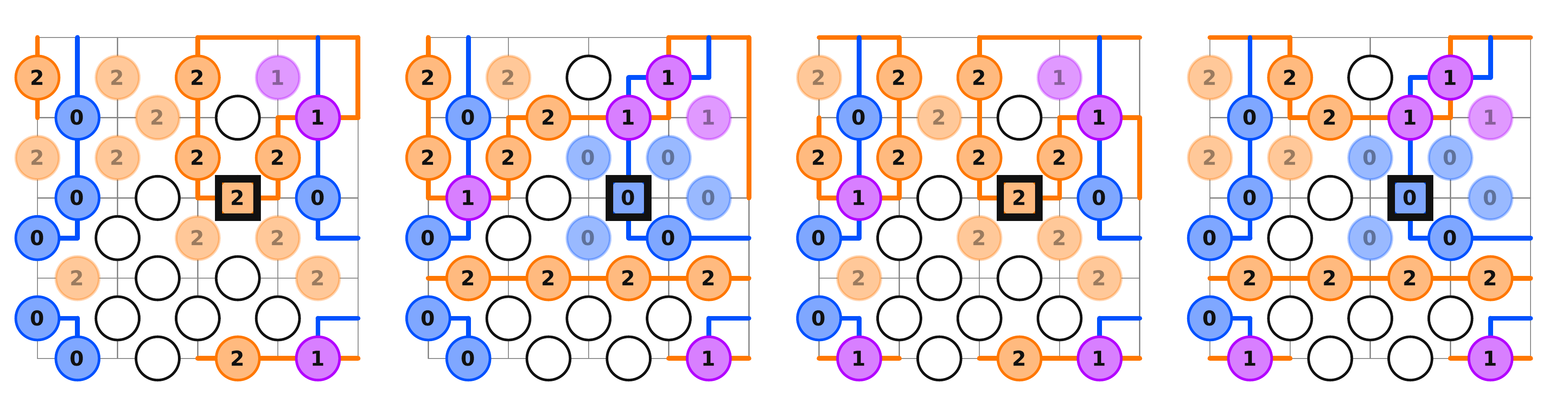}
        \caption{Anticommutation certificate (horizontal edge).}
        \label{fig:anticommutation_h_l4}
    \end{subfigure}
    
    \begin{subfigure}{0.95\linewidth}
        \centering
        \includegraphics[width=\linewidth]{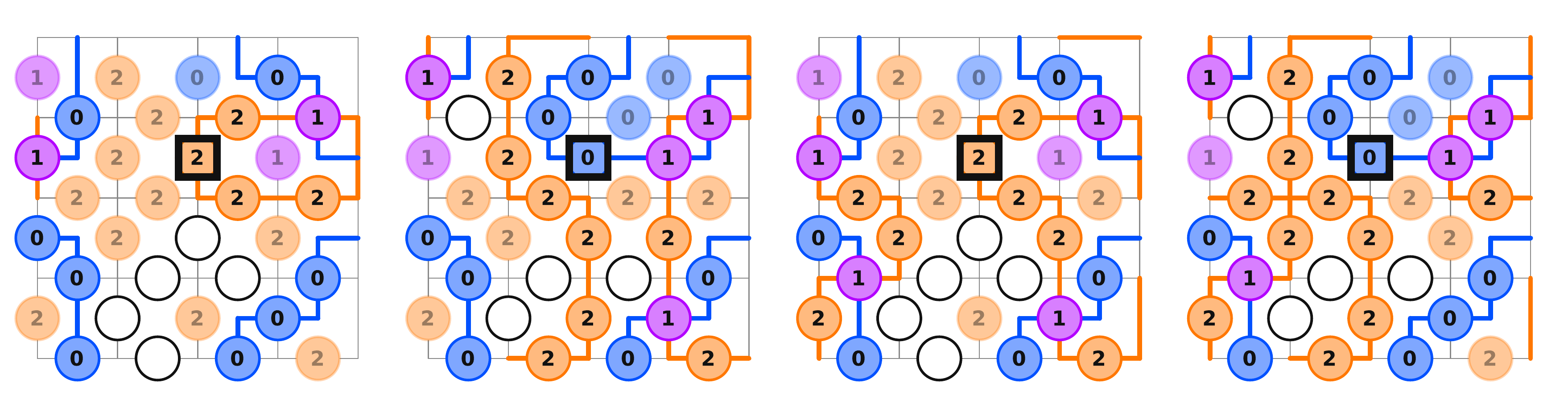}
        \caption{Anticommutation certificate (vertical edge).}
        \label{fig:anticommutation_v_l4}
    \end{subfigure}
    
    \begin{subfigure}{0.95\linewidth}
        \centering
        \includegraphics[width=\linewidth]{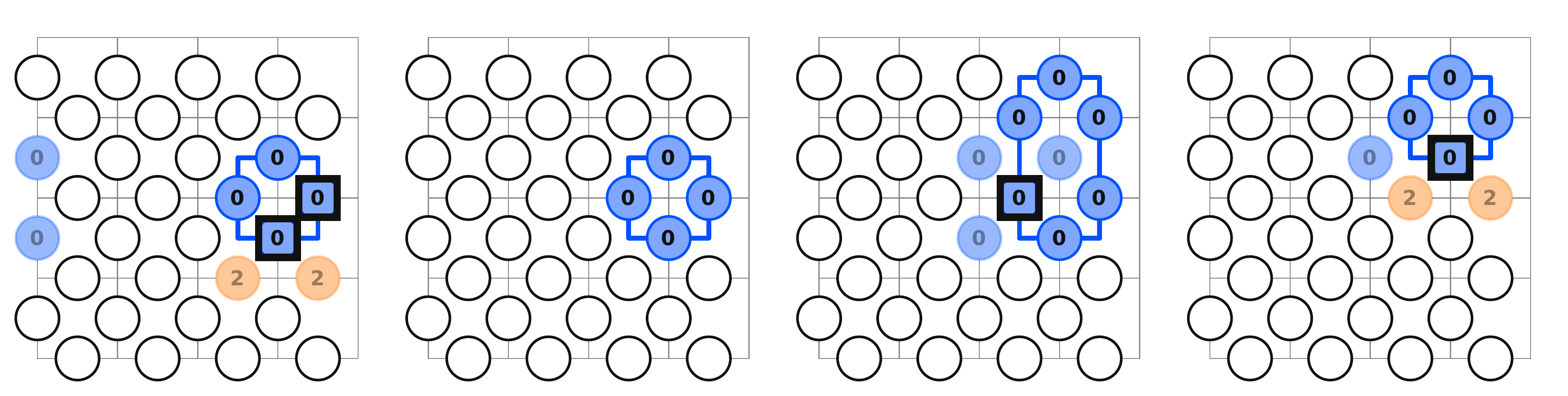}
        \caption{Star stabilizer certificate.}
        \label{fig:star_l4}
    \end{subfigure}
    
    \begin{subfigure}{0.95\linewidth}
        \centering
        \includegraphics[width=\linewidth]{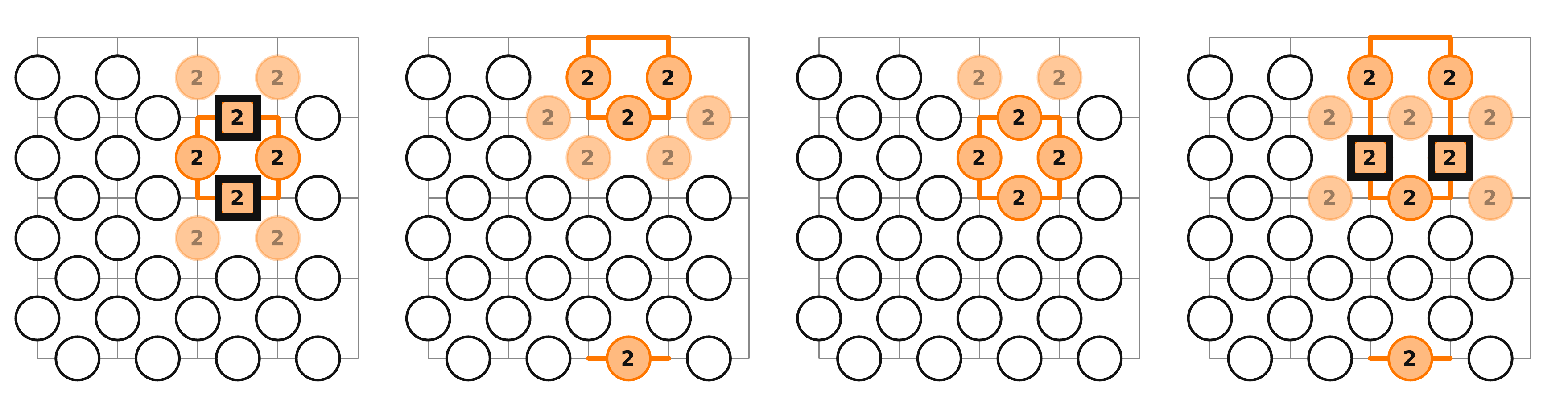}
        \caption{Plaquette stabilizer certificate.}
        \label{fig:plaquette_l4}
    \end{subfigure}
    \caption{Measurement patterns found for $r=1$ rounds of classical communication on a lattice of side length $L=4$.
    The panels show the certificates for (\subref*{fig:anticommutation_h_l4}) and (\subref*{fig:anticommutation_v_l4}) the anticommutation relations, (\subref*{fig:star_l4}) the star stabilizer, and (\subref*{fig:plaquette_l4}) the plaquette stabilizer.
    }
    \label{fig:patterns_L4}
\end{figure}

\newpage
\paragraph{$r=2$.}
The smallest lattice found for $r=2$ has side length $L=6$.
The corresponding measurement patterns are shown in Fig.~\ref{fig:patterns_L6}.
\nopagebreak
\begin{figure}[H]
    \centering
    \begin{subfigure}{0.95\linewidth}
        \centering
        \includegraphics[width=\linewidth]{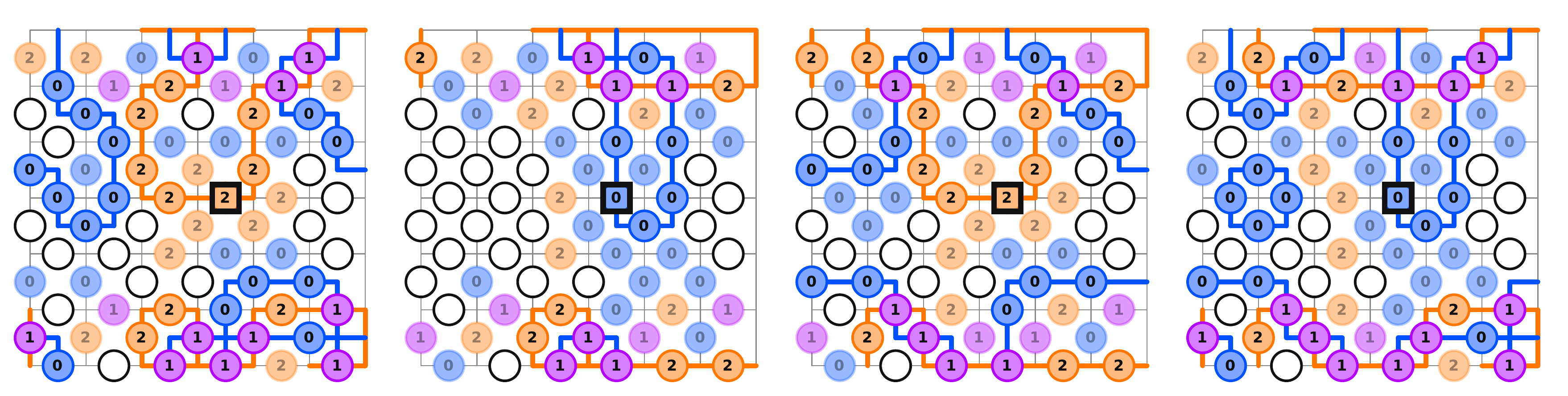}
        \caption{Anticommutation certificate (horizontal edge).}
        \label{fig:anticommutation_h_l6}
    \end{subfigure}
    
    \begin{subfigure}{0.95\linewidth}
        \centering
        \includegraphics[width=\linewidth]{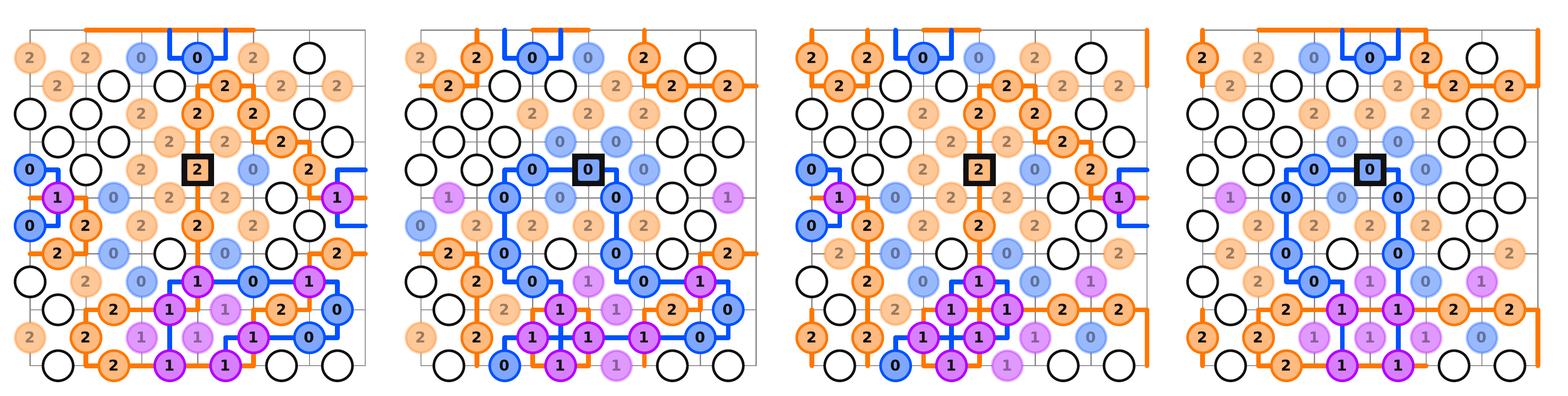}
        \caption{Anticommutation certificate (vertical edge).}
        \label{fig:anticommutation_v_l6}
    \end{subfigure}
    
    \begin{subfigure}{0.95\linewidth}
        \centering
        \includegraphics[width=\linewidth]{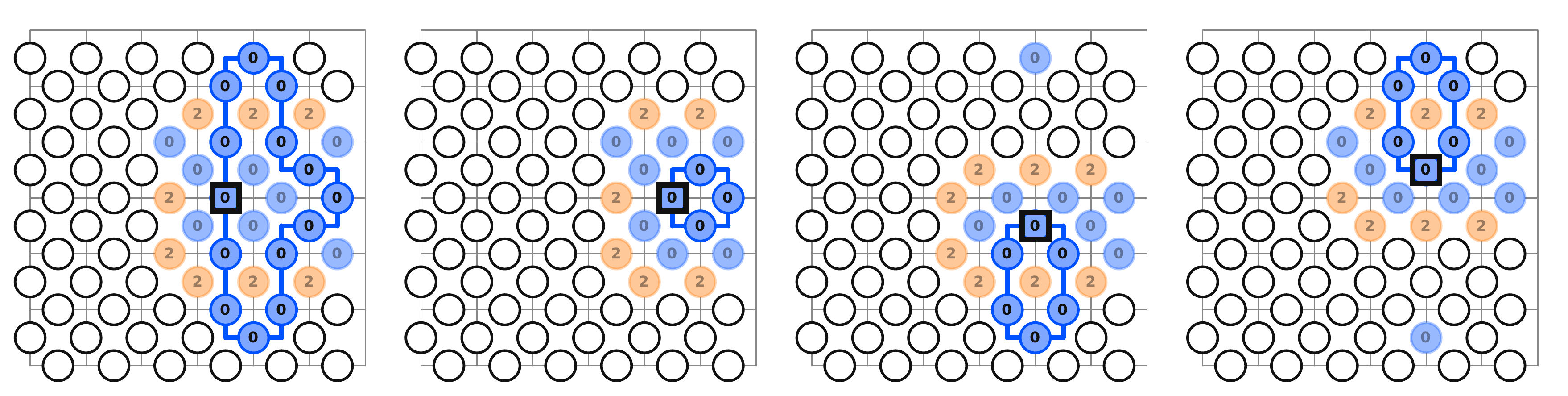}
        \caption{Star stabilizer certificate.}
        \label{fig:star_l6}
    \end{subfigure}
    
    \begin{subfigure}{0.95\linewidth}
        \centering
        \includegraphics[width=\linewidth]{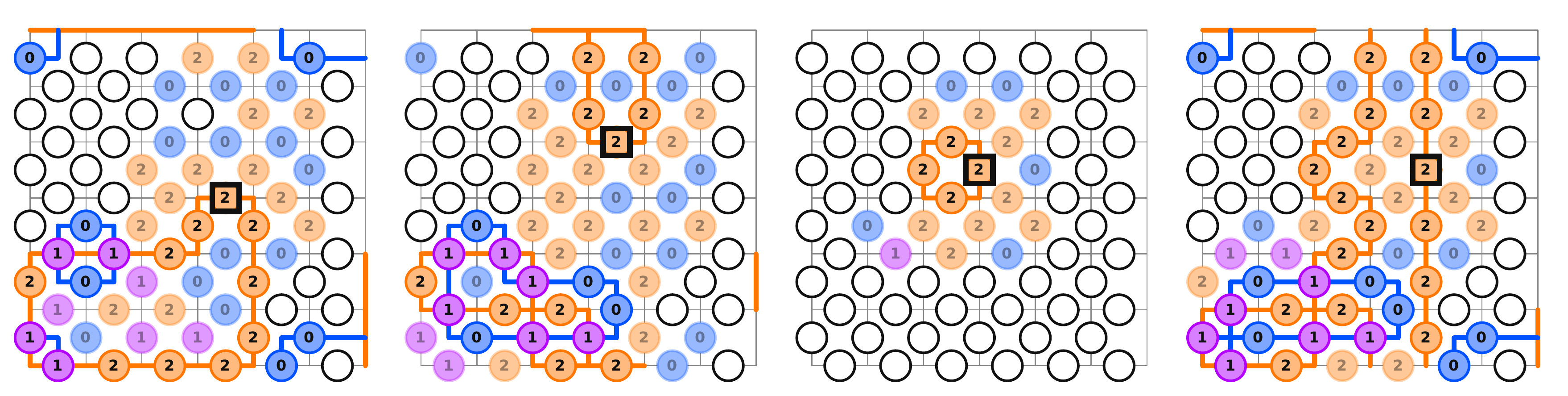}
        \caption{Plaquette stabilizer certificate.}
        \label{fig:plaquette_l6}
    \end{subfigure}
    \caption{Measurement patterns found for $r=2$ rounds of classical communication on a lattice of side length $L=6$.
    The panels show the certificates for (\subref*{fig:anticommutation_h_l6}) and (\subref*{fig:anticommutation_v_l6}) the anticommutation relations, (\subref*{fig:star_l6}) the star stabilizer, and (\subref*{fig:plaquette_l6}) the plaquette stabilizer.
    }
    \label{fig:patterns_L6}
\end{figure}

\newpage
\paragraph{$r=3$.}
The smallest lattice found for $r=3$ has side length $L=8$.
The corresponding measurement patterns are shown in Fig.~\ref{fig:patterns_L8}.
\nopagebreak
\begin{figure}[H]
    \centering
    \begin{subfigure}{0.95\linewidth}
        \centering
        \includegraphics[width=\linewidth]{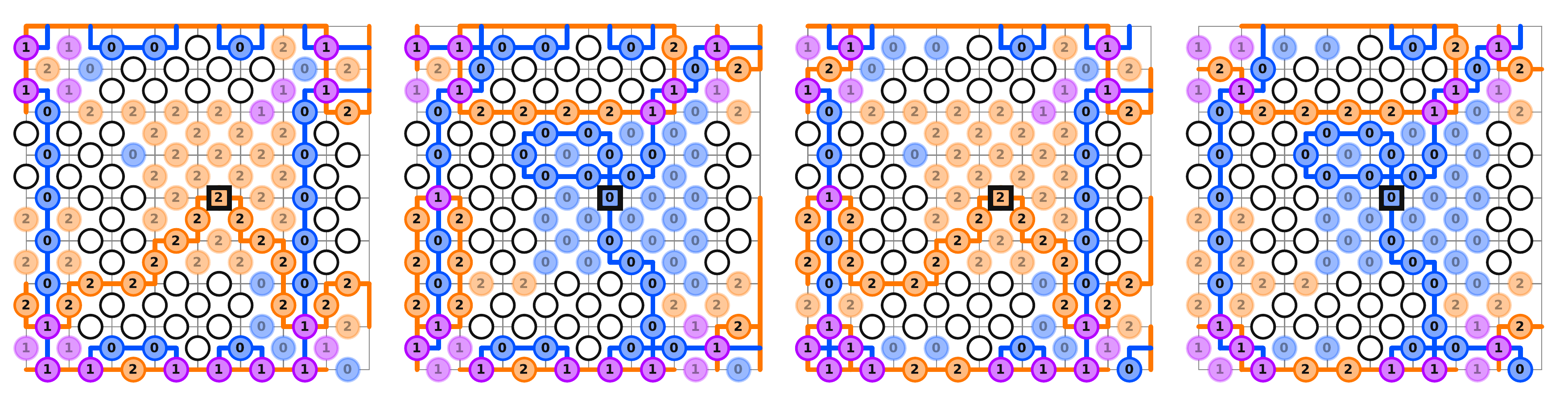}
        \caption{Anticommutation certificate (horizontal edge).}
        \label{fig:anticommutation_h_l8}
    \end{subfigure}
    
    \begin{subfigure}{0.95\linewidth}
        \centering
        \includegraphics[width=\linewidth]{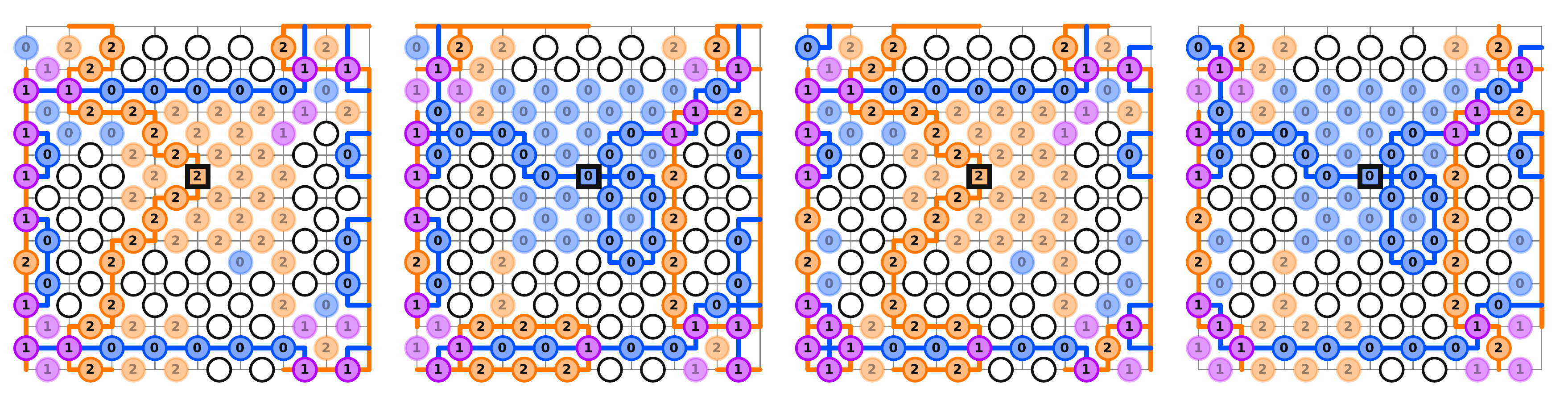}
        \caption{Anticommutation certificate (vertical edge).}
        \label{fig:anticommutation_v_l8}
    \end{subfigure}
    
    \begin{subfigure}{0.95\linewidth}
        \centering
        \includegraphics[width=\linewidth]{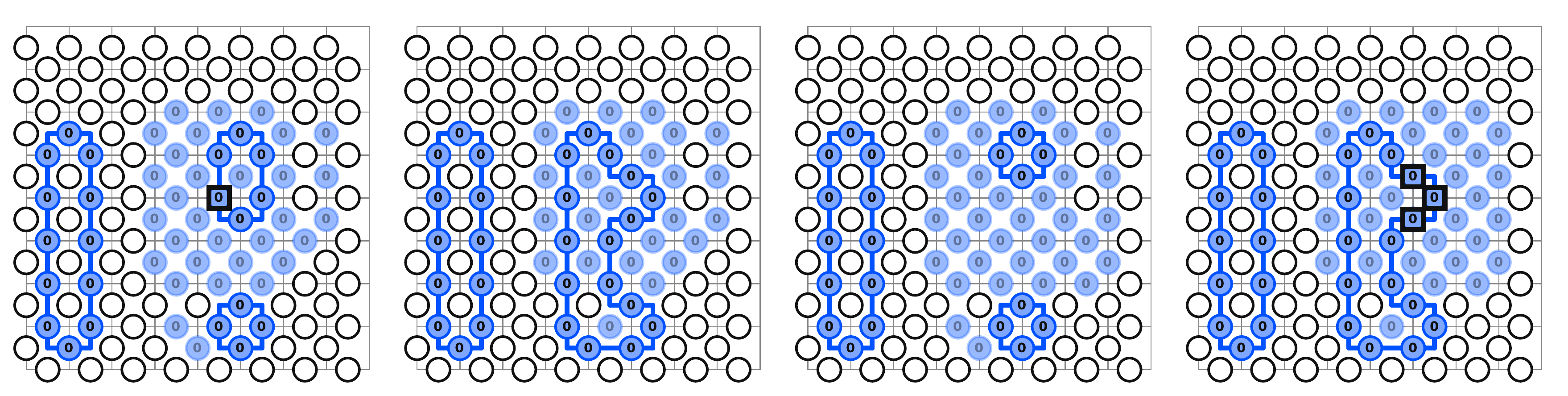}
        \caption{Star stabilizer certificate.}
        \label{fig:star_l8}
    \end{subfigure}
    
    \begin{subfigure}{0.95\linewidth}
        \centering
        \includegraphics[width=\linewidth]{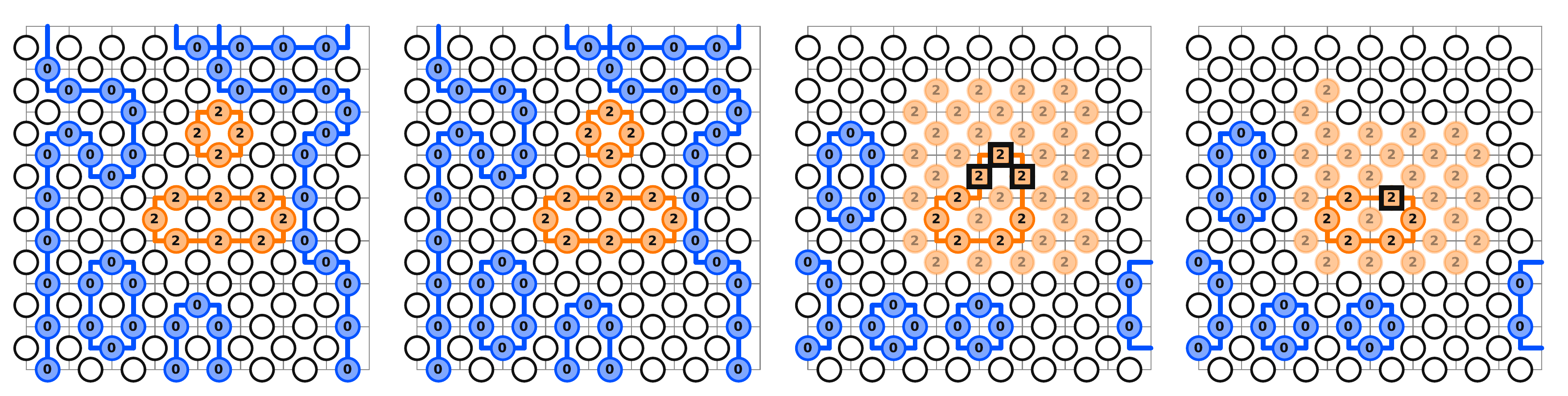}
        \caption{Plaquette stabilizer certificate.}
        \label{fig:plaquette_l8}
    \end{subfigure}
    \caption{Measurement patterns found for $r=3$ rounds of classical communication on a lattice of side length $L=8$.
    The panels show the certificates for (\subref*{fig:anticommutation_h_l8}) and (\subref*{fig:anticommutation_v_l8}) the anticommutation relations, (\subref*{fig:star_l8}) the star stabilizer, and (\subref*{fig:plaquette_l8}) the plaquette stabilizer.
    }
    \label{fig:patterns_L8}
\end{figure}

\newpage
\paragraph{$r=4$.}
The smallest lattice found for $r=4$ has side length $L=10$.
The corresponding measurement patterns are shown in Fig.~\ref{fig:patterns_L10}.
\nopagebreak
\begin{figure}[H]
    \centering
    \begin{subfigure}{0.95\linewidth}
        \centering
        \includegraphics[width=\linewidth]{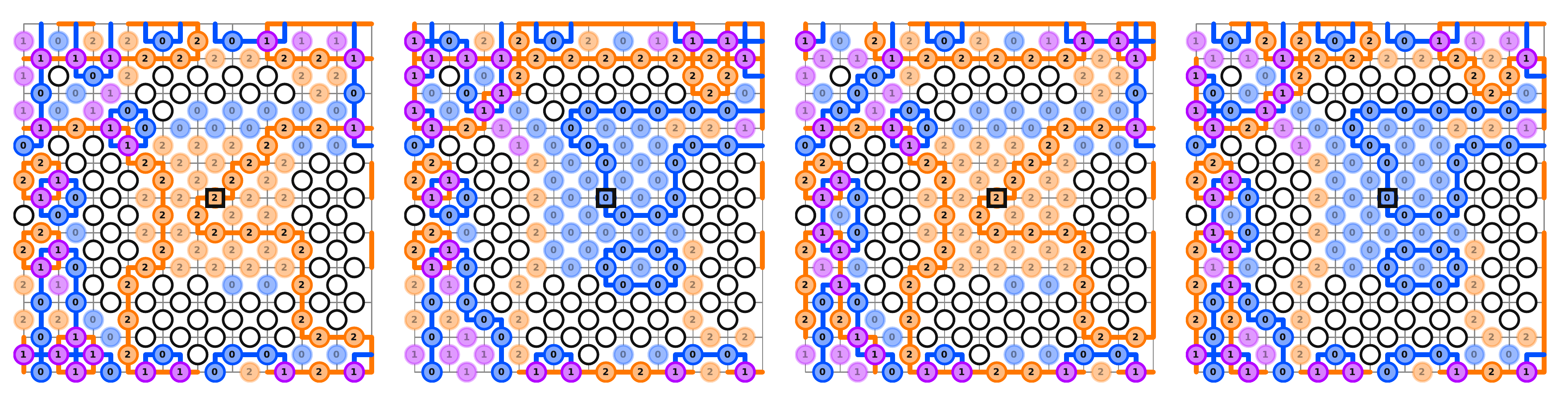}
        \caption{Anticommutation certificate (horizontal edge).}
        \label{fig:anticommutation_h_l10}
    \end{subfigure}
    
    \begin{subfigure}{0.95\linewidth}
        \centering
        \includegraphics[width=\linewidth]{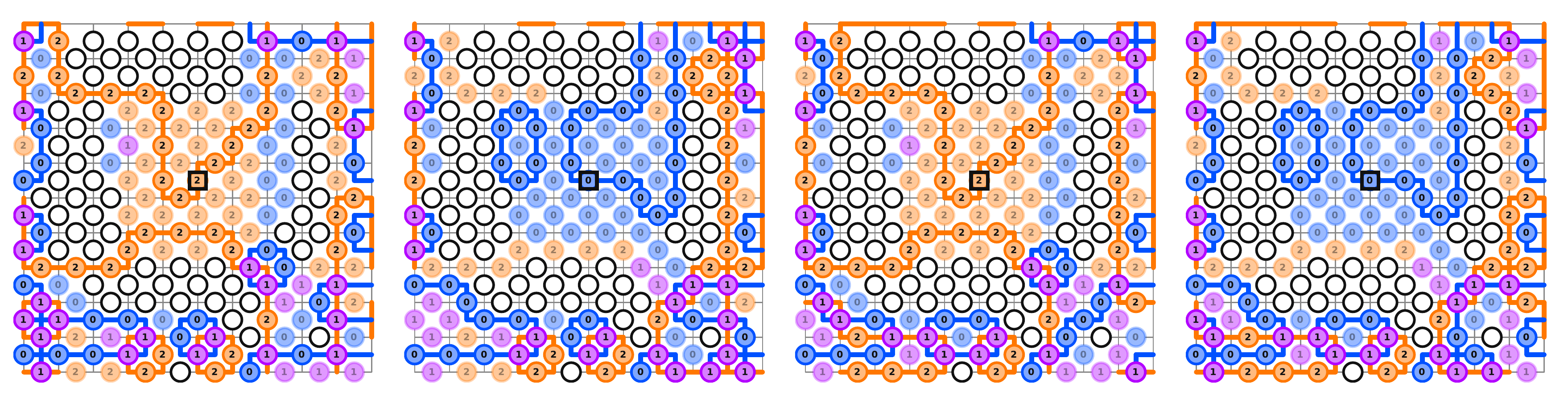}
        \caption{Anticommutation certificate (vertical edge).}
        \label{fig:anticommutation_v_l10}
    \end{subfigure}
    
    \begin{subfigure}{0.95\linewidth}
        \centering
        \includegraphics[width=\linewidth]{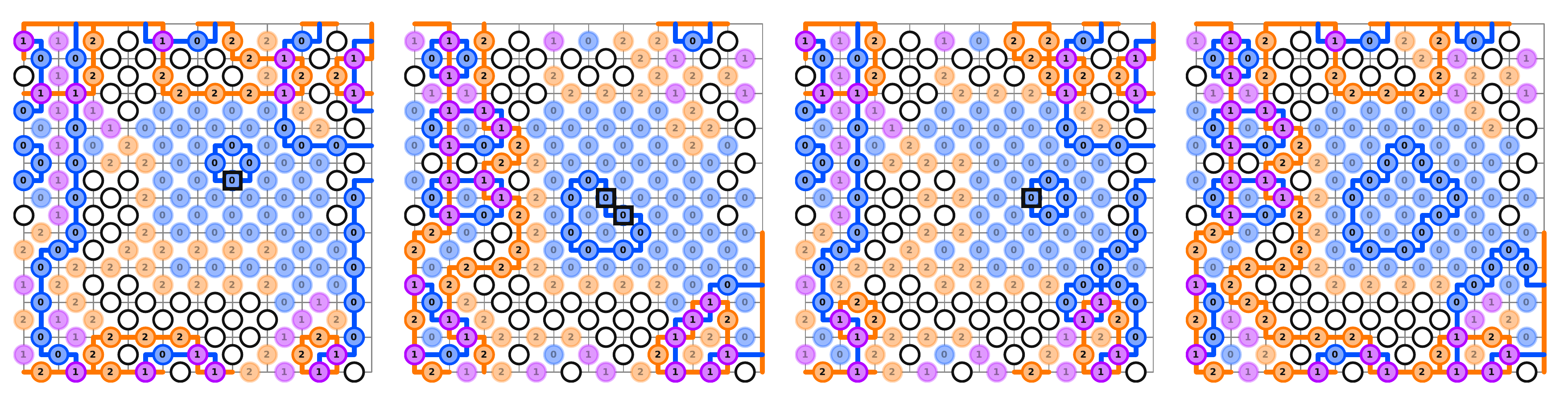}
        \caption{Star stabilizer certificate.}
        \label{fig:star_l10}
    \end{subfigure}
    
    \begin{subfigure}{0.95\linewidth}
        \centering
        \includegraphics[width=\linewidth]{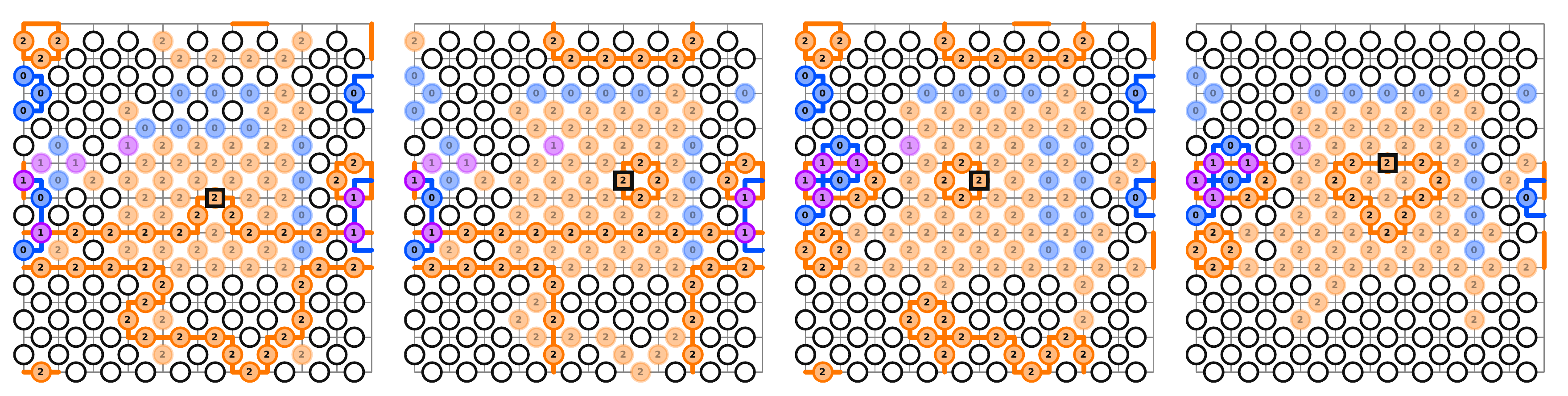}
        \caption{Plaquette stabilizer certificate.}
        \label{fig:plaquette_l10}
    \end{subfigure}
    \caption{Measurement patterns found for $r=4$ rounds of classical communication on a lattice of side length $L=10$.
    The panels show the certificates for (\subref*{fig:anticommutation_h_l10}) and (\subref*{fig:anticommutation_v_l10}) the anticommutation relations, (\subref*{fig:star_l10}) the star stabilizer, and (\subref*{fig:plaquette_l10}) the plaquette stabilizer.
    }
    \label{fig:patterns_L10}
\end{figure}

\newpage
\paragraph{$r=5$.}
The smallest lattice found for $r=5$ has side length $L=12$.
The corresponding measurement patterns are shown in Fig.~\ref{fig:patterns_L12}.
\nopagebreak
\begin{figure}[H]
    \centering
    \begin{subfigure}{0.95\linewidth}
        \centering
        \includegraphics[width=\linewidth]{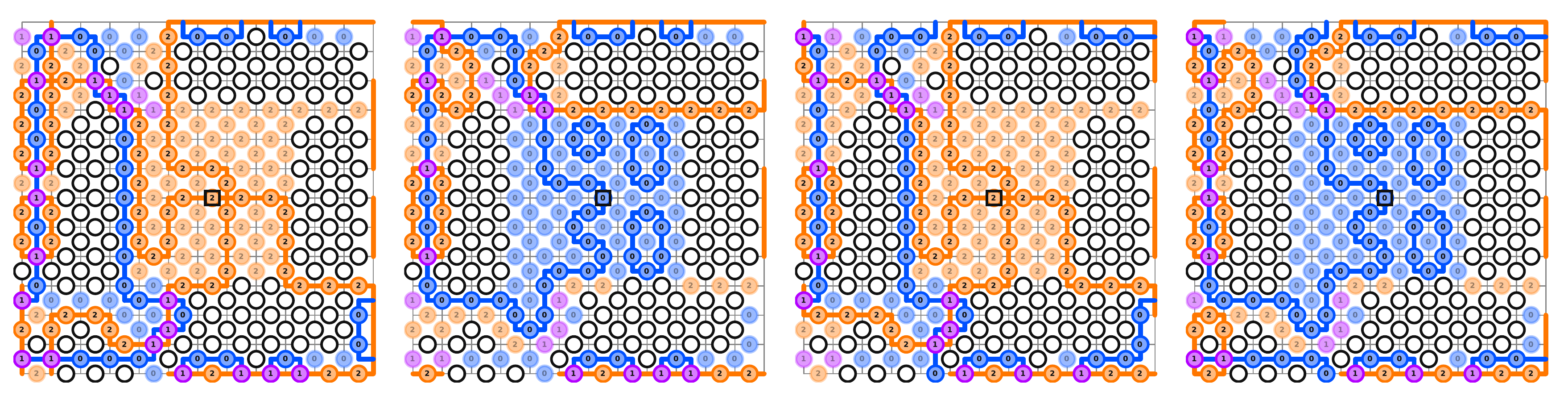}
        \caption{Anticommutation certificate (horizontal edge).}
        \label{fig:anticommutation_h_l12}
    \end{subfigure}
    
    \begin{subfigure}{0.95\linewidth}
        \centering
        \includegraphics[width=\linewidth]{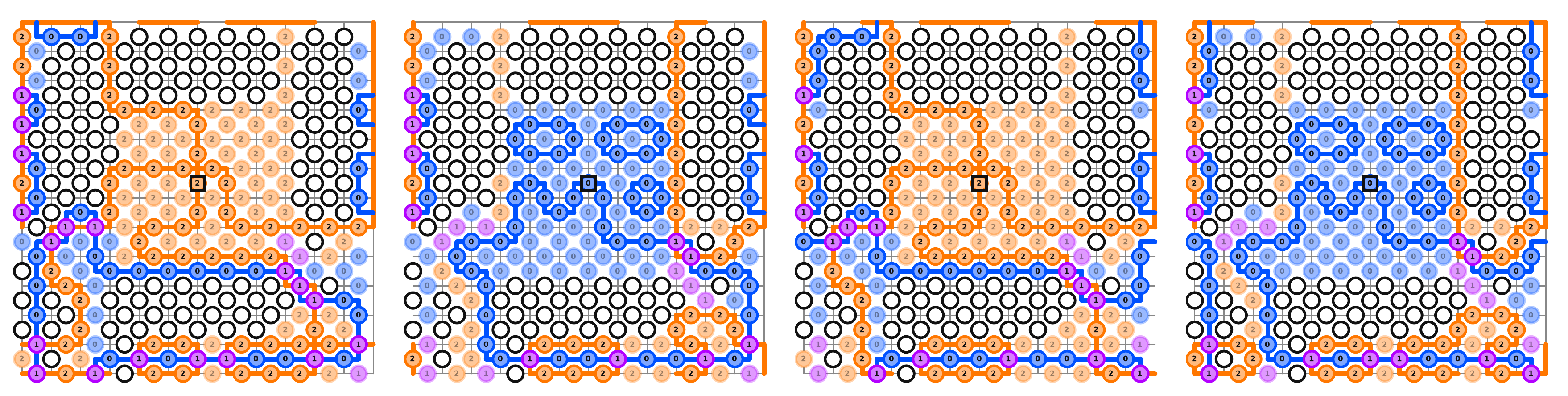}
        \caption{Anticommutation certificate (vertical edge).}
        \label{fig:anticommutation_v_l12}
    \end{subfigure}
    
    \begin{subfigure}{0.95\linewidth}
        \centering
        \includegraphics[width=\linewidth]{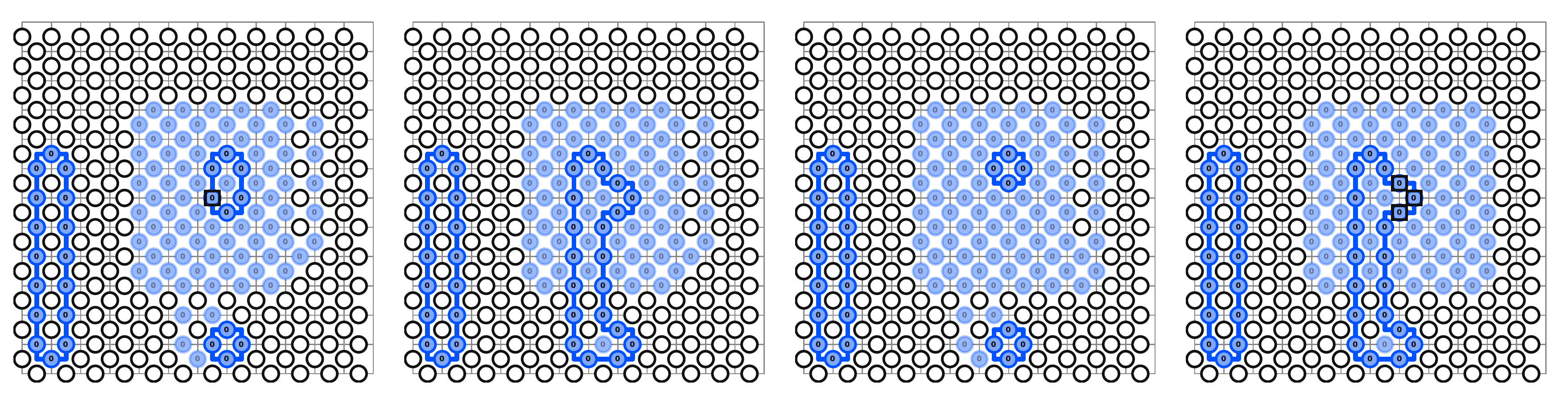}
        \caption{Star stabilizer certificate.}
        \label{fig:star_l12}
    \end{subfigure}
    
    \begin{subfigure}{0.95\linewidth}
        \centering
        \includegraphics[width=\linewidth]{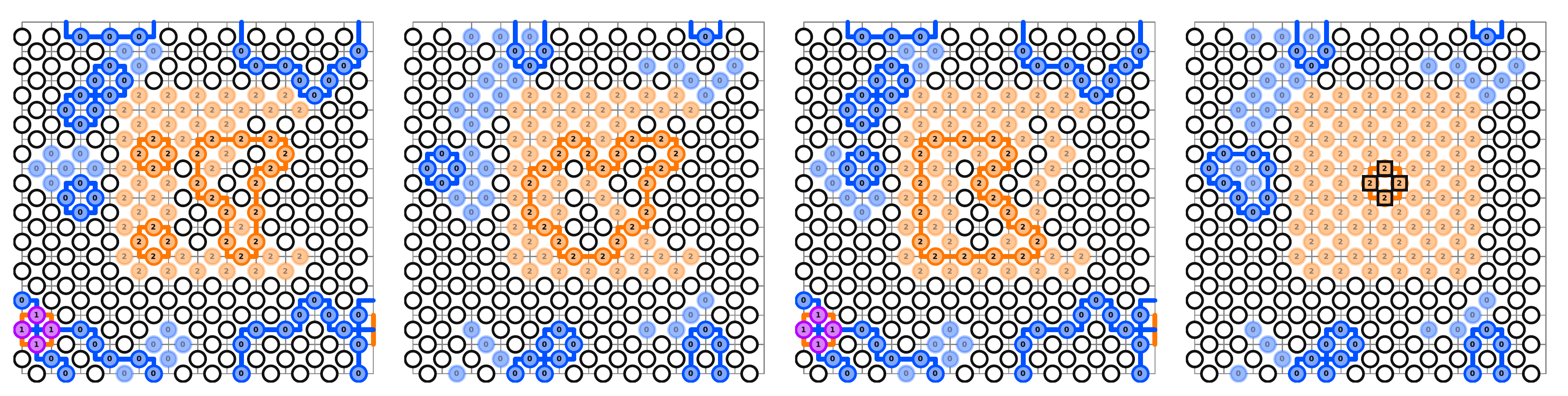}
        \caption{Plaquette stabilizer certificate.}
        \label{fig:plaquette_l12}
    \end{subfigure}
    \caption{Measurement patterns found for $r=5$ rounds of classical communication on a lattice of side length $L=12$.
    The panels show the certificates for (\subref*{fig:anticommutation_h_l12}) and (\subref*{fig:anticommutation_v_l12}) the anticommutation relations, (\subref*{fig:star_l12}) the star stabilizer, and (\subref*{fig:plaquette_l12}) the plaquette stabilizer.
    }
    \label{fig:patterns_L12}
\end{figure}

\end{document}